\documentclass[12pt]{article}
%\pdfoutput=1 
\usepackage{amsmath, dsfont, amssymb, amsthm, amscd, caption,  color}
\usepackage{graphicx,subfigure}
\usepackage[margin=1in]{geometry}

\usepackage{times,soul}
\usepackage{bm}
\usepackage{natbib}

\usepackage[plain,noend]{algorithm2e}
\usepackage{setspace}
\usepackage{tikz}
\usepackage{url}

\newcommand{\utilde}{ }

\newcommand{\bpm}{\begin{pmatrix}}
\newcommand{\epm}{\end{pmatrix}}
\def\T{{ \mathrm{\scriptscriptstyle T} }}

\newtheorem{theorem}{Theorem}
\newtheorem{lemma}{Lemma}
\newtheorem{corollary}{Corollary}

\newtheorem{proposition}{Proposition}
\theoremstyle{definition}
\newtheorem{definition}{Definition}
\newtheorem{assumption}{Assumption}

\newtheorem{example}{Example}

\theoremstyle{remark}
\newtheorem{remark}{Remark}

\begin{document}

\title{Selection and Estimation for Mixed Graphical Models}
\author{Shizhe Chen, Daniela Witten, and Ali Shojaie \\
 Department of Biostatistics, University of Washington, Box 357232,\\
  Seattle, WA 98195-7232\\ 
 }

\date{August 1st, 2014}
\maketitle

\begin{abstract}
We consider the problem of estimating the parameters in a pairwise graphical model in which the distribution of each node, conditioned on the others, may have a different parametric form. In particular, we assume that each node's conditional distribution is in the exponential family. We identify restrictions on the parameter space required for the existence of a well-defined joint density, and establish the consistency of the neighbourhood selection approach for graph reconstruction in high dimensions when the true underlying graph is sparse. Motivated by our theoretical results, we investigate the selection of edges between nodes whose conditional distributions take different parametric forms, and show that efficiency can be gained if edge estimates obtained from the regressions of particular nodes are used to reconstruct the graph. These results are illustrated with examples of Gaussian, Bernoulli, Poisson and exponential distributions. Our theoretical findings are corroborated by evidence from simulation studies. 
\end{abstract}

\textbf{Keywords:} compatibility; conditional likelihood;  exponential family; high-dimensionality;  model selection consistency; neighbourhood selection; pairwise Markov random field.

\newpage

\section{Introduction}\label{intro}

In this paper, we consider the task of learning the structure of an undirected graphical model encoding pairwise conditional dependence relationships among random variables. Specifically, suppose that we have $p$ random variables represented as nodes of the graph $G=(V,E)$, with the vertex set $V= \{1, \ldots, p \}$ and the edge set $E \subseteq  V \times V $. An edge in the graph indicates a pair of random variables that are conditionally dependent given all other variables. The problem of reconstructing the  graph from a set of $n$ observations has attracted a lot of interest in recent years, especially when $p>n$ and $p(p-1)/2$ edges must be estimated from $n$ observations.

Many authors have studied the estimation of high-dimensional undirected graphical models in the setting where {the distribution of each node, conditioned on all other nodes, has the same parametric form}.
 In particular, Gaussian graphical models have been studied extensively (see e.g. \citealt{meinshausen2006}, \citealt{yuan2007}, \citealt{friedman2008}, \citealt{rothman2008}, \citealt{wainwright2008}, \citealt{peng2009}, \citealt{ravikumar2011}), 
and have been generalized to account for non-normality and outliers (see e.g. \citealt{miyamura2006},  \citealt{finegold2011}, \citealt{vogel2011}, \citealt{sun2012}). 
  Others have considered the setting in which all node-conditional distributions are   Bernoulli  (\citealt{lee2006},  \citealt{hofling2009}, and \citealt{ravikumar2010}), multinomial  (\citealt{jalali2011}), Poisson \citep{allen2012}, or any univariate distribution in the exponential family \citep{yang2012,yang2013}.

In this paper, we seek to estimate a graphical model in which the variables are of different types. 
Here, the type of a node refers to the parametric form of its distribution, conditioned on all other nodes.
 For instance, the variables might include DNA nucleotides, taking binary values, and gene expression measured using RNA-sequencing, taking non-negative integer values. 
We could model the first set of nodes as Bernoulli, which means that each of their distributions, conditional on the other nodes, is Bernoulli; similarly, we could model the second set as  Poisson.
 We assume that the type of each node is known a priori, and refer to this setup as a mixed graphical model.

 In the low-dimensional setting, \citet{lauritzen1996} studied a special case of the mixed graphical model, known as the conditional Gaussian model, in which each node is either Gaussian or Bernoulli. More recent work has focused on the high-dimensional setting.  \citet{lee2012} proposed two algorithms for reconstructing conditional Gaussian models using a group lasso penalty. \citet{cheng2013} modified this approach by using a weighted $\ell_1$-penalty. 
 
  A related line of research  considers semi-parametric or non-parametric approaches for estimating conditional dependence relationships  (\citealt{liu2009},  \citealt{xue2012}, \citealt{fellinghauer2013}, \citealt{voorman2014}), among which \citet{fellinghauer2013} is specifically proposed for mixed graphical models. However, despite their flexibility, these non-parametric methods are often less efficient than their parametric counterparts, if the type of each node is known.

In this paper, we propose an estimator and develop theory for the parametric mixed graphical model, under a much more general setting than existing approaches (e.g. \citealt{lee2012}). We allow the conditional distribution of each node to belong to the exponential family. Unlike \citet{yang2012}, nodes may be of different types. For instance, within a single graph, some nodes may be Bernoulli, some may be Poisson, and some may be exponential. 

In parallel efforts, \citet{yang2014} recently presented general results on strong compatibility for mixed graphical models for which the node-conditional distributions belong to the exponential family, and for which the graph contains only two types of nodes. We instead consider the setting where the graph can contain more than two types of nodes, and provide specific requirements for strong compatibility for some common distributions.

\section{A Model for Mixed Data} \label{model}
\subsection{Conditionally-Specified Models for Mixed Data}\label{condmodel}
We consider the pairwise graphical model \citep{wainwright2006}, which takes the form
\begin{equation}
p(x) \propto \exp \left\{ \sum\limits_{s=1}^p f_s(x_s) + \sum\limits_{s = 2 }^p \sum\limits_{t < s }  f_{ts} (x_s,x_t) \right\},
\label{PGM}
\end{equation}
where ${x}={(x_1, ..., x_p)}^\T$ and $f_{ts}= 0$ for $\{t,s\} \notin E$. Here, $f_s(x_s)$ is the node potential function, and $f_{st} (x_s,x_t)$ the edge potential function. We further simplify the pairwise interactions by assuming that $f_{st} (x_s,x_t) = \theta_{st} x_s x_t = \theta_{ts} x_s x_t$, so that we can write the parameters associated with edges in a symmetric square matrix $\Theta = (\theta_{st})_{p \times p}$ where the diagonal elements equal zero. The joint density can then be written as
\begin{equation}
p({x}) = \exp \left\{ \sum\limits_{s=1}^p  f_s(x_s) + \frac{1}{2} \sum\limits_{s=1}^{p}\sum\limits_{t \neq s} \theta_{ts} x_s x_t - A(\Theta, \alpha) \right\},
\label{joint}
\end{equation}
where $A(\Theta,\alpha)$ is the log-partition function, a function of  $\Theta$ and $\alpha$. Here $\alpha$ is a $ K \times p$ matrix of parameters involved in the node potential functions: that is, $f_s(x_s)$ involves $\alpha_s$, the $s$th column of $\alpha$.  $K$ is some known integer.  For $\{s,t\} \notin E$, the edge potentials satisfy $\theta_{st}=\theta_{ts}=0$. We define the neighbours of the $s$th node as $N(x_s)=\{t: \theta_{st}=\theta_{ts}\neq 0 \}$.

In principle, given a parametric form for the joint density \eqref{joint},  we can estimate the conditional dependence relationships among the $p$ variables, and hence the edges in the graph. But this approach requires the calculation of the log-partition function $A(\Theta, \alpha)$, which is often intractable. To overcome this, we instead use the framework of conditionally-specified models \citep{besag1974}: we specify the distribution of each node conditional on the others, and then combine the $p$ conditional distributions to form a single graphical model. This approach has been widely used in estimating high-dimensional graphical models where all nodes are of the same type \citep{meinshausen2006,ravikumar2010,allen2012,yang2012}. However, as we will discuss in Section~\ref{compatibility}, a conditionally-specified model may not correspond to a valid joint distribution.

Define ${x}_{-s}={(x_1,...,x_{s-1}, x_{s+1},...,x_p)}^\T$. We consider conditional densities of the form
\begin{equation}
p(x_s \mid {x}_{-s}) =  \exp \left\{ f_s(x_s)  + \sum\limits_{ t\neq s} \theta_{ts} x_t x_s -  D_s(\eta_s) \right\},
\label{cond}
\end{equation}
where $\eta_s=\eta_s(\Theta_s, x_{-s}, \alpha_s)$ is a function of $\alpha_s$, $x_{-s}$, and $\Theta_{s}$,  and $\Theta_s$ is the $s$th column of $\Theta$ without the diagonal element.  
Suppose $f_s(x_s)= \alpha_{1s} x_s + \alpha_{2s} x_s^2/2+ \sum_{k = 3}^{K} \alpha_{ks} B_{ks}(x_s)$, where $\alpha_{ks}$ is a parameter, which could be 0, and $B_{ks}(x_s)$ is a known function for $k=3,\ldots,K$.  Under this assumption, \eqref{cond} belongs to the exponential family. 

The assumed form of $f_s(x_s)$ is quite general. We now consider some special cases of \eqref{cond} corresponding to  commonly-used distributions in the exponential family, for which $f_s(x_s)$ takes a very simple form. In the following examples, we assume that  $\eta_s(\Theta_s, x_{-s}, \alpha_s)=\alpha_{1s} + \sum_{t: \; t \neq s} \theta_{ts} x_t$.

\begin{example}
 The conditional density is  Gaussian and $\alpha_{2s}=-1$:
\begin{equation}
p(x_s \mid {x}_{-s})=\exp \left\{ -\frac{1}{2} x_s^2 + \eta_s x_s - \frac{1}{2} \eta_s^2 -\frac{1}{2}\log(2\pi)   \right\}, \quad x_s \in \mathcal{R},
\label{cond:Gaussian}
\end{equation}
where $f_s(x_s) = \alpha_{1s} x_{s} - x_s^2/2 $ and $D_s(\eta_s)= \eta_s^2/2+\log(2\pi)/2$.\end{example}

\begin{example}
 The conditional density is Bernoulli. Instead of coding $x_s$ as $\{0,1\}$, we code $x_s$ as $\{-1,1\}$. This yields the conditional density
\begin{equation}
p(x_s\mid {x}_{-s})=\exp \left\{ \eta_s x_s   - D_s(\eta_s) \right\}, \quad x_s \in \{ -1, 1\},
\label{cond:Bernoulli}
\end{equation}
where $f_s(x_s)=\alpha_{1s} x_s$ and $D_s(\eta_s)=\log\{ \exp(\eta_s)+\exp(-\eta_s )\}$.
\end{example}

\begin{example}
 The conditional density is  Poisson:
\begin{equation}
p(x_s \mid  {x}_{-s})=\exp \left\{ \eta_s x_s  -\log(x_s !)- D_s(\eta_s)\right\}, \quad x_s \in \{0, 1, \ldots \} ,  
\label{cond:Poisson}
\end{equation}
where $f_s(x_s)=\alpha_{1s} x_s-\log(x_s!)$ and $D_s(\eta_s)=\exp(\eta_s)$.
\end{example}

\begin{example}
 The conditional density is  exponential:
\begin{equation}
p(x_s \mid  {x}_{-s})=\exp \left\{ \eta_s x_s  - D_s(\eta_s)\right\}, \quad x_s \in \mathcal{R}^{+},
\label{cond:exp}
\end{equation}
where $f_s(x_s)=\alpha_{1s} x_s$ and $D_s(\eta_s)=-\log(-\eta_s)$.
\end{example}
These four examples  have been studied in the context of  conditionally-specified graphical models in which all nodes are of the same type (\citealt{besag1974}, \citealt{meinshausen2006}, \citealt{ravikumar2010}, \citealt{allen2012},  \citealt{yang2012}). 

 In what follows, we will consider the conditionally-specified mixed graphical model, with conditional distributions given by (\ref{cond}), in which each node can be of a different type. This class of mixed graphical models is not closed under marginalization: for instance, given a graph composed of Gaussian and Bernoulli nodes, integrating out the Bernoulli nodes leads to a conditional density that is a mixture of Gaussians, which does not belong to the exponential family.

\subsection{Compatibility of Conditionally-Specified Models}\label{compatibility}
Under what circumstances does the conditionally-specified model with node-conditional distributions given in \eqref{cond} correspond to a well-defined joint distribution?   
 We first adapt and restate a definition from \citet{wang2008}, which applies to any conditional density.
\begin{definition}
A non-negative function $g$ is capable of generating a conditional density function $p(y \mid  {x})$ if
\begin{equation*}
p(y \mid  {x})=\frac{g(y, {x})}{\int g(y, {x}) dy}.
\end{equation*}
Two conditional densities are said to be compatible if there exists a function $g$ that  is capable of generating both conditional densities. When $g$ is a density, the conditional densities are called strongly compatible.
\label{compat:defn}
\end{definition}

The following proposition relates Definition~\ref{compat:defn} to the  conditional density  in \eqref{cond}. Its proof, and those of other statements in this paper, are available in the Supplementary Material.
\begin{proposition}
Let $ {x}={(x_1,...,x_p)}^\T$ be a random vector. Suppose that for each $x_s$, the conditional density takes the form of \eqref{cond}.
 If $\theta_{st}=\theta_{ts}$, then  the conditional densities are compatible. Furthermore, any function $g$ that is capable of generating the conditional densities is of the form
\begin{equation}
g( {x})\propto \exp \left\{ \sum\limits_{s=1}^p f_s(x_s) + \frac{1}{2}\sum\limits_{s=1}^{p}\sum\limits_{t\neq s} \theta_{ts} x_s x_t \right\}.
\label{g}
\end{equation}
 \label{compat:prop}
\end{proposition}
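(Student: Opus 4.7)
The plan is to split the proof of Proposition~\ref{compat:prop} into two pieces: first, exhibit the candidate \eqref{g} and verify it generates each conditional (establishing compatibility); second, show that any $g$ capable of generating the given conditionals must coincide with \eqref{g} up to a multiplicative constant (establishing the stated form).

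For compatibility, I would take $g$ as in \eqref{g} and compute $p(x_s \mid x_{-s}) = g(x)/\int g(x)\, dx_s$. The only terms in $\log g(x)$ that depend on $x_s$ are $f_s(x_s)$ together with $\tfrac{1}{2}\sum_{t\neq s}(\theta_{ts}+\theta_{st})x_t x_s$, which simplifies to $\sum_{t\neq s}\theta_{ts} x_t x_s$ by the symmetry hypothesis $\theta_{ts}=\theta_{st}$. Every remaining term in $\log g$ is a function of $x_{-s}$ alone and hence cancels in the normalization, so the induced conditional matches \eqref{cond} with the log-normalizer identified as $D_s(\eta_s)$ by definition.

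For the characterization, suppose $g$ is any non-negative function generating all $p$ conditionals. Then $g(x) = p(x_s \mid x_{-s})\, r_s(x_{-s})$ with $r_s(x_{-s}) = \int g(x)\, dx_s$, so for each $s$,
\[
\log g(x) = f_s(x_s) + \sum_{t\neq s}\theta_{ts} x_t x_s - D_s(\eta_s) + \log r_s(x_{-s}).
\]
Fix a reference point $x^0$ in the joint support and define the path $x^{(s)}=(x_1,\dots,x_s,x^0_{s+1},\dots,x^0_p)$, so that $x^{(0)}=x^0$ and $x^{(p)}=x$. Applied at index $s$, the display above yields
\[
\log g(x^{(s)}) - \log g(x^{(s-1)}) = [f_s(x_s)-f_s(x^0_s)] + (x_s-x^0_s)\Bigl(\sum_{t<s}\theta_{ts} x_t + \sum_{t>s}\theta_{ts} x^0_t\Bigr),
\]
because $D_s$ and $r_s$ depend only on $x_{-s}$, which coincides in $x^{(s-1)}$ and $x^{(s)}$. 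Summing telescopes the left-hand side to $\log g(x)-\log g(x^0)$; regrouping the resulting cross terms via $\theta_{ts}=\theta_{st}$ then gives $\log g(x) = \sum_s f_s(x_s) + \tfrac{1}{2}\sum_{s\neq t}\theta_{ts} x_s x_t + C$, which is \eqref{g}.

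The main obstacle I anticipate is bookkeeping in the telescoping sum: one must check that the mixed $x^0 x$ contributions cancel and that the pure $xx$ contributions symmetrize into $\tfrac{1}{2}\sum_{s\neq t}$, both of which depend essentially on $\theta_{ts}=\theta_{st}$. A minor support issue arises for distributions such as the exponential, where $0$ is not admissible, but it is resolved by choosing any interior reference $x^0$. The argument adapts verbatim to discrete node distributions (Bernoulli, Poisson) by replacing the integral in Definition~\ref{compat:defn} with summation, since no differentiation is ever invoked.
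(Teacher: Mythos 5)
Your proof is correct, and while the sufficiency half coincides with the paper's (decompose $\log g$ into the $x_s$-dependent part $f_s(x_s)+\sum_{t\neq s}\theta_{ts}x_tx_s$ plus a remainder that cancels under normalization), your necessity half takes a genuinely different route. The paper follows \citet{besag1974}: it writes $P(x)=\log\{h(x)/h(0)\}$ as an a priori expansion into interaction functions $G_s$, $G_{ts}$, $G_{tsj},\dots$ of all orders, identifies the first- and second-order terms by zeroing out coordinates in the identity $P(x)-P(x_s^0)=\log\{p(x_s\mid x_{-s})/p(0\mid x_{-s})\}$, and then argues that all third- and higher-order interactions vanish. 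You instead telescope $\log g(x)-\log g(x^0)$ along the coordinate path $x^{(0)}=x^0,\dots,x^{(p)}=x$, using only the factorization $\log g(x)=f_s(x_s)+\sum_{t\neq s}\theta_{ts}x_tx_s-D_s(\eta_s)+\log r_s(x_{-s})$ (valid because $D_s$ and $r_s=\int g\,dx_s$ depend only on $x_{-s}$) and the cancellation of the mixed $x^0$--$x$ cross terms under $\theta_{ts}=\theta_{st}$, which I have checked goes through. Your argument is more elementary and self-contained: it never invokes the existence of the general interaction expansion of $P$ (itself a lemma in Besag's development) and so does not need a separate step killing the higher-order terms; it also handles the reference-point and discrete-support issues explicitly, just as the paper's remark that $\utilde{0}$ may be replaced by any interior point. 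What the paper's route buys in exchange is the structural statement that all interactions of order three and above are forced to be zero, which connects the result directly to the Hammersley--Clifford/Besag framework; both arguments require, and both leave implicit to the same degree, positivity of $g$ on the (product) support so that the logarithms are defined at every intermediate point of the path.
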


Under the conditions of Proposition~\ref{compat:prop}, if we further assume that $g$ in \eqref{g} is integrable, then by Definition~\ref{compat:defn}, the conditional densities of the form (\ref{cond}) are strongly compatible. 
 Proposition~\ref{compat:prop} indicates that, provided that \eqref{joint} is a  valid joint distribution, we can arrive at it via the conditional densities in \eqref{cond}.  This justifies the conditionally-specified modeling approach taken in this paper.  Proposition~\ref{compat:prop} is closely related to Section 4$\cdot$3 in \citet{besag1974} and Proposition 1 in \citet{yang2012}, with small modifications. More general theory is developed in \citet{wang2008}.

We now return to the four examples \eqref{cond:Gaussian}--\eqref{cond:exp}. Lemma~\ref{lmm:compat} summarizes the  conditions under which a conditionally-specified model with {non-degenerate} conditional distributions of the form \eqref{cond:Gaussian}--\eqref{cond:exp} leads to a valid joint distribution. 
\begin{lemma} 
If $\theta_{st}=\theta_{ts}$, the subset of conditions with a dagger ($\dagger$) in Table~\ref{tab1} is necessary and sufficient for the conditional densities in \eqref{cond:Gaussian}--\eqref{cond:exp} to be compatible. Moreover, the complete set of conditions in Table~\ref{tab1} is necessary and sufficient for the conditional densities in \eqref{cond:Gaussian}--\eqref{cond:exp} to be strongly compatible.
\label{lmm:compat}
\end{lemma}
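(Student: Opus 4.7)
The plan is to reduce both halves of the lemma to a well-posedness analysis of the unique candidate generating function supplied by Proposition~\ref{compat:prop}. Since each conditional density in \eqref{cond:Gaussian}--\eqref{cond:exp} is of the form \eqref{cond} with $\theta_{st}=\theta_{ts}$, Proposition~\ref{compat:prop} tells us that the only function $g$ that could witness compatibility is the one in \eqref{g}. So the lemma collapses to two checks: (a) that each $p(x_s\mid x_{-s})$ is a genuine (non-degenerate) density, i.e.\ $D_s(\eta_s)<\infty$ for every admissible $x_{-s}$; and (b) for strong compatibility, that $g$ is integrable over the joint support.

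For (a), I would go node type by node type. For Gaussian ($\alpha_{2s}=-1$), Bernoulli, and Poisson nodes, $D_s(\eta_s)$ is finite for every $\eta_s\in\mathbb{R}$, so no new restriction beyond $\theta_{st}=\theta_{ts}$ arises. For an exponential node, $D_s(\eta_s)=-\log(-\eta_s)$ is finite only when $\eta_s=\alpha_{1s}+\sum_{t\ne s}\theta_{ts}x_t<0$ for every $x_{-s}$ in the support. Pushing this constraint through each choice of neighbor type (Gaussian forcing $\theta_{ts}=0$; Poisson or exponential forcing $\theta_{ts}\le 0$; Bernoulli placing a bound that couples with $\alpha_{1s}$) recovers exactly the dagger entries of Table~\ref{tab1}.

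For (b), I would integrate the variables of $g$ one coordinate at a time. With $x_{-s}$ fixed, the partial integral/sum over $x_s$ equals $\exp\{D_s(\eta_s)\}$ up to a constant factor from $f_s$, because that partial integral is the reciprocal of the normalizing constant of \eqref{cond}. Iterating, the integrability of $g$ is equivalent to the finiteness of a chain of one-dimensional integrals/sums that only depend on $\alpha$, $\Theta$, and the node types. The Bernoulli nodes contribute bounded factors and cause no difficulty; the genuinely restrictive cases are the unbounded-support pairs: two Gaussians (giving the positive-definiteness requirement on the relevant $2\times 2$ block of the precision matrix), two Poissons, two exponentials, a Poisson--exponential pair (giving $\theta_{st}\le 0$, with strictness where the linear term vanishes), and any Gaussian node linked to an unbounded non-Gaussian neighbor (giving $\theta_{st}=0$). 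These are exactly the full set of conditions in Table~\ref{tab1}. Necessity in each case is established by fixing the remaining coordinates and exhibiting an explicit divergent bivariate integral or sum when the relevant condition is violated (e.g.\ for two Poissons with $\theta_{st}>0$, the sum $\sum_{x_s,x_t\ge 0}\exp\{\alpha_{1s}x_s+\alpha_{1t}x_t+\theta_{st}x_sx_t\}$ blows up).

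The main obstacle is bookkeeping rather than technique: each of the ten ordered pairs of node types must be examined, and the integration order must be chosen carefully so that the conditional normalizers that appear in the intermediate steps themselves define finite quantities. The asymmetric role of $\pm 1$ versus non-negative support (Bernoulli vs.\ Poisson/exponential) requires particular attention when an exponential node's linear coefficient $\alpha_{1s}$ is negative but small in magnitude compared to edge contributions from Bernoulli neighbors, since that is where the sufficiency of the dagger conditions is tightest.
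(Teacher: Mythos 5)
Your overall strategy is the one the paper uses: invoke Proposition~\ref{compat:prop} to pin down $g$ in \eqref{g} as the only possible witness, reduce compatibility to well-definedness of each conditional normalizer $D_s(\eta_s)$ (which yields the dagger entries, driven entirely by the exponential nodes and the sign constraint $\alpha_{2s}<0$ for Gaussian nodes), and reduce strong compatibility to integrability of $g$, checked by marginalizing out coordinates and exhibiting divergent low-dimensional integrals for necessity. Your sufficiency argument via iterated one-dimensional normalizers is essentially the paper's domination argument in different clothing (after conditioning on the Bernoulli nodes, the non-positive interactions let you bound each intermediate normalizer by its interaction-free value), so that part is fine.

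There is, however, one concrete gap: your treatment of the Gaussian--Gaussian entry. The condition in Table~\ref{tab1} is that the \emph{entire} matrix $\Theta_{JJ}$ in \eqref{Thetajj} be negative definite, which is a joint condition on all Gaussian nodes simultaneously. Your plan --- ``fixing the remaining coordinates and exhibiting an explicit divergent bivariate integral'' --- can only ever certify negative definiteness of $2\times 2$ principal submatrices, and that is strictly weaker: take $m=3$ with diagonal $-1$ and all off-diagonal entries $0.9$; every $2\times 2$ principal block is negative definite, yet the full matrix has the positive eigenvalue $-1+2(0.9)=0.8$, so the Gaussian block of $g$ is not integrable even though every bivariate marginal restriction you would check passes. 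The same issue afflicts sufficiency: pairwise conditions do not imply integrability of the Gaussian block. The fix is what the paper does --- treat the Gaussian nodes as a single multivariate Gaussian (they are decoupled from the Poisson and exponential nodes by the $\theta_{ts}=0$ entries, and the Bernoulli nodes only shift the mean), for which integrability is equivalent to $\Theta_{JJ}\prec 0$ by standard properties of the multivariate normal. With that repair, and noting that your bivariate divergence arguments for the Poisson--Poisson, Poisson--exponential, and exponential--exponential pairs match the paper's computation of $\sum_{i}G(i,x_2)$, your proof goes through.
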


 {To simplify the presentation of the conditions for the Gaussian nodes, in Table~\ref{tab1} it is assumed} 
 that $J$ is the index set of the Gaussian nodes. Without loss of generality, we further assume that the nodes are ordered such that $J=\{1,\ldots,m\}$, and define 
 \begin{equation}
 \Theta_{JJ}= \bpm \alpha_{2 1} & \theta_{1 2} & \cdots & \theta_{1 m} \\
\theta_{2 1}   & \alpha_{2 2} &\cdots &\theta_{2 m} \\
\vdots & \vdots & \ddots & \vdots \\
\theta_{m 1} & \theta_{m 2} & \cdots & \alpha_{2 m}
  \epm.
 \label{Thetajj}
\end{equation}

\begin{table}
\def~{\hphantom{0}}
\caption{Restrictions on the parameter space required for compatibility or strong compatibility of the conditional densities in \eqref{cond:Gaussian}--\eqref{cond:exp}}
\begin{center}
\begin{tabular}{ ccccc }
\multicolumn{1}{c}{} &  \multicolumn{1}{c}{Gaussian}  & \multicolumn{1}{c}{Poisson} &
 \multicolumn{1}{c}{Exponential} &  \multicolumn{1}{c}{Bernoulli}  \\
Gaussian &  $\Theta_{JJ} \prec 0$  & $\theta_{ts}=0$ &$\theta_{ts}=0^\dagger$ & $\theta_{ts} \in \mathcal{R}^\dagger$ \\
\multicolumn{1}{c}{Poisson} &  & $\theta_{ts}\leq 0$& {$\theta_{ts}\leq 0^\dagger$} & $\theta_{ts} \in \mathcal{R}^\dagger$ \\
\multicolumn{1}{c}{Exponential} & \multicolumn{1}{r}{} & & $ \theta_{ts}\leq 0^\dagger$  & $ \sum_{s \in I}|\theta_{st}| < - \alpha_{1t}^\dagger$\\
\multicolumn{1}{c}{Bernoulli} &  \multicolumn{2}{r}{ } & &$\theta_{ts} \in \mathcal{R}^\dagger$\\
\end{tabular}
\label{tab1}
\end{center}
The column specifies the type of the $s$th node, and the row specifies the type of the  $t$th node. Conditions marked with a dagger ($\dagger$) are necessary and sufficient for the conditional densities in \eqref{cond:Gaussian}--\eqref{cond:exp} to be compatible, and the complete set of conditions is necessary and sufficient for the conditional densities to be strongly compatible. For compatibility to hold for a Gaussian node $x_s$, $\alpha_{2s}<0 $ is also required. Here $\Theta_{JJ}$ is as defined in \eqref{Thetajj}, and  $I$ denotes the set of Bernoulli nodes. 
\end{table}

Table 1 reveals the set of restrictions on the parameter space that must hold in order for the conditional densities in \eqref{cond:Gaussian}--\eqref{cond:exp} to be compatible or strongly compatible. The diagonal entries of this table were previously studied in \cite{besag1974}. In general, strong compatibility imposes more restrictions on the parameter space than compatibility. For instance, compatibility does not place any restrictions on edges between two Poisson nodes, but for strong compatibility to hold, the edge potentials must be negative. Compatibility and strong compatibility even restrict the relationships that can be modeled using the conditional densities \eqref{cond:Gaussian}--\eqref{cond:exp}: for instance, no edges are possible between Gaussian and exponential nodes, or between Gaussian and Poisson nodes.

 To summarize, given conditional densities of the form \eqref{cond:Gaussian}--\eqref{cond:exp}, existence of a joint density imposes substantial constraints on the parameter space, and thus limits the flexibility of the corresponding graph. However, we will see in Section~\ref{nojd} that it is possible to consistently estimate the structure of a graph even when the requirements for compatibility or strong compatibility are violated, i.e., even in the absence of a joint density.

While Table~\ref{tab1} only examines conditionally-specified models composed of the conditional densities in \eqref{cond:Gaussian}--\eqref{cond:exp}, the estimator proposed in Section~\ref{pf} and the theory developed in Sections~\ref{theory} and \ref{nojd} apply to other types of conditional densities of the form \eqref{cond}.

\section{Estimation Via Neighbourhood Selection}\label{pf}
\subsection{Estimation}\label{estimation}
We now present a neighbourhood selection approach for recovering the structure of a mixed graphical model, by maximizing penalized conditional likelihoods node-by-node.
 A similar approach has been studied in the setting where all nodes in the graph are of the same type \citep{meinshausen2006,ravikumar2010,allen2012,yang2012}.

Recall from Section~\ref{condmodel} that  $f_s(x_s)= \alpha_{1s} x_s + \alpha_{2s} x_s^2/2+ \sum_{k = 3}^{K} \alpha_{ks} B_{ks}(x_s)$. We now simplify the problem by assuming that $\alpha_{ks}$ is known, and possibly zero, for $k\geq 2$.  Let $X$ denote an $n \times p$ data matrix, with the $i$th row given by $x^{(i)}$. 
From now on, we use an asterisk to denote the  true parameter values. We estimate $\Theta^*_s$ and $\alpha^*_{1s}$, the parameters for the $s$th node,  as
\begin{equation}  	
\underset{{\Theta_s \in  \mathcal{R}^{p-1},\ \alpha_{1s} \in \mathcal{R}}}{\text{{arg min}}}  \quad -\ell_s (\Theta_s, \alpha_{1s};  {X} ) +\lambda_n \| \Theta_s \|_1,
\label{pl}
\end{equation}
where
$\ell_s (\Theta_s, \alpha_{1s} ;  {X} )=  \sum\limits_{i=1}^{n} \log p(x_s^{(i)}\mid  {x}^{(i)}_{-s})/n$; recall that the conditional density $p(x_s^{(i)}\mid  {x}^{(i)}_{-s})$ is defined in \eqref{cond}.  Finally, we define the estimated neighbourhood of $x_s$ to be $\hat{N}(x_s)= \{t: \hat{\theta}_{ts}\neq 0\}$, where $\hat{\Theta}_s$ solves \eqref{pl}, and $\hat\theta_{ts}$ is the element corresponding to an edge with the $t$th node.

In practice,  to avoid a situation in which variables of different types are on different scales, we may wish to modify \eqref{pl} in order to allow a different weight for the $\ell_1$-penalty on each coefficient.  We define a weight vector $w$ equal to the empirical standard errors of the corresponding variables: $w={(\hat{\sigma}_1, ... ,\hat{\sigma}_{s-1}, \hat{\sigma}_{s+1}, ..., \hat{\sigma}_{p})}^\T$. Then \eqref{pl} can be replaced with 
\begin{equation}  	
\underset{{\Theta_s \in  \mathcal{R}^{p-1},\ \alpha_{1s} \in \mathcal{R}}}{\text{arg min}}  \quad -\ell_s (\Theta_s, \alpha_{1s};  {X} ) +\lambda_n \| {\text{diag}(w)} \Theta_s \|_1.
\label{pl.weighted}
\end{equation}
The analysis in Sections~\ref{theory} and \ref{nojd} uses \eqref{pl} for simplicity, but could be  generalized to  \eqref{pl.weighted} with additional bookkeeping. Both  \eqref{pl} and \eqref{pl.weighted}  can be easily solved  (see e.g. \citealt{friedman2010}).

In the joint density \eqref{joint}, the parameter matrix $\Theta$ is symmetric, i.e., $\theta_{st}=\theta_{ts}$, but the neighbourhood selection method does not guarantee symmetric estimates: for instance, it could happen that $\hat{\theta}_{st}=0$ but $\hat{\theta}_{ts}\neq 0$. Our analysis in Section~\ref{selection} shows that we can exploit the asymmetry in $\hat{\theta}_{st}$ and $\hat{\theta}_{ts}$ when $x_s$ and $x_t$ are of different types, in order to obtain more efficient edge estimates.

\subsection{Tuning}\label{tuning}
In order to select the value of  the tuning parameter $\lambda_n$ in (\ref{pl}), we use the Bayesian information criterion (\citealt{zou2007}, \citealt{peng2009}, \citealt{voorman2014}), which takes the form
\begin{equation}
\textsc{bic}_s({\lambda_n}) = - 2n \ell_s (\hat{\Theta}_s, \hat{\alpha}_{1s}; X) + \log(n)  \| \hat{\Theta}_s \|_0,
\end{equation} 
{where $\| \hat{\Theta}_s \|_0$ is the number of non-zero elements in $\hat{\Theta}_{s}$ for a given value of $\lambda_n$. }
We allow a different value of $\lambda_n$ for each node type. For instance, to select $\lambda_n$ for the Poisson nodes, we choose the value of $\lambda_n$ such that $\textsc{bic}_s({\lambda_n})$, summed over the Poisson nodes, is minimized. We evaluate the performance of this approach for tuning parameter selection in Section~\ref{TP}.

\section{Neighbourhood Recovery and Selection With Strongly Compatible Conditional Distributions}\label{theory}
\subsection{Neighbourhood Recovery } \label{subsec:jd}
In this subsection we show that if the conditional distributions in \eqref{cond} are strongly compatible, as they will be under conditions discussed in Section~\ref{compatibility}, then under some additional assumptions, the true neighbourhood of each node is consistently selected using the neighbourhood selection approach proposed in Section~\ref{estimation}. Here we rely heavily on results from \citet{yang2012}, who consider a related problem in which all nodes are of the same type.

In the following discussion, we assume that $p>n$ for simplicity. For any $s$, let $\Delta_s$ denote the set of indices for elements of $(\Theta_s^\T, \alpha_{1s})^\T$ that correspond to non-neighbours of the $s$th node, and let $Q_s^*=-\nabla^2 \ell_s(\Theta_s^*, \alpha_{1s}^*; {X})$ be the negative Hessian of $\ell_s(\Theta_s,\alpha_{1s}; {X})$ with respect to $(\Theta_s^\T, \alpha_{1s})^\T$, evaluated at the true values of the parameters. Below we suppress the subscript $s$ for simplicity, and we remind the reader that all quantities are related to the conditional density of the   $s$th node. We express $Q^*$ in blocks: $$Q^*=\bpm Q^*_{\Delta^c \Delta^c} & Q^*_{\Delta^c \Delta} \\ Q^*_{\Delta \Delta^c} & Q^*_{\Delta \Delta} \epm. $$

\begin{assumption}
There exists a positive number $a$ such that $$\underset{l \in \Delta}{\max} \| Q^*_{l\Delta^c} (Q^*_{\Delta^c \Delta^c})^{-1}\|_1 \leq 1-a .$$
\label{irrep}
\end{assumption}
Assumption~\ref{irrep} limits the association between the neighbours and non-neighbours of the $s$th node: if the association is too high, then it is not possible to select the correct neighbourhood. This type of assumption is standard  for variable selection consistency of $\ell_1$-penalized estimators (see e.g. \citealt{meinshausen2006}, \citealt{zhao2006}, \citealt{wainwright2009}, \citealt{ravikumar2010}, \citealt{ravikumar2011},  \citealt{yang2012}, \citealt{lee2013}).

\begin{assumption}
There exists $ \Lambda_{1} > 0$ such that the smallest eigenvalue of $Q^*_{\Delta^c \Delta^c}$, $\Lambda_{\min} (Q^*_{\Delta^c \Delta^c})$, is greater than or equal to $\Lambda_{1}.$  Also, there exists $ \Lambda_{2} < \infty$ such that the largest eigenvalue of $\sum_{i=1}^{n}   {x}^{(i)}_{0} ({x}^{(i)}_{0})^\T/n$, $\Lambda_{\max}\left\{ \sum_{i=1}^{n}   {x}^{(i)}_{0} ({x}^{(i)}_{0})^\T/n \right\}$,  is less than or equal to $\Lambda_{2}$, where ${x}_{0}=(x_{-s}^\T,1)^\T$. 
\label{dep}
\end{assumption}
The lower bound in Assumption~\ref{dep} is needed to prevent singularity among the true neighbours, which would prevent neighbourhood recovery. The bound on the largest eigenvalue of the sample covariance matrix  is needed to prevent a situation where most of the variance in the data is due to a single feature. Similar assumptions are made in  \citet{zhao2006}, \citet{meinshausen2006},  \citet{wainwright2009}, \citet{ravikumar2010}, \citet{yang2012}.

\begin{assumption}
The log-partition function $D(\cdot)$ of the conditional density $p(x_s \mid x_{-s})$  is third-order differentiable, and there exist $\kappa_2$ and $\kappa_3$ such that  $| D^{''}(y)|\leq \kappa_2 $ and $|D^{'''}(y)|\leq  \kappa_3$
 for  $y \in \{y: y \in \mathcal{D}, \  |y| \leq M \delta_1 \log p\}$, where $\mathcal{D}$ is the support of $D(\cdot)$. 
\label{D}
\end{assumption}
\begin{remark}
The two quantities $\kappa_2$ and $\kappa_3$ are functions of $p$. The quantity $\delta_1$ is a constant to be chosen in Proposition~\ref{prop.e1}. The constant $M$ is a sufficiently large constant that plays a role in Assumption~\ref{tuning.range}.
\end{remark}
Assumption~\ref{D}  controls the smoothness of the log-partition function $D(\cdot)$ for conditional densities of the form \eqref{cond}. Recall from Section~\ref{condmodel} that the log-partition function of the node $x_s$ is $D(\eta_s)$, where $\eta_s$ equals  $\alpha_{1s}+\sum_{t \neq s} \theta_{ts} x_t$. To apply Assumption~\ref{D} to $D(\eta_s)$, we will need to bound $\sum_{t \neq s} \theta_{ts} x_t$, so that  $|\eta_s| \leq M \delta_1 \log(p)$.

 In order to obtain such a bound, we need another assumption.
\begin{assumption} Assume that, for $ \ t = 1,...,p$, (i) $|E(x_t)|\leq \kappa_m$, (ii) $E(x_t^2)\leq \kappa_v$, and (iii)
$$ \underset{u:|u|\leq 1}{\max} \left. \frac{\partial^2 A}{\partial \alpha^2_{1t}} \right|_{\alpha^*_{1t}+u} \leq \kappa_h, \quad  \underset{u:|u|\leq 1}{\max} \left. \frac{\partial^2 A}{\partial \alpha^2_{2t}} \right|_{\alpha^*_{2t}+u} \leq \kappa_h.$$
\label{A}
\end{assumption}

Assumption~\ref{A} controls the moments of each node, as well as the local smoothness of the log-partition function $A$ in \eqref{joint}. Given Assumption~\ref{A}, the following propositions on the marginal behaviour of random variables hold; see Propositions 3 and 4 in \citet{yang2012}.

\begin{proposition}
Define the event $$\xi_1 = \left( \displaystyle \max_{\substack{i\in \{1,...,n\}; t \in\{1,...,p\} }}|x^{(i)}_t| < \delta_1 \log p\right).$$ Assuming $p>n$,  $\text{pr}(\xi_1)\geq  1- c_1  p^{-\delta_1+2}, $ where $c_1=\exp(\kappa_m + \kappa_h/2) $.
\label{prop.e1}
\end{proposition}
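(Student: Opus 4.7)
The plan is to prove Proposition~\ref{prop.e1} by combining a union bound over the $np$ entries of $X$ with a Chernoff-type single-coordinate tail bound that exploits the exponential family structure of the joint density \eqref{joint}.

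First, since the rows of $X$ are independent and identically distributed from \eqref{joint},
\[
\text{pr}(\xi_1^c) \;\leq\; \sum_{i=1}^{n}\sum_{t=1}^{p} \text{pr}\{|x_t^{(i)}| \geq \delta_1 \log p\} \;\leq\; p^2 \, \underset{t}{\max}\,\text{pr}(|x_t| \geq \delta_1 \log p),
\]
where the final step uses $n < p$. It therefore suffices to show, for every fixed $t$ and each sign $u = \pm 1$, the one-sided tail bound $\text{pr}(u\, x_t \geq \delta_1 \log p) \leq \exp(\kappa_m + \kappa_h/2)\, p^{-\delta_1}$, up to a multiplicative factor of $2$ that can be absorbed into $c_1$.

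The one-sided tail is then handled via Markov's inequality applied to $E\{\exp(u x_t)\}$. The key observation is that in \eqref{joint}, $\alpha_{1t}$ is the natural parameter coupled to the sufficient statistic $x_t$, so multiplying the joint density by $\exp(u x_t)$ amounts to replacing $\alpha_{1t}$ by $\alpha_{1t}+u$; consequently,
\[
E\{\exp(u\, x_t)\} \;=\; \exp\{ A(\Theta, \alpha + u\, e_{1t}) - A(\Theta, \alpha)\},
\]
where $e_{1t}$ denotes the $K \times p$ matrix whose only non-zero entry is a one in position $(1,t)$. A second-order Taylor expansion of $A$ along the direction $e_{1t}$, together with the standard exponential family identity $\partial A/\partial \alpha_{1t} = E(x_t)$, gives, for some $\tilde u$ with $|\tilde u| \leq |u|$,
\[
A(\Theta, \alpha + u\, e_{1t}) - A(\Theta, \alpha) \;=\; u\, E(x_t) \;+\; \frac{u^2}{2}\, \left.\frac{\partial^2 A}{\partial \alpha_{1t}^2}\right|_{\alpha_{1t}^* + \tilde u}.
\]
Specializing to $|u|=1$, Assumption~\ref{A}(i) bounds the linear piece by $\kappa_m$ and Assumption~\ref{A}(iii) bounds the quadratic piece by $\kappa_h/2$, so $E\{\exp(\pm x_t)\} \leq \exp(\kappa_m + \kappa_h/2)$. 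Markov's inequality then delivers the required single-coordinate tail bound, and chaining back through the union bound of the first step yields the conclusion.

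The principal subtlety is justifying the exponential-family identity $E\{\exp(u x_t)\} = \exp\{A(\Theta, \alpha + u e_{1t}) - A(\Theta, \alpha)\}$ together with the subsequent Taylor expansion: both implicitly require that the perturbed log-partition function be finite and twice differentiable in a neighbourhood of $\alpha_{1t}^*$ of radius $1$, which is exactly what Assumption~\ref{A}(iii) guarantees. Once this exponential-family mechanics is in place, the remainder of the argument is a routine Chernoff-plus-union-bound calculation with no substantial additional obstacle.
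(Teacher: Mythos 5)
Your proof is correct and is essentially the argument the paper relies on: the paper does not prove Proposition~\ref{prop.e1} itself but defers to Propositions 3 and 4 of \citet{yang2012}, and your Chernoff-plus-union-bound derivation via the exponential-family identity $E\{\exp(u x_t)\}=\exp\{A(\Theta,\alpha+u e_{1t})-A(\Theta,\alpha)\}$ and a second-order Taylor expansion controlled by Assumption~\ref{A} is exactly that argument. The only blemish is the factor of $2$ from combining the two one-sided tails, which cannot literally be ``absorbed'' into the explicitly specified constant $c_1=\exp(\kappa_m+\kappa_h/2)$; the stated bound is off by this factor, an imprecision inherited from the cited source rather than a flaw in your reasoning.
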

\begin{proposition}
Define the event $$\xi_2 = \left[ \displaystyle \max_{\substack{ t \in\{1,...,p\} }}\left\{\frac{1}{n}\sum\limits_{i=1}^{n}(x^{(i)}_t)^2 \right\}< \delta_2 \right],$$ where $\delta_2 \geq 1$. If $\delta_2 \leq \min (2\kappa_v/3, \kappa_h+\kappa_v )$, and $n\geq 8 \kappa_h^2 \log p /\delta_2^2 $, then
$\text{pr}(\xi_2)\geq 1-\exp(-c_2 \delta_2^2 n), $ where $c_2=1/(4\kappa_h^2)$.
\label{prop.e2}
\end{proposition}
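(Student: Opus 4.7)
The plan is to apply the Chernoff--Cram\'er technique to each $\frac{1}{n}\sum_{i=1}^n (x_t^{(i)})^2$ separately and then take a union bound over $t \in \{1,\ldots,p\}$; the hypothesis $n \geq 8\kappa_h^2 \log p/\delta_2^2$ is tailored precisely to absorb the resulting $\log p$ factor. For fixed $t$ and $\lambda > 0$, Markov's inequality applied to $\exp\bigl(\lambda \sum_i (x_t^{(i)})^2\bigr)$ gives
$$\text{pr}\left(\frac{1}{n}\sum_{i=1}^n (x_t^{(i)})^2 \geq \delta_2\right) \leq e^{-\lambda n \delta_2}\bigl(E[\exp(\lambda x_t^2)]\bigr)^n.$$

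To control the moment generating function I would exploit the exponential-family structure of \eqref{joint}: since $x_t^2/2$ is precisely the sufficient statistic paired with $\alpha_{2t}$ in $f_t$, shifting $\alpha_{2t}^* \mapsto \alpha_{2t}^* + 2\lambda$ multiplies the integrand of \eqref{joint} by $\exp(\lambda x_t^2)$, so that
$$E[\exp(\lambda x_t^2)] = \exp\bigl\{A(\alpha^* + 2\lambda e_{2t},\Theta^*) - A(\alpha^*,\Theta^*)\bigr\}.$$
A second-order Taylor expansion of the right-hand side in $\lambda$, combined with $\partial_{\alpha_{2t}} A|_{\alpha^*} = E[x_t^2]/2 \leq \kappa_v/2$ from part (ii) of Assumption~\ref{A} and $\partial^2_{\alpha_{2t}} A \leq \kappa_h$ on a unit neighbourhood of $\alpha^*$ from part (iii), yields $\log E[\exp(\lambda x_t^2)] \leq \lambda\kappa_v + 2\lambda^2\kappa_h$ for $\lambda \in [0,1/2]$. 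Substituting back, the per-coordinate tail bound reduces to $\exp\{n[-\lambda(\delta_2-\kappa_v)+2\lambda^2\kappa_h]\}$.

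Optimising over $\lambda$ --- with the stated constraints on $\delta_2$ ensuring that the minimiser satisfies $2\lambda\leq 1$, so that part (iii) of Assumption~\ref{A} remains in force --- produces a per-coordinate bound of the form $\exp(-c'\delta_2^2 n)$, and the final union bound over $t\in\{1,\ldots,p\}$ combined with $n \geq 8\kappa_h^2\log p/\delta_2^2$ absorbs the $\log p$ term to yield the advertised probability with $c_2 = 1/(4\kappa_h^2)$. The only genuinely delicate step is the bookkeeping needed to arrive at this precise constant; the rest of the argument depends only on the joint log-partition function $A$ and the marginal moment bounds of Assumption~\ref{A}, not on the specific node-conditional forms, so the derivation proceeds verbatim from Proposition~4 of \citet{yang2012} even though the present graphical model is of mixed type.
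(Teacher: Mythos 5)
First, a point of reference: the paper does not actually prove this proposition anywhere --- it is imported from \citet{yang2012} (``see Propositions 3 and 4 in \citet{yang2012}''), and no argument for it appears in the appendix. So your overall strategy --- a Chernoff bound, controlling $E[\exp(\lambda x_t^2)]$ by tilting $\alpha_{2t}^* \mapsto \alpha_{2t}^* + 2\lambda$ inside the joint log-partition function $A$, a second-order Taylor expansion invoking parts (ii) and (iii) of Assumption~\ref{A}, and a union bound over $t$ absorbed by $n \geq 8\kappa_h^2 \log p/\delta_2^2$ --- is exactly the technique of the cited source, and it is the right skeleton for this statement.

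However, the step you dismiss as ``bookkeeping'' is where your write-up does not close. Your per-coordinate Chernoff exponent is $n\{-\lambda(\delta_2-\kappa_v)+2\lambda^2\kappa_h\}$. Under the stated hypothesis $\delta_2 \leq 2\kappa_v/3 < \kappa_v$, the coefficient $\delta_2-\kappa_v$ is negative, so this exponent is nonnegative for every $\lambda>0$: the infimum over admissible $\lambda$ is approached as $\lambda \to 0^{+}$ and yields the trivial bound $1$, not $\exp(-c'\delta_2^2 n)$. The unconstrained minimiser $\lambda^{\ast}=(\delta_2-\kappa_v)/(4\kappa_h)$ is negative, so the constraints on $\delta_2$ cannot be playing the role you assign them (keeping $2\lambda^{\ast}\leq 1$). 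Moreover, even in the regime $\delta_2>\kappa_v$ where the optimisation is nontrivial, the optimal exponent is $-(\delta_2-\kappa_v)^2 n/(8\kappa_h)$, and converting that into $-\delta_2^2 n/(4\kappa_h^2)$ requires relations among $\delta_2$, $\kappa_v$ and $\kappa_h$ that you have not extracted from the hypotheses $\delta_2\geq 1$ and $\delta_2\leq\min(2\kappa_v/3,\kappa_h+\kappa_v)$. To make this a proof you must either track the cited derivation closely enough to show how those constraints actually produce the constant $c_2=1/(4\kappa_h^2)$, or observe explicitly that the claimed bound cannot follow from the mgf estimate $\log E[\exp(\lambda x_t^2)]\leq \lambda\kappa_v+2\lambda^2\kappa_h$ alone when $\delta_2<\kappa_v$. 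As written, the sentence ``optimising over $\lambda$ \ldots produces a per-coordinate bound of the form $\exp(-c'\delta_2^2 n)$'' is asserted rather than proved, and it is false for the exponent you derived.
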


We now present three additional assumptions that relate to the node-wise regression in (\ref{pl}).

\begin{assumption}
The minimum of edge potentials related to node $x_s$, $ {\min}_{t\in N(x_s)} |\theta_{ts}|$, is larger than $10 (d+1)^{1/2} \lambda_n/\Lambda_{1} $, where $d$ is the number of neighbours of $x_s$.
\label{thetamin}
\end{assumption}

\begin{assumption} \label{tuning.range}
The tuning parameter $\lambda_n$ is in the range
\small
\begin{equation}
\left[\frac{8(2-a)}{a} \left\{\delta_2 \kappa_2 \frac{\log (2p)}{n}\right\}^{1/2},
\min\left\{ \frac{2(2-a)}{a} \kappa_2 \delta_2 M ,  \frac{a \Lambda_{1}^2 (d+1)^{-1}}{288(2-a) \kappa_2 \Lambda_{2} }, \frac{\Lambda_{1}^2  (d+1)^{-1}}{12 \Lambda_{2}\kappa_3 \delta_1 \log p} \right\}\right].
\label{tuning.range:eq}
\end{equation}
\end{assumption}

\begin{remark}
Of the three quantities in the upper bound of $\lambda_n$,  $\Lambda_{1}^2/ \{12\Lambda_{2}  (d+1)\kappa_3 \delta_1 \log p\}$ is usually the smallest because of the $\log p$ in the denominator.
\end{remark}

\begin{assumption}
The sample size $n$ is no smaller than $8 \kappa_h^2\log p/\delta_2^2$, and also the range of feasible $\lambda_n$ in Assumption~\ref{tuning.range} is non-empty, i.e.,
\begin{equation}
\label{sample:eq}
n\geq \frac{ 96^2 (2-a)^2 \Lambda^2_{2} }{a^2 \Lambda_{1}^4} (d+1)^2 \kappa_2 \kappa^2_3 \delta_1^2 \delta_2  \log (2p)  (\log p)^2.
\end{equation}
\label{n}
\end{assumption}
Assumptions~\ref{thetamin}, \ref{tuning.range}, and \ref{n} specify the minimum edge potential,  the range of the tuning parameter, and  the minimum sample size, required  for Theorem~\ref{thm} to hold, that is, for our neighbourhood selection approach \eqref{pl} to achieve model selection consistency.   Similar assumptions are made in related work \citep{yang2012}.

\begin{remark} \label{remark3}
 Suppose that  $n=\Omega \{ (d+1)^2 \log^{3+\epsilon}(p) \}$ for $\epsilon>0$,  $\lambda_n = c \{ \log(p)/n \}^{1/2}$ for some constant $c$, {and $\kappa_2$ and $\kappa_3$ are $O(1)$}.  Then Assumptions~\ref{tuning.range} and \ref{n} are satisfied asymptotically as $n$ and $p$ tend to infinity. Similar rates appear in \citet{meinshausen2006,ravikumar2010,yang2012}. 
\end{remark}

\begin{theorem} Suppose that the joint density (\ref{joint}) exists and Assumptions \ref{irrep} -- \ref{n} hold for {the $s$th node}. Then with probability at least $1-c_1 p^{-\delta_1+2}-\exp(-c_2 \delta_2^2 n)-\exp(-c_3 \delta_3 n)$, for some constants $c_1, c_2, c_3$, $\delta_2 \leq \min (2\kappa_v/3, \kappa_h+\kappa_v )$, and $\delta_3=1/(\kappa_2 \delta_2)$, the  {estimator from} (\ref{pl}) recovers the true neighbourhood of $x_s$ exactly, so that $\hat{N}(x_s)=N(x_s)$.
 \label{thm}
\end{theorem}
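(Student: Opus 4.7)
The plan is to follow the primal--dual witness construction that has become standard for establishing variable-selection consistency of $\ell_1$-penalized M-estimators (as in \citealt{wainwright2009}, \citealt{ravikumar2010}, \citealt{yang2012}). Let $\beta = (\Theta_s^\T, \alpha_{1s})^\T$ and let $\Delta^c$ denote the indices of the true neighbours of $x_s$ together with the index of the intercept $\alpha_{1s}$. I would first construct a restricted primal witness $\tilde{\beta}$ by solving the oracle program
\begin{equation*}
\tilde{\beta} = \arg\min_{\beta: \beta_{\Delta} = 0}\; -\ell_s(\beta;X) + \lambda_n \|\Theta_s\|_1,
\end{equation*}
and simultaneously set $\hat{z}_{\Delta^c}$ to be the subgradient of the penalty at $\tilde{\beta}_{\Delta^c}$. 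I would then pick $\hat{z}_\Delta$ so that the full KKT condition $\nabla\{-\ell_s(\tilde{\beta};X)\} + \lambda_n \hat{z} = 0$ is satisfied. The goal is to prove strict dual feasibility, $\|\hat{z}_\Delta\|_\infty < 1$, and that $\tilde{\beta}_{\Delta^c}$ is close enough to $\beta^*_{\Delta^c}$ that no sign errors occur on the true support. By a standard convexity argument, strict dual feasibility implies that $\tilde{\beta}$ is the unique solution of the unrestricted program, and hence $\hat{N}(x_s) = N(x_s)$.

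The technical work is then concentrated in two places. First, controlling the gradient of $-\ell_s$ evaluated at the truth, $W := \nabla\{-\ell_s(\beta^*;X)\}$: each coordinate of $W$ is an average of mean-zero terms of the form $\{D'(\eta_s^{*(i)}) - x_s^{(i)}\} x_t^{(i)}$, so on the events $\xi_1$ and $\xi_2$ of Propositions~\ref{prop.e1}--\ref{prop.e2}, Hoeffding/Bernstein concentration gives $\|W\|_\infty \lesssim \{\delta_2 \kappa_2 \log(2p)/n\}^{1/2}$ with probability at least $1 - \exp(-c_3 \delta_3 n)$ for the stated $\delta_3 = 1/(\kappa_2 \delta_2)$. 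This is where the lower bound in Assumption~\ref{tuning.range} on $\lambda_n$ is used, so that $\lambda_n$ dominates $\|W\|_\infty$ up to a factor $a/\{4(2-a)\}$. Second, Taylor-expanding the gradient around $\beta^*$ yields
\begin{equation*}
\nabla\{-\ell_s(\tilde{\beta};X)\} = W + Q^*(\tilde{\beta}-\beta^*) + R,
\end{equation*}
where $R$ is the remainder involving $D'''$ at an intermediate point. Combined with the KKT equations, this expresses $\tilde{\beta}_{\Delta^c} - \beta^*_{\Delta^c}$ and $\hat{z}_\Delta$ in terms of $W, R$, and $\lambda_n$, and the decomposition
\begin{equation*}
\hat{z}_\Delta = Q^*_{\Delta\Delta^c}(Q^*_{\Delta^c\Delta^c})^{-1}\{\hat{z}_{\Delta^c}\} + \lambda_n^{-1} \{\text{terms in } W,R\}
\end{equation*}
separates a deterministic contribution bounded by $1-a$ via Assumption~\ref{irrep} from a stochastic remainder that must be shown to be at most $a/2$.

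The main obstacle will be bounding the remainder $R$. By the mean-value form, $\|R\|_\infty$ is controlled by $\kappa_3 \sup_{i} |\eta_s^{(i)} - \eta_s^{*(i)}|$ times a sample second-moment term bounded by $\Lambda_2$ through Assumption~\ref{dep}. To invoke Assumption~\ref{D}, I need $|\eta_s|\leq M\delta_1\log p$, which on $\xi_1$ reduces to controlling $\|\tilde{\Theta}_s - \Theta_s^*\|_1$; this is obtained by showing, via the restricted strong convexity provided by the lower eigenvalue bound $\Lambda_1$ in Assumption~\ref{dep}, that $\|\tilde{\beta}_{\Delta^c} - \beta^*_{\Delta^c}\|_2 \leq 5(d+1)^{1/2}\lambda_n/\Lambda_1$. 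A standard bootstrapping/continuity argument on the restricted problem closes the self-referential loop: assume the iterate stays in the ball where Assumption~\ref{D} applies, derive the stronger bound, and verify the assumption was valid. The upper bounds on $\lambda_n$ in Assumption~\ref{tuning.range}, in particular $\Lambda_1^2/\{12\Lambda_2(d+1)\kappa_3\delta_1\log p\}$, are exactly what is needed to push $\|R\|_\infty$ and the resulting stochastic term in $\hat{z}_\Delta$ below $a\lambda_n/2$, securing strict dual feasibility.

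Finally, combining $\|\tilde{\beta}_{\Delta^c} - \beta^*_{\Delta^c}\|_\infty \leq \|\tilde{\beta}_{\Delta^c} - \beta^*_{\Delta^c}\|_2 \leq 5(d+1)^{1/2}\lambda_n/\Lambda_1$ with Assumption~\ref{thetamin} (which asserts that the smallest nonzero $|\theta_{ts}^*|$ exceeds $10(d+1)^{1/2}\lambda_n/\Lambda_1$) yields $\mathrm{sign}(\tilde{\theta}_{ts}) = \mathrm{sign}(\theta_{ts}^*)$ for every $t \in N(x_s)$, so no true edge is missed, while strict dual feasibility guarantees no false edges. The probability statement follows by taking a union bound over the failure events of $\xi_1$, $\xi_2$, and the concentration inequality for $W$, delivering the claimed $1 - c_1 p^{-\delta_1+2} - \exp(-c_2\delta_2^2 n) - \exp(-c_3\delta_3 n)$ guarantee.
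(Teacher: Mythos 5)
Your proposal follows essentially the same route as the paper's proof: a primal--dual witness construction with an oracle estimator restricted to the true support, concentration of the score $W$ at the truth conditional on the events $\xi_1,\xi_2$ (the paper's Lemma~\ref{lemma.W}), a Taylor-remainder bound driven by $\kappa_3$ and the convexity argument $F(u)>0$ on a ball of radius $O\{(d+1)^{1/2}\lambda_n/\Lambda_1\}$ (the paper's Lemma~\ref{lemma.R}), strict dual feasibility via Assumption~\ref{irrep}, sign consistency via Assumption~\ref{thetamin}, and a final union bound. The only differences are immaterial constants (e.g.\ your radius $5(d+1)^{1/2}\lambda_n/\Lambda_1$ versus the paper's $6(d+1)^{1/2}\lambda_n/\Lambda_1$, both compatible with the factor $10$ in Assumption~\ref{thetamin}).
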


Theorem~\ref{thm} shows that the probability of successful recovery converges to 1 {exponentially fast with the sample size $n$}. 
 We note that the number of neighbours  $d$ appears in Assumptions~\ref{thetamin}--\ref{n}. As $d$ increases, the minimum edge potential for each neighbour increases, the upper range for $\lambda_n$ decreases, and the required sample size increases. 
 Therefore, we need the true graph $G$ to be sparse, $d=o(n)$, in order for Theorem 1 to be meaningful.

The quantities $\delta_2 \kappa_2$ and $\delta_1\kappa_3$ appear in the upper bound of $\lambda_n$ \eqref{tuning.range:eq} and the minimum sample size \eqref{sample:eq}. The fact that $\kappa_2$ and  $\delta_2$ appear together in a product implies that we can relax the restriction on $\delta_2$ if $\kappa_2$ is small. The same applies to $\delta_1$ and $\kappa_3$.

For certain types of nodes, Theorem~\ref{thm} holds with a less stringent set of assumptions. For a Gaussian node, the second-and-higher order derivatives of $D(\cdot)$ are always bounded, i.e., $\kappa_2=1$ and $\kappa_3=0$. This has profound effects on the theory, as illustrated in Corollary~\ref{col1}.
 \begin{corollary}
Suppose that the joint density (\ref{joint}) exists and Assumptions~\ref{irrep}--\ref{thetamin} hold {for a Gaussian node, $x_s$. If } 
\begin{equation*}
\lambda_n \in \left[\frac{8(2-a)}{a} \left\{\delta_2 \frac{\log (2p)}{n}\right\}^{1/2}, \frac{2(2-a)}{a} \delta_2 M \right], \quad n \geq \frac{8 \kappa_h^2\log p}{\delta_2^2},
\end{equation*} 
then with probability at least $1-\exp(-c_2 \delta_2^2 n)-\exp(-c_3 \delta_3 n)$, for some constants $c_2, c_3$, $\delta_2 \leq \min (2\kappa_v/3, \kappa_h+\kappa_v )$, and $\delta_3=1/ \delta_2$, the  {estimator from} \eqref{pl} recovers the true neighbourhood of $x_s$ exactly, so that $\hat{N}(x_s)=N(x_s)$.
\label{col1}
\end{corollary}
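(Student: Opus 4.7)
The plan is to obtain the corollary by specializing the proof of Theorem~\ref{thm} to a Gaussian node and tracking which quantities vanish. From Example~1, the conditional density \eqref{cond:Gaussian} has log-partition function $D(\eta_s) = \eta_s^2/2 + \log(2\pi)/2$, so $D''(\eta_s) \equiv 1$ and $D'''(\eta_s) \equiv 0$ for all $\eta_s \in \mathcal{R}$. Consequently, Assumption~\ref{D} holds with $\kappa_2 = 1$ and $\kappa_3 = 0$, and crucially the domain restriction $|y| \leq M \delta_1 \log p$ plays no role because the derivative bounds hold globally. Under this specialization the conditional log-likelihood $\ell_s(\Theta_s, \alpha_{1s}; X)$ is exactly quadratic in $(\Theta_s^{\T}, \alpha_{1s})^{\T}$, so the Hessian $-\nabla^2 \ell_s$ equals the fixed matrix $n^{-1} \sum_i x_0^{(i)}(x_0^{(i)})^{\T}$ independently of the parameters, and $Q^* = n^{-1} \sum_i x_0^{(i)}(x_0^{(i)})^{\T}$.

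Next, I would revisit the primal-dual witness argument used in the proof of Theorem~\ref{thm}. In the general case, controlling the Taylor remainder of $\ell_s$ around $(\Theta_s^*, \alpha_{1s}^*)$ requires bounding $|D'''|$ on a neighbourhood whose size depends on $\max_{i,t} |x_t^{(i)}|$, which in turn is why Proposition~\ref{prop.e1} and the event $\xi_1$ appear, contributing the probability term $c_1 p^{-\delta_1 + 2}$ and the $\log p$ factor in the third entry of the upper bound in \eqref{tuning.range:eq}. With $\kappa_3 = 0$, this remainder is identically zero, so $\xi_1$ is no longer needed, the term $c_1 p^{-\delta_1 + 2}$ disappears from the failure probability, and the third entry of the minimum in \eqref{tuning.range:eq}, namely $\Lambda_{1}^2 (d+1)^{-1}/(12 \Lambda_{2}\kappa_3 \delta_1 \log p)$, becomes $+\infty$ and drops out of the minimum. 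By the same token, the right-hand side of \eqref{sample:eq} in Assumption~\ref{n} contains the factor $\kappa_3^2$, so this requirement is trivially satisfied for $\kappa_3 = 0$, leaving only the bound $n \geq 8 \kappa_h^2 \log p / \delta_2^2$ inherited from Proposition~\ref{prop.e2} and the event $\xi_2$.

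To conclude, I would set $\kappa_2 = 1$ in the surviving pieces of \eqref{tuning.range:eq}: the lower bound becomes $8(2-a)a^{-1}\{\delta_2 \log(2p)/n\}^{1/2}$, and the first entry of the minimum becomes $2(2-a)\delta_2 M/a$, exactly matching the range stated in the corollary. The remaining entry of the minimum, $a \Lambda_{1}^2 (d+1)^{-1}/\{288(2-a) \Lambda_2\}$, is the point where the main obstacle lies: it is produced in the general proof from a self-bounding step that ensures the quadratic approximation to $\ell_s$ is accurate inside a ball of a prescribed radius around $(\Theta_s^*, \alpha_{1s}^*)$. I would argue that in the Gaussian case the quadratic approximation is exact, so this self-bounding step is vacuous and the constraint may be dropped from \eqref{tuning.range:eq}; verifying this carefully by rewalking the KKT analysis is the one nontrivial bookkeeping task. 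With these simplifications, the conclusion of Theorem~\ref{thm} gives $\hat{N}(x_s) = N(x_s)$ on the complement of the events for $\xi_2$ and the noise concentration (contributing $\exp(-c_2 \delta_2^2 n)$ and $\exp(-c_3 \delta_3 n)$ with $\delta_3 = 1/\delta_2$), which is precisely the probability bound stated in Corollary~\ref{col1}.
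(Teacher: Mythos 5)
Your proposal is correct and follows essentially the same route as the paper: the paper's proof of Corollary~\ref{col1} likewise re-runs the primal--dual witness argument of Theorem~\ref{thm}, replacing Lemma~\ref{lemma.R} with a Gaussian-specific version in which $D'''\equiv 0$ makes the second-order correction to the term II vanish and $D''\equiv 1$ forces $R^n=0$ exactly, so that the event $\xi_1$, the $c_1p^{-\delta_1+2}$ term, and both upper-bound constraints on $\lambda_n$ involving $\kappa_3\delta_1\log p$ and the constant $288$ all drop out. The one step you flagged as needing verification (discarding the $a\Lambda_1^2(d+1)^{-1}/\{288(2-a)\Lambda_2\}$ constraint) is resolved in the paper exactly as you anticipated, since that constraint exists only to ensure $\|R^n\|_\infty\le a\lambda_n/(8-4a)$ and is vacuous when $R^n=0$.
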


\subsection{Combining Neighbourhoods to Estimate the Edge Set}\label{selection}

The neighbourhood selection approach may give asymmetric estimates, in the sense that $t \in \hat{N}(x_s)$ but $ s \notin \hat{N}(x_t)$. To deal with this discrepancy, two strategies for estimating a single edge set were proposed in \citet{meinshausen2006}, and adapted in other
work: 
\begin{equation*}
%\begin{aligned}
\hat{E}_{\text{and}}= \left\{ (s,t): s \in \hat{N}(x_t) \ \text{and} \ t \in \hat{N}(x_s) \right\}, \quad
\hat{E}_{\text{or}}=  \left\{ (s,t): s \in \hat{N}(x_t) \ \text{or} \ t \in \hat{N}(x_s) \right\}.
%\end{aligned}
\end{equation*}
When the $s$th and $t$th nodes are of the same type, there is no clear reason to prefer the edge estimate from $\hat{N}(x_s)$ over the one from $\hat{N}(x_t)$, and so the choice of the intersection rule, $\hat{E}_{\text{and}}$, versus the union rule, $\hat{E}_{\text{or}}$, is not crucial \citep{meinshausen2006}.  

When the $s$th and $t$th nodes are of different types, however, the choice of neighbourhood matters. 
 We now take a closer look at this with examples of Gaussian, Bernoulli, exponential and Poisson nodes as in \eqref{cond:Gaussian}--\eqref{cond:exp}. Quantities $c_1$, $c_2$, and $c_3$ in Theorem~\ref{thm} are the same regardless of the node type, while the values of $\kappa_2$ and $\kappa_3$ depend on the type of node being regressed on the others in \eqref{pl}. 
%Thus, we consider the values of $\delta_1$ and $\delta_2$ and hence the implications of Theorem~\ref{thm} for each node type. 
 We fix $B_1= \kappa_3 \delta_1$ for Bernoulli, Poisson and exponential nodes. For a Gaussian node, this quantity will always equal zero, since $D(\eta_s)=\eta_s^2/2+\log(2\pi)/2$ and hence  $D^{'''}(\eta_s)=0=\kappa_3$. Furthermore, we fix $B_2 = 1/ \delta_3= \delta_2 \kappa_2$ for all four types of nodes. With $B_1$ and $B_2$ fixed, the minimum sample size and the feasible range of the tuning parameter for Bernoulli, Poisson and exponential nodes are exactly the same, 
{as these quantities involve only $B_1$ and $B_2$. 
In particular, from Assumption~\ref{tuning.range}, the range of feasible $\lambda_n$ is $[  8(2-a)\{\log (2p) B_2 /n\}^{1/2} / a, \Lambda_{1}^2 /\{12\Lambda_{2}(d+1) B_1  \log p\}  ],$ 
 and from Assumption~\ref{n}, the minimum sample size is $96^2 (2-a)^2 \Lambda^2_{2} (d+1)^2 B_2 B_1^2 \log (2p) (\log p)^2 / (a^2 \Lambda_{1}^4)$. 
These bounds}
are more restrictive than the corresponding bounds for Gaussian nodes  in Corollary~\ref{col1}.  We now derive a lower bound for the probability of successful neighbourhood recovery for each node type.

\begin{example} \label{ex:gau} 
{If $x_s$ is a Gaussian node, then}
the log-partition function is $D(\eta_s)=\eta_s^2/2+\log(2\pi)/2$. It follows that $D^{''}(\eta_s)=1=\kappa_2$. Thus, $\delta_2=B_2$. 
 By Corollary~\ref{col1}, a lower bound for the probability of successful neighbourhood recovery is
\begin{equation}
 \text{pr}\{\hat{N}(x_s)=N(x_s)\} \geq 1-\exp(-c_2 B_2^2 n)-\exp(-c_3 n /B_2).
 \label{prob:Gaussian}
\end{equation}
\end{example}

\begin{example}\label{ex:bin}
{If $x_s$ is a Bernoulli node, then}
the log-partition function is $D(\eta_s)=\log\{ \exp(-\eta_s) +\exp(\eta_s)\}$, so that $|D^{''}(\eta_s)| \leq 1$ and $|D^{'''}(\eta_s)| \leq 2$. Consequently, $\delta_2=B_2$, and $\delta_1= B_1/\kappa_3=B_1/2$. 
 By Theorem~\ref{thm}, a lower bound for the probability of successful neighbourhood recovery is
\begin{equation}
 \text{pr}\{\hat{N}(x_s)=N(x_s)\} \geq 1-c_1 p^{-B_1/2+2}-\exp(-c_2 B_2^2 n)-\exp(-c_3 n/B_2).
 \label{prob:Bernoulli}
\end{equation}
\end{example}

\begin{example}\label{ex:poi}
{If $x_s$ is a Poisson node, then}
 the log-partition function  is $D(\eta_s)=\exp (\eta_s)$, so $D^{''}(\eta_s)=D^{'''}(\eta_s)= \exp (\eta_s)$. To bound $D^{''}(\eta_s)$ and $D^{'''}(\eta_s)$, we need to bound $\exp(\eta_s)$.  
 Recall from Table~\ref{tab1} that strong compatibility requires that  $\theta_{ts} x_t \leq 0$ when $x_t$ is Gaussian, Poisson or exponential. 
 Therefore, an upper bound for $\exp(\eta_s)$ is
\begin{equation}
\exp ( \eta_s ) \leq \exp\left( \alpha_{1s} + \sum\limits_{t \in I} |\theta_{ts}|\right) \equiv b_P,
\label{b:Poisson}
\end{equation}
with $I$  the set of Bernoulli nodes.  Therefore, $\kappa_2=\kappa_3= b_P$, and so  $\delta_2=B_2/b_P$ and $\delta_1=B_1/b_P$. 
 By Theorem~\ref{thm}, a lower bound on the probability of successful neighbourhood recovery is
\begin{equation}
 \text{pr}\{\hat{N}(x_s)=N(x_s)\} \geq 1-c_1 p^{-  B_1/b_P+2}-\exp(-c_2 B_2^2 n/b_P^2)-\exp(-c_3 n/B_2).
 \label{prob:Poisson}
\end{equation}
\end{example}

\begin{example}\label{ex:exp}  
{If $x_s$ is an exponential node, then}
  the log-partition function  is $D(\eta_s)=-\log (-\eta_s)$, so $D^{''}(\eta_s)=\eta_s^{-2}$ and $D^{'''}(\eta_s)=-2\eta_s^{-3}$. Furthermore,
  \begin{equation} \label{newnum}
  \eta_s = \alpha_{1s} + \sum_{t \neq s} \theta_{ts} x_t  \leq  \alpha_{1s} + \sum_{t \in I} \theta_{ts} x_t { \leq  \alpha_{1s} + \sum_{t \in I} |\theta_{ts}| < 0},
  \end{equation}
  with $I$  the set of Bernoulli nodes. In \eqref{newnum},
 the first inequality follows from  the requirement  for compatibility from Table~\ref{tab1} that $\theta_{ts} x_t \leq 0$ when $x_t$ is Gaussian, Poisson or exponential; the second inequality follows from the fact that Bernoulli nodes are coded as 
 $+1$ and $-1$; and the third inequality follows from the Bernoulli-exponential entry in 
   Table~\ref{tab1}. Therefore, it follows that
     \begin{equation}
  |\eta_s| { \geq \left|\alpha_{1s} + \sum_{t \in I}  |\theta_{ts}| \right| } \geq  |\alpha_{1s}| - \sum_{t \in I} |\theta_{ts}|  \equiv b_E. \label{b:exp}
  \end{equation}
  As a result, $|D^{''}(\eta_s)|$ and $|D^{'''}(\eta_s)|$ are bounded by $\kappa_2=b_E^{-2}$ and  $\kappa_3=2b_E^{-3}$, respectively. For fixed $B_1$ and $B_2$, we have $\delta_2=b_E^2 B_2$ and $\delta_1=B_1 b_E^3/2$. 
 By Theorem~\ref{thm}, a lower bound for the probability of successful neighbourhood recovery is
\begin{equation}
 \text{pr}\{\hat{N}(x_s)=N(x_s)\} \geq 1-c_1 p^{- b_E^3 B_1/2+2}-\exp(-c_2 b_E^4 B_2^2 n)-\exp(-c_3 n/B_2) .
 \label{prob:exp}
\end{equation}
\end{example}

Examples~\ref{ex:gau}-\ref{ex:exp} reveal that the neighbourhood of a Gaussian node is easier to recover than the neighbourhood of the other three types of nodes: the first requires a smaller minimum sample size when $p$ is large, allows for a wider range of feasible tuning parameters, and has in general a higher probability of success. 
 As a result, the neighbourhood of the Gaussian node should be used when estimating an edge between a Gaussian node and a non-Gaussian node.

Which neighbourhood should we use to estimate an edge between two non-Gaussian nodes? There are no clear winners: while \eqref{prob:Bernoulli} can be evaluated given knowledge of $c_1$, $c_2$, and $c_3$, \eqref{prob:Poisson} and \eqref{prob:exp} also require knowledge of the unknown quantities $b_E$ and $b_P$, which are functions of unknown quantities $\Theta_s$ and $\alpha_{1s}$ in \eqref{b:Poisson} and \eqref{b:exp}.  
One possibility is to insert a consistent estimator for these parameters (see e.g. \citealt{vandeGeer2008}, \citealt{bunea2008}) in order to obtain a consistent estimator for $b_P$ or $b_E$. This leads to the following lemma.

\begin{lemma}
Suppose $\tilde{\Theta}_s$ and $\tilde{\alpha}_{1s}$ are consistent estimators of the true parameters in the conditional densities \eqref{cond:Poisson} and \eqref{cond:exp}. Let $I$ be the index set of the Bernoulli nodes. 

1. If $x_s$ is a Poisson node and $\tilde{b}_P= \exp( \tilde{\alpha}_{1s} + \sum_{t \in I} |\tilde{\theta}_{ts}|)$, then  
\begin{equation}
1-c_1 p^{-  B_1/\tilde{b}_P+2}-\exp(-c_2  B_2^2 n/\tilde{b}^2_P)-\exp(-c_3 n/B_2)
\label{estprob:Poisson}
\end{equation}
is a consistent estimator of a lower bound for $\text{pr}\{ \hat{N}(x_s) = N(x_s)\}$.

2. If $x_s$ is an exponential node and $\tilde{b}_E= |\tilde{\alpha}_{1s}| - \sum_{t \in I} |\tilde{\theta}_{ts}|$, then
\begin{equation}
1-c_1p^{- \tilde{b}_E^3 B_1/2+2}-\exp(-c_2 \tilde{b}_E^4 B_2^2 n)-\exp(-c_3 n/B_2)
\label{estprob:exp}
\end{equation}
 is a consistent estimator of a lower bound for $\text{pr}\{ \hat{N}(x_s) = N(x_s)\}$.
\label{lmm:selection}
\end{lemma}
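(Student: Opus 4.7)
The plan is to apply the continuous mapping theorem for convergence in probability to two successive compositions: first to obtain consistency of $\tilde{b}_P$ and $\tilde{b}_E$ from consistency of $(\tilde{\alpha}_{1s},\tilde{\Theta}_s)$, and then to obtain consistency of the plug-in lower bounds in \eqref{estprob:Poisson} and \eqref{estprob:exp} from consistency of $\tilde{b}_P$ and $\tilde{b}_E$.

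First I would verify that $\tilde{b}_P = \exp(\tilde{\alpha}_{1s} + \sum_{t \in I}|\tilde{\theta}_{ts}|)$ is a continuous function of $(\tilde{\alpha}_{1s},\tilde{\Theta}_s)$. The absolute value and sum are continuous, as is the outer exponential, so $\tilde{b}_P$ is a continuous function on all of $\mathbb{R}^{|I|+1}$. Since $(\tilde{\alpha}_{1s},\tilde{\Theta}_s) \to_p (\alpha_{1s}^*, \Theta_s^*)$ by hypothesis, the continuous mapping theorem gives $\tilde{b}_P \to_p b_P$. The argument for $\tilde{b}_E = |\tilde{\alpha}_{1s}| - \sum_{t \in I}|\tilde{\theta}_{ts}|$ is the same: it is a continuous function of $(\tilde{\alpha}_{1s},\tilde{\Theta}_s)$, so $\tilde{b}_E \to_p b_E$. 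Note that the Bernoulli-exponential entry in Table~\ref{tab1} forces $b_E > 0$ at the true parameter value, which matters in the second step.

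Second, for the Poisson case, the lower bound in \eqref{prob:Poisson} is the map
$$b \mapsto 1 - c_1 p^{-B_1/b + 2} - \exp(-c_2 B_2^2 n/b^2) - \exp(-c_3 n/B_2),$$
which is continuous on $(0,\infty)$. Since $b_P = \exp(\alpha_{1s}^* + \sum_{t \in I}|\theta_{ts}^*|) > 0$, another application of the continuous mapping theorem yields that \eqref{estprob:Poisson} converges in probability to the true lower bound in \eqref{prob:Poisson}, establishing consistency. For the exponential case, the lower bound in \eqref{prob:exp} is the map
$$b \mapsto 1 - c_1 p^{-b^3 B_1/2 + 2} - \exp(-c_2 b^4 B_2^2 n) - \exp(-c_3 n/B_2),$$
which is continuous on $\mathbb{R}$, so the same argument delivers consistency of \eqref{estprob:exp}.

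The only genuine subtlety is making sure the compositions are valid at the true parameter value; the main obstacle is really just checking that the domain of the outer map contains $b_P$ and $b_E$. For $b_P$ this is immediate because $b_P$ is an exponential; for $b_E$ it follows from the compatibility condition $\sum_{s \in I}|\theta_{st}^*| < -\alpha_{1t}^*$ in Table~\ref{tab1}, which is assumed since \eqref{prob:exp} was derived under strong compatibility. No rate is claimed, only consistency, so neither delta-method expansions nor careful bounds on $n,p$ growth are needed.
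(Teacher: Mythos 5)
Your continuous-mapping argument is correct and is the natural route to Lemma~\ref{lmm:selection}; note that the appendix of the paper does not actually contain a proof of this lemma, so there is no more elaborate argument to compare against. The two-stage composition — consistency of $\tilde{b}_P$ and $\tilde{b}_E$ from consistency of $(\tilde{\alpha}_{1s},\tilde{\Theta}_s)$, then consistency of the plug-in bound from consistency of $\tilde{b}_P$ or $\tilde{b}_E$ — is exactly what is needed, and you correctly identify the only domain issues: $b_P>0$ automatically because it is an exponential, and the exponential-node map is defined for all real $b$, with $b_E>0$ guaranteed at the truth by the Bernoulli--exponential entry of Table~\ref{tab1}.

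One step deserves more care than you give it. The outer maps in \eqref{estprob:Poisson} and \eqref{estprob:exp} depend on $n$ and $p$, while consistency of $(\tilde{\alpha}_{1s},\tilde{\Theta}_s)$ is itself an asymptotic statement in $n$; so you are really composing a consistent sequence with a \emph{sequence} of maps $g_n$, and the continuous mapping theorem as usually stated applies only to a single fixed continuous map. What is needed is $g_n(\tilde{b})-g_n(b)\to 0$ in probability, which requires some uniformity, e.g., that the $g_n$ be Lipschitz on a fixed neighbourhood of the true $b$ with bounded constants. This does hold here: for instance
\[
\left|\frac{\partial}{\partial b}\exp\!\left(-c_2B_2^2 n/b^2\right)\right|=\frac{2c_2B_2^2 n}{b^3}\exp\!\left(-c_2B_2^2 n/b^2\right)\to 0
\]
locally uniformly on $b>0$, and similarly for the $p^{-B_1/b+2}$ term whenever the bound is nontrivial (i.e., $B_1/b_P>2$), and for the analogous terms in the exponential case. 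But this is the one place where ``apply the continuous mapping theorem'' is not literally licensed, and a sentence of justification along these lines should be added. With that caveat addressed, the proof is complete.
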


Therefore, by inserting consistent estimators of $\Theta_s$ and $\alpha_{1s}$ into \eqref{b:Poisson} or \eqref{b:exp}, we can reconstruct an edge by choosing the estimate with the highest probability of correct recovery according to 
 \eqref{prob:Bernoulli}, \eqref{estprob:Poisson}, and \eqref{estprob:exp}. The rules are summarized in Table~\ref{tab2}. The results in this section illustrate a worst case scenario for recovery of each neighbourhood, in that Theorem \ref{thm}  provides a lower bound for the probability of successful neighbourhood recovery.

\begin{table}
\def~{\hphantom{0}}
\caption{Neighbourhood to use in estimating an edge between two non-Gaussian nodes of different types} 
\begin{center}
\begin{tabular}{ p{3.5cm}p{10cm} }
 & Selection rules\\
Poisson \& Exponential  & Choose Poisson if $\tilde{b}_E^2 \tilde{b}_P < 1$ and  $\tilde{b}_E^3 \tilde{b}_P < 2$. Choose exponential if
$\tilde{b}_E^2 \tilde{b}_P > 1$ and  $\tilde{b}_E^3 \tilde{b}_P > 2$.\\
Poisson \& Bernoulli &  Choose Poisson if $\tilde{b}_P < 1$. Choose Bernoulli if $\tilde{b}_P > 2$. 
\\
Exponential \& Bernoulli & Choose exponential if $\tilde{b}_E \geq 1$. Choose Bernoulli if $\tilde{b}_E <1 $. \\
\end{tabular}
\end{center}
\label{tab2}
When the conditions in this table are not met, there is no clear preference in terms of which neighbourhood to use.
\end{table}

\section{Neighbourhood Recovery and Selection with Partially-Specified Models}\label{nojd}

In Section~\ref{theory}, we showed that the neighbourhood selection approach of Section~\ref{estimation} can recover the true graph when each node's conditional distribution is of the form \eqref{cond}, provided that the conditions for strong compatibility are satisfied.  In this section, we consider a partially-specified model in which some of the nodes are assumed to have conditional distributions of the form \eqref{cond}, and we make no assumption on the conditional distributions of the remaining nodes.  We will show that in this setting, neighbourhoods of the nodes with conditional distributions of the form (3) can still be recovered.

Here the neighbourhood of $x_s$ is defined based upon its conditional density, \eqref{cond}, as $N^{0}(x_s)=\{t: \theta_{ts}\neq 0\}$. Assumption~\ref{A} in Section~\ref{subsec:jd} is inappropriate since we no longer assume that all $p$ nodes have conditional densities of the form \eqref{cond}, and consequently we are not assuming a particular form for the joint density. Therefore, we make the following assumption to replace Propositions~\ref{prop.e1} and \ref{prop.e2}.
\begin{assumption} Assume that (i) $\text{pr}(\xi_1)\geq  1- c_1  p^{-\delta_1+2},$ (ii) $\text{pr}(\xi_2)\geq 1-\exp(-c_2 \delta_2^2 n ).$
\label{A2}
\end{assumption}
 \begin{theorem}
Suppose that the $s$th node has conditional density \eqref{cond}, and that Assumptions~\ref{irrep}\ -- \ref{D} and \ref{thetamin} -- \ref{A2} hold. Then with probability at least $1-c_1 p^{-\delta_1+2}-\exp(-c_2 \delta_2^2 n)-\exp(-c_3 \delta_3 n) $, for some constants $c_1, c_2, c_3$, and $\delta_3=1/(\kappa_2 \delta_2)$, the {estimator from} \eqref{pl} recovers the true neighbourhood of $x_s$ exactly, so that $\hat{N}(x_s)=N^{0}(x_s)$.
 \label{thm2}
\end{theorem}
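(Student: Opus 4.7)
The plan is to reuse the proof of Theorem~\ref{thm} essentially verbatim, tracking exactly the two places where the full joint density \eqref{joint} was needed and showing that each can be replaced by structure coming only from the conditional specification \eqref{cond} for node $s$ plus Assumption~\ref{A2}. The first observation is that the node-wise objective $\ell_s(\Theta_s,\alpha_{1s};X)$ and all of its derivatives with respect to $(\Theta_s^\T,\alpha_{1s})^\T$ are computed from $p(x_s\mid x_{-s})$ alone; in particular, $Q_s^{*}$, the remainder in any Taylor expansion of $\ell_s$ around $(\Theta_s^{*\T},\alpha_{1s}^*)^\T$, and the population gradient $E\{\nabla\ell_s(\Theta_s^*,\alpha_{1s}^*;X)\mid x_{-s}\}=0$ all depend only on \eqref{cond} evaluated at the true parameter. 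So Assumptions~\ref{irrep}--\ref{D} and \ref{thetamin}--\ref{n}, which are statements about the conditional model at node $s$, continue to have their usual meaning when the rest of the joint distribution is unspecified.

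The primal-dual witness construction would then proceed as in \citet{yang2012} and in the proof of Theorem~\ref{thm}. First, I would solve the restricted program obtained by fixing $\Theta_{s,\Delta}=0$ on the non-neighbour coordinates, and check that this restricted solution is close to $(\Theta_s^{*\T},\alpha_{1s}^*)^\T$ using the restricted strong convexity supplied by Assumption~\ref{dep} together with the smoothness bounds from Assumption~\ref{D}. Second, I would verify strict dual feasibility on $\Delta$ using Assumption~\ref{irrep}, bounding $\|\nabla\ell_s(\Theta_s^*,\alpha_{1s}^*;X)\|_\infty$ by a multiple of $\{\log(2p)/n\}^{1/2}$ via a Hoeffding-type tail inequality applied to the mean-zero conditional score $\sum_{i=1}^{n}\{x_s^{(i)}-D'(\eta_s^{(i)})\}x_t^{(i)}/n$; this step produces the $\exp(-c_3\delta_3 n)$ term. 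Third, I would combine the $\ell_\infty$-deviation of the restricted estimator with Assumption~\ref{thetamin} to guarantee that no true neighbour is zeroed out. The range of $\lambda_n$ in Assumption~\ref{tuning.range} and the sample-size bound in Assumption~\ref{n} are precisely what make each of these three steps succeed simultaneously.

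The only places where the original proof invoked the joint density were to establish, via Propositions~\ref{prop.e1} and~\ref{prop.e2}, the events $\xi_1$ and $\xi_2$, which provide the uniform bound on $|x_t^{(i)}|$ used to control $|\eta_s|\le M\delta_1\log p$ in Assumption~\ref{D}, and the bound on $\Lambda_{\max}\{\sum_i x_0^{(i)}(x_0^{(i)})^\T/n\}$ used in the curvature step. Assumption~\ref{A2} simply postulates these same two events with the same exponential tail probabilities, so the probability accounting $c_1 p^{-\delta_1+2}+\exp(-c_2\delta_2^2 n)+\exp(-c_3\delta_3 n)$ from Theorem~\ref{thm} carries over unchanged. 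The main obstacle is purely one of bookkeeping: one has to confirm that every inequality in the proof of Theorem~\ref{thm} that appeared to rely on the joint distribution of $x_{-s}$ actually uses it only through $\xi_1$, $\xi_2$, or through conditioning on $x_{-s}$, so that no extra structure on $p(x_{-s})$ beyond Assumption~\ref{A2} is needed. Once this is checked step by step, the conclusion $\hat N(x_s)=N^0(x_s)$ follows with the stated probability.
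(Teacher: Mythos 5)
Your proposal is correct and matches the paper's intended argument: the paper itself omits the proof of Theorem~\ref{thm2}, stating only that it is similar to that of Theorem~\ref{thm}, and the similarity rests on exactly the observation you make — the joint density enters the proof of Theorem~\ref{thm} only through Propositions~\ref{prop.e1} and~\ref{prop.e2} (which establish $\xi_1$ and $\xi_2$ via Assumption~\ref{A}), while everything else (the score, Hessian, and the conditional moment-generating-function computation in Lemma~\ref{lemma.W}) is computed from the conditional density \eqref{cond} of node $s$ alone. Substituting Assumption~\ref{A2} for those two propositions leaves the probability accounting unchanged, as you state.
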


The proof of Theorem~\ref{thm2} is similar to that of Theorem~\ref{thm}, and is thus omitted.  Theorem~\ref{thm2} indicates that our neighbourhood selection approach can recover the neighbourhood of any node for which the conditional density is of the form  \eqref{cond}, provided that Assumption~\ref{A2} holds.  This means that in order to recover an edge between two nodes using our neighbourhood selection approach, it suffices for one of the two nodes' conditional densities to be of the form \eqref{cond}. Consequently, we can model relationships that are far more flexible than those outlined in Table~\ref{tab1}, e.g. an edge between a Poisson node and a node  that takes values on the whole real line.

Although Theorem~\ref{thm2} allows us to go beyond some of the restrictions in Table~\ref{tab1}, it is still restricted in that it only guarantees recovery of an edge between two nodes for which at least one of the  node-conditional densities is exactly of the form \eqref{cond}. In future work, we could  generalize Theorem~\ref{thm2} to the case where  \eqref{cond}  is simply an approximation to  the true node-conditional distribution.

\section{Numerical Studies}\label{simulation}
\subsection{Data Generation}\label{generate}

We consider mixed graphical models with two types of nodes, and $m=p/2$ nodes per type, for Gaussian-Bernoulli and Poisson-Bernoulli models. We order the nodes so that the Gaussian or Poisson nodes precede the Bernoulli nodes. 

For both models, we construct a graph in which the $j$th node for $j=1,\ldots,m$ is connected with the adjacent nodes of the same type, as well as the $(m+j)$th node of the other type, as shown in Fig.~\ref{NEWFIG}. This encodes the edge set $E$. 
For $(i,j) \in E$ and $i<j$, we generate the edge potentials $\theta_{ij}$ and $\theta_{ji}$ as
\begin{equation}
 \theta_{ij} = \theta_{ji}= y_{ij}  r_{ij}, \; \text{pr}(y_{ij}= 1) =\text{pr}(y_{ij}= -1) =0.5, \; \ r_{ij} \sim \text{Unif}(a,b).
\label{generate.theta}
 \end{equation} 
We set  $\theta_{ij} = \theta_{ji}=0$ if $(i,j) \notin E$. 
 Section~\ref{generation_detail} in the Supplementary Material lists additional steps to ensure strong compatibility of the conditional distributions. 
Values of $a$ and $b$ in \eqref{generate.theta}, as well as the parameters of $f_s(x_s)$ in the conditional density \eqref{cond}, are specified in Sections~\ref{probrecovery}--\ref{PB}.

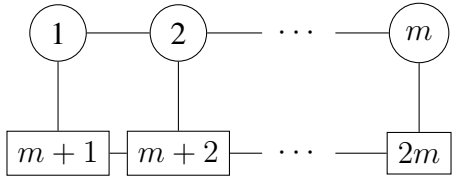
\begin{figure}[t]
\begin{center}
\begin{tikzpicture}
  [scale=.8,auto=left,  every node/.style={circle,fill=blue!20}]

  \node[draw=black, fill=white,circle]  (g1) at (1,5) {1};
  \node[draw=black, fill=white,circle]  (g2) at (3,5)  {2};
  \node[draw=none,fill=none] (g3) at (5,5) {$\cdots$};
  \node[draw=black, fill=white,circle]  (g4) at (7,5)  {$m$};
  \node[draw=black, fill=white,rectangle] (b1) at (1,3) {$m+1$};
  \node[draw=black, fill=white,rectangle] (b2) at (3,3)  {$m+2$};
 \node[draw=none,fill=none] (b3) at (5,3) {$\cdots$}; 
  \node[draw=black, fill=white,rectangle] (b4) at (7,3)  {$2m$};
  \foreach \from/\to in {g1/g2,g2/g3,g3/g4,g1/b1,g2/b2, g4/b4, b1/b2, b2/b3, b3/b4}
    \draw (\from) -- (\to);
\end{tikzpicture}
\end{center}
\caption{The graph used to generate the data in Section~\ref{simulation}. There are $m=p/2$ Gaussian or Poisson nodes, shown as circles, and $m=p/2$ Bernoulli nodes, shown as rectangles. }
\label{NEWFIG}
\end{figure}

To sample from the joint density $p( {x})$ in \eqref{joint} without calculating the log-partition function $A$, we employ a Gibbs sampler, as in \citet{lee2012}. Briefly, we iterate through the nodes, and sample from each node's conditional distribution. To ensure independence, after a burn-in period of 3000 iterations, we select samples from the chain 500 iterations apart from each other.

\subsection{Probability of Successful Neighbourhood Recovery}\label{probrecovery}

In Section~\ref{subsec:jd} we saw that the probability of successful neighbourhood recovery for  neighbourhood selection  converges to 1 exponentially fast with the sample size. And in Section~\ref{selection} we saw that the estimates from the Gaussian nodes are superior to those from the Bernoulli nodes, in the sense that a smaller sample size is needed in order to achieve a given probability of successful recovery. We now verify those findings empirically.  Here, successful neighbourhood recovery is defined to mean that the estimated and true edge sets of a graph or a sub-graph are identical.

We set $a=b=0\cdot$3 in \eqref{generate.theta} so that Assumption~\ref{thetamin} is satisfied, and generate one Gaussian-Bernoulli graph for each of $p=60$, $p=120$, and $p=240$. We set $\alpha_{1s}=0$ and $\alpha_{2s}= -1$ in \eqref{cond:Gaussian} for Gaussian nodes, and $\alpha_{1s}=0$ for Bernoulli nodes \eqref{cond:Bernoulli}.  For each graph, $100$ independent data sets are drawn from the Gibbs sampler. We perform neighbourhood selection using the estimator from \eqref{pl.weighted}, with the tuning parameter $\lambda_n$  set to be a constant $c$ times $\{\log (p)/n\}^{1/2}$, so that it is on the scale required by Assumption~\ref{tuning.range}, as illustrated in Remark~\ref{remark3}.

In order to achieve successful neighbourhood recovery as the sample size increases, the value of $c$ must be in a range matching the requirement of Assumption~\ref{tuning.range}. {We explored a range of values of $c$, and in Fig.~\ref{fig1} we show the probability of successful neighbourhood recovery for $c=2.6$.}
For ease of viewing, we display separate empirical probability curves for the Gaussian-Gaussian, Bernoulli-Bernoulli, and Bernoulli-Gaussian subgraphs. Panels (a) and (b) are estimates obtained by regressing the Gaussian nodes onto the others, and panels (c) and (d) are the estimates from regressing the Bernoulli nodes onto the others. 
We see that the probability of successful recovery  increases to 1 once the scaled sample size exceeds the threshold required in Assumption~\ref{n} and Corollary~\ref{col1}. Furthermore, {panels (b) and (c)} agree with the conclusions of Section~\ref{selection}:   neighbourhood recovery using the regression of a Gaussian node onto the others requires fewer samples than recovery using the regression of a Bernoulli node onto the others.

\begin{figure}[t]
   \centering
   \subfigure{\includegraphics[scale=0.27]{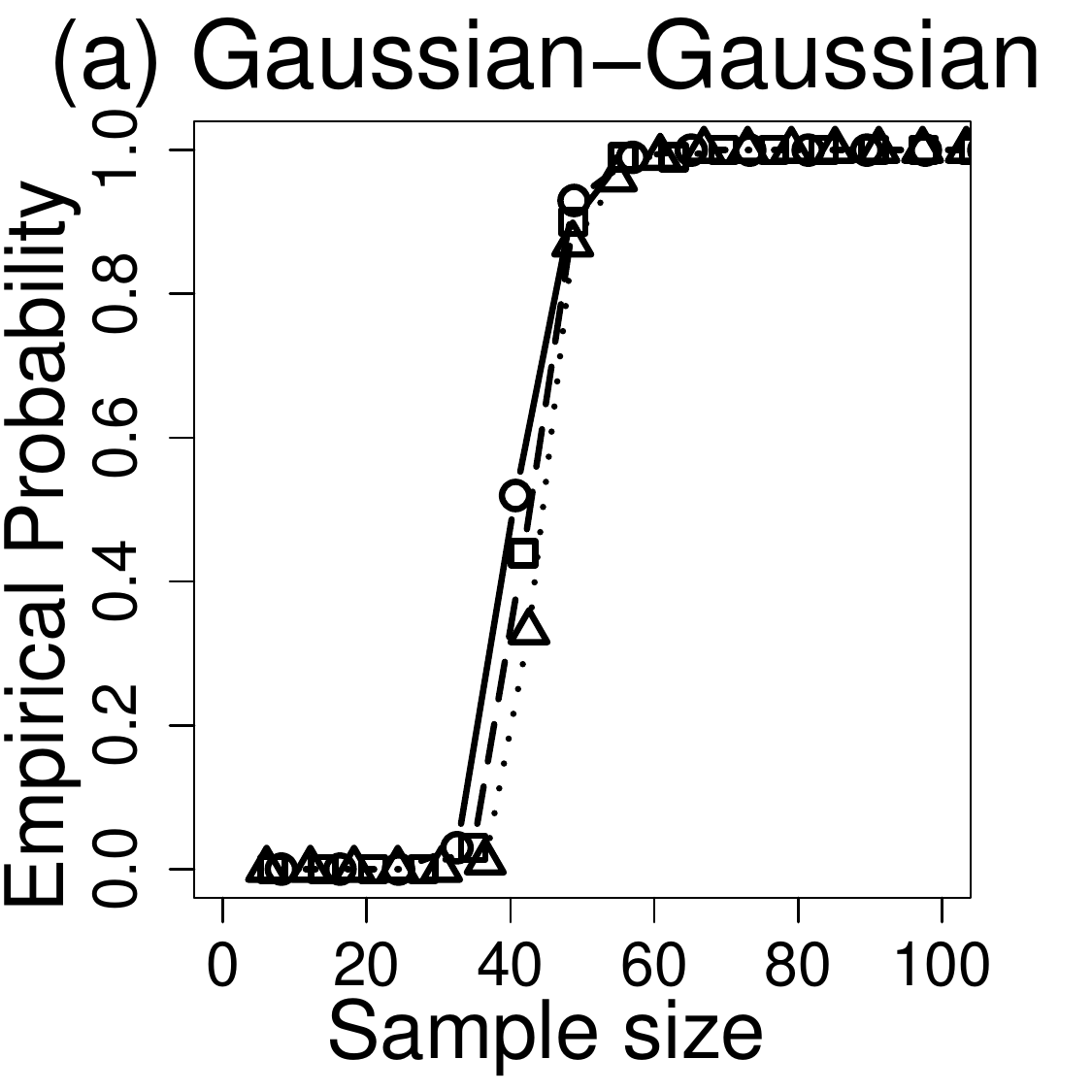}}
   \subfigure{\includegraphics[scale=0.27]{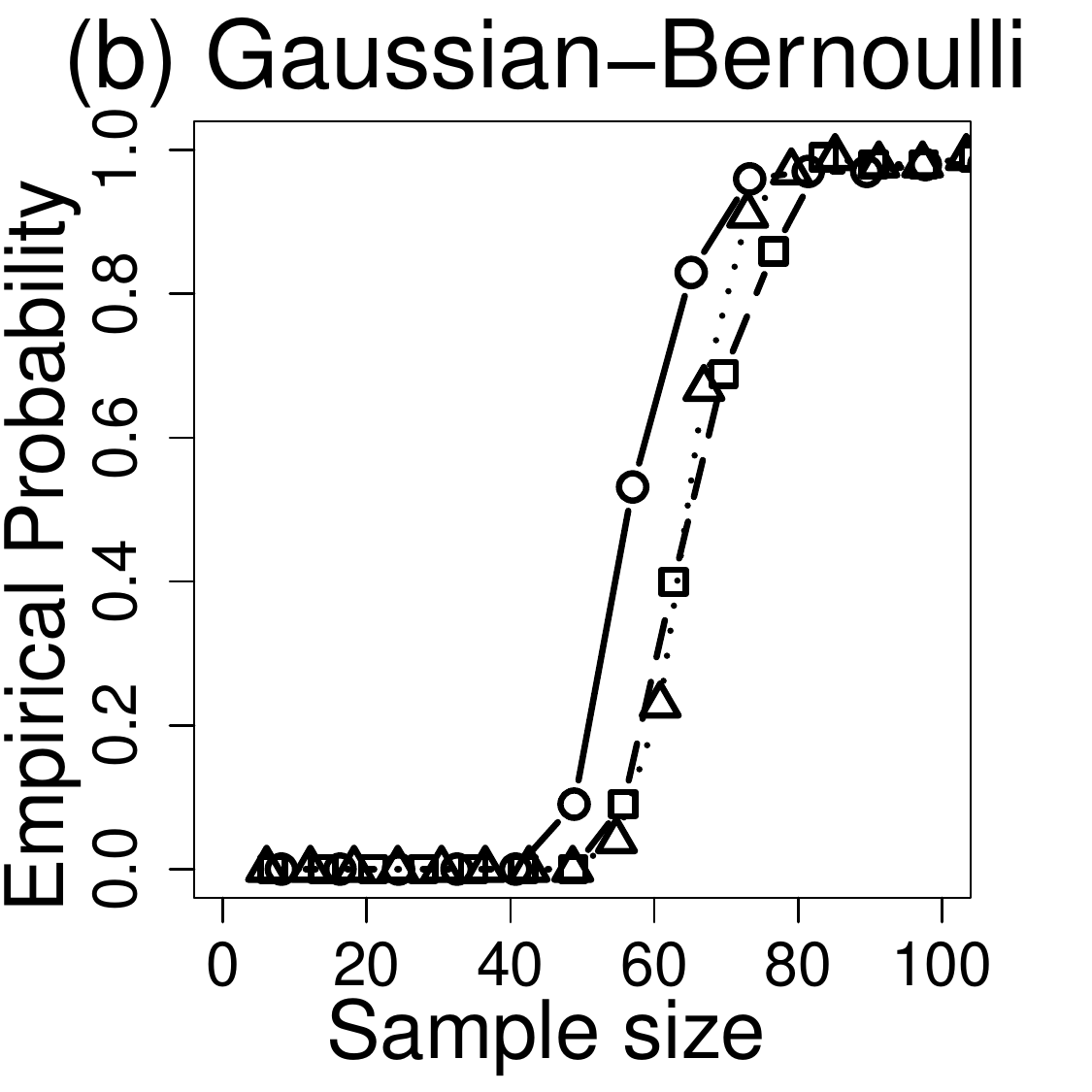}}
   \subfigure{\includegraphics[scale=0.27]{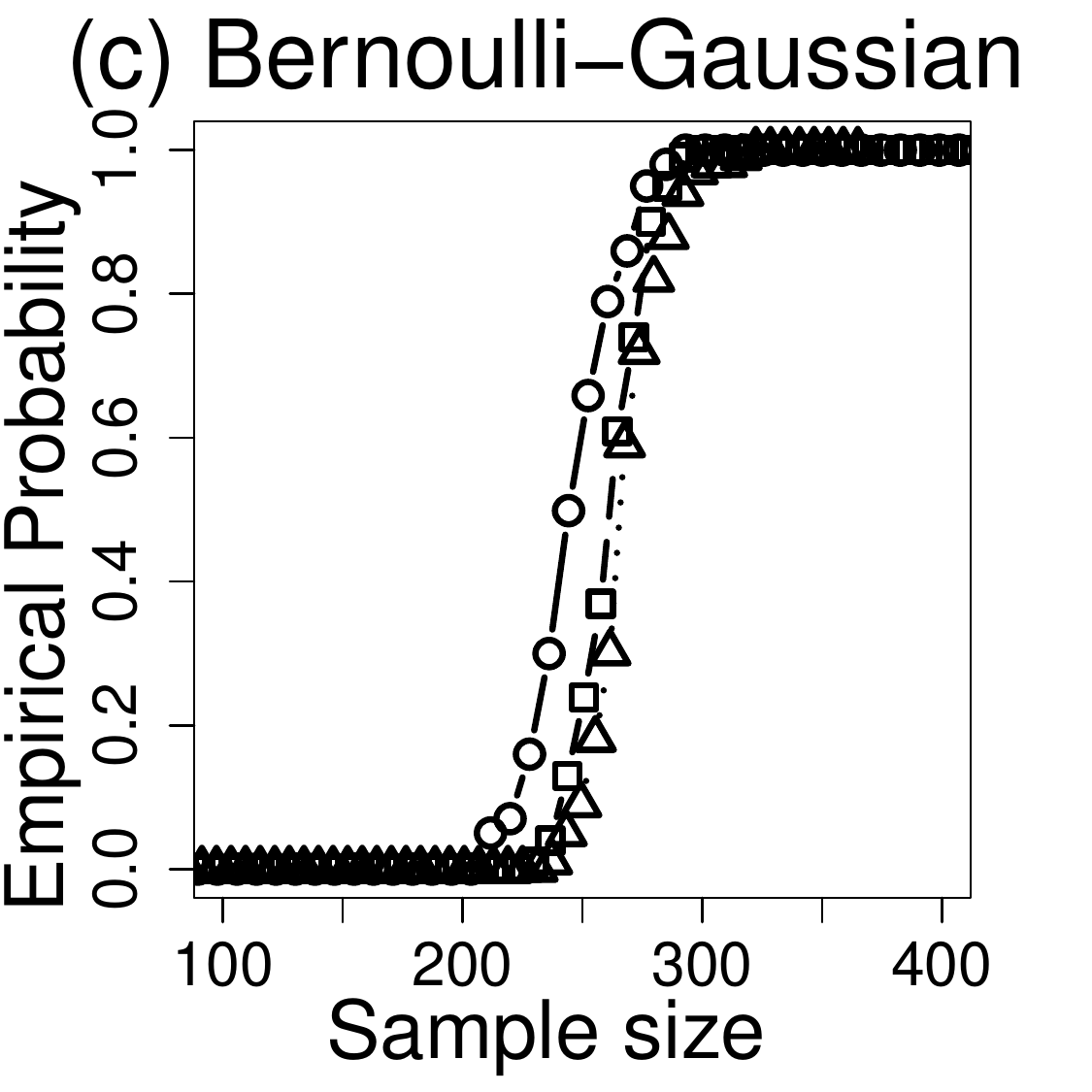}}
   \subfigure{\includegraphics[scale=0.27]{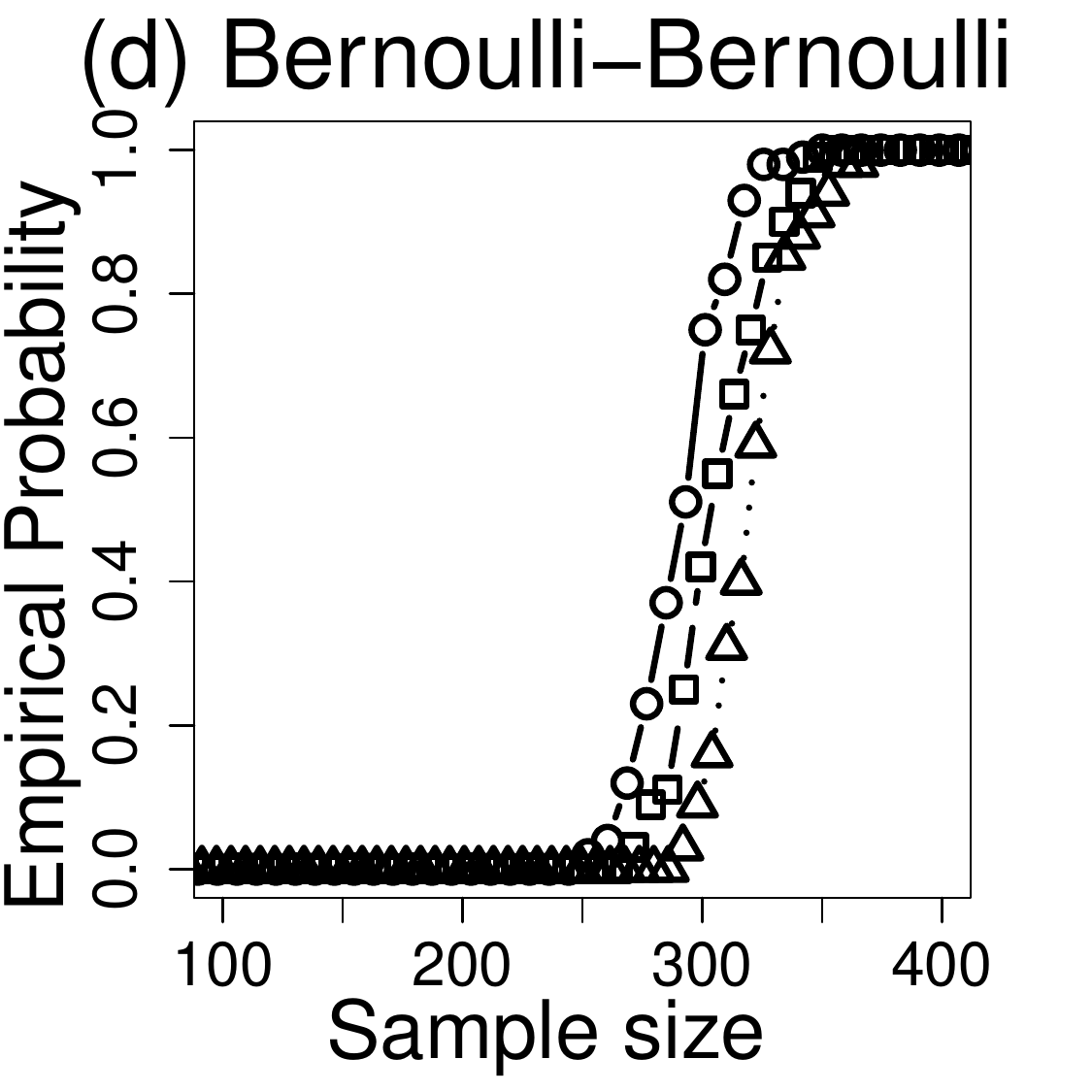}}
\caption{Probability of successful neighbourhood recovery, $y$-axis, as a function of  scaled sample size $n/\{3 \log(p)\}$, $x$-axis, for the set-up of Section~\ref{probrecovery}. The  curves are empirical probabilities of successful neighbourhood recovery for graphs with $60$  (\protect\includegraphics[height=0.5em]{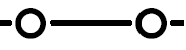}), $120$  (\protect\includegraphics[height=0.5em]{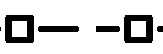}), and $240$ nodes (\protect\includegraphics[height=0.5em]{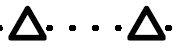}),  averaged over $100$ independent data sets.  The tuning parameter is set to be 2$\cdot$6$\{\log (p) /n\}^{1/2}$. The title of each panel indicates the subgraph for which the recovery probability is displayed, and the first word in the title indicates the node type that was regressed in order to obtain the subgraph estimate. For instance, panel (b) displays probability curves for edges between Gaussian and Bernoulli nodes that are estimated from the $\ell_1$-penalized linear regression of Gaussian nodes. Panel (c) displays the same quantity, estimated via an $\ell_1$-penalized logistic regression of the Bernoulli nodes.}
\label{fig1}
 \end{figure}

\subsection{Comparison to Competing Approaches}\label{TP}

In this section, we compare the proposed method to alternative approaches on a Gaussian-Bernoulli graph. We limit the number of nodes to $p=40$ in order to facilitate comparison with the computationally intensive approach of \citet{lee2012}.  We generate $100$ random graphs  with $a=$0$\cdot$3 and $b=$0$\cdot$6 in (\ref{generate.theta}), and we set $\alpha_{1s}=0$ and $\alpha_{2s}= -1$ in \eqref{cond:Gaussian} for Gaussian nodes and $\alpha_{1s}=0$ for Bernoulli nodes \eqref{cond:Bernoulli}. Twenty independent samples of $n=200$ observations are generated from each graph. We evaluate the performance of each approach by computing  the number of correctly estimated edges as a function of the number of estimated edges in the graph. Results are averaged over 20 data sets from each of 100 random graphs, for a total of 2000 simulated data sets.

Seven approaches are compared in this study: 1) our proposal for neighbourhood selection in the mixed graphical model;  2) penalized maximum likelihood estimation in the mixed graphical model \citep{lee2013}; 3)  weighted $\ell_1$-penalized regression in the mixed graphical model, as proposed by \citet{cheng2013}; 4) graphical random forests \citep{fellinghauer2013};  5) neighbourhood selection in the Gaussian graphical model \citep{meinshausen2006}, where we use an $\ell_1$-penalized linear regression to estimate the neighbourhood of all nodes; 6) the graphical lasso \citep{friedman2008}, which treats all features as Gaussian; and 7) neighbourhood selection in the Ising model \citep{ravikumar2010}, where we use $\ell_1$-penalized logistic regression on all nodes after dichotomizing the Gaussian nodes by their means. The first four methods are designed for mixed graphical models, with \citet{lee2012} and \citet{cheng2013} specifically proposed for Gaussian-Bernoulli networks. In contrast, the last three methods ignore the presence of mixed node types. For methods based on neighbourhood selection, we use the union rule of \citet{meinshausen2006} to reconstruct the edge set from the estimated neighbourhoods, with one exception: to estimate the Gaussian-Bernoulli edges for our proposed method, we use the estimates  from the Gaussian nodes, as suggested by the theory developed in Section~\ref{selection}. 

Due to its high computational cost, the method of \citet{lee2012} is run on 250 data sets from 50 graphs rather than 2000 data sets from 100 graphs. 

\begin{figure}[t]
   \centering
   \subfigure{\includegraphics[scale=0.22 ]{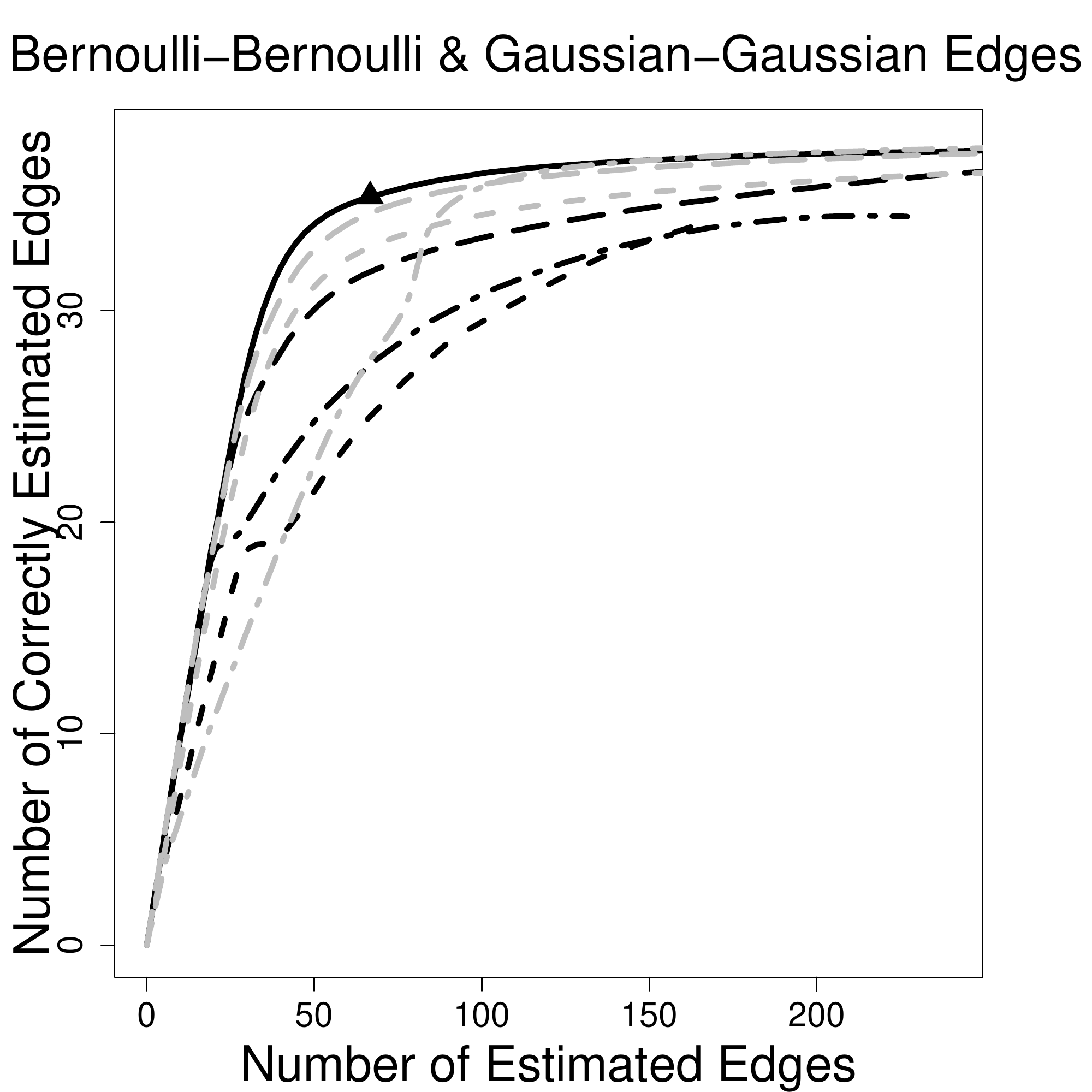}}\qquad
   \subfigure{\includegraphics[scale=0.22]{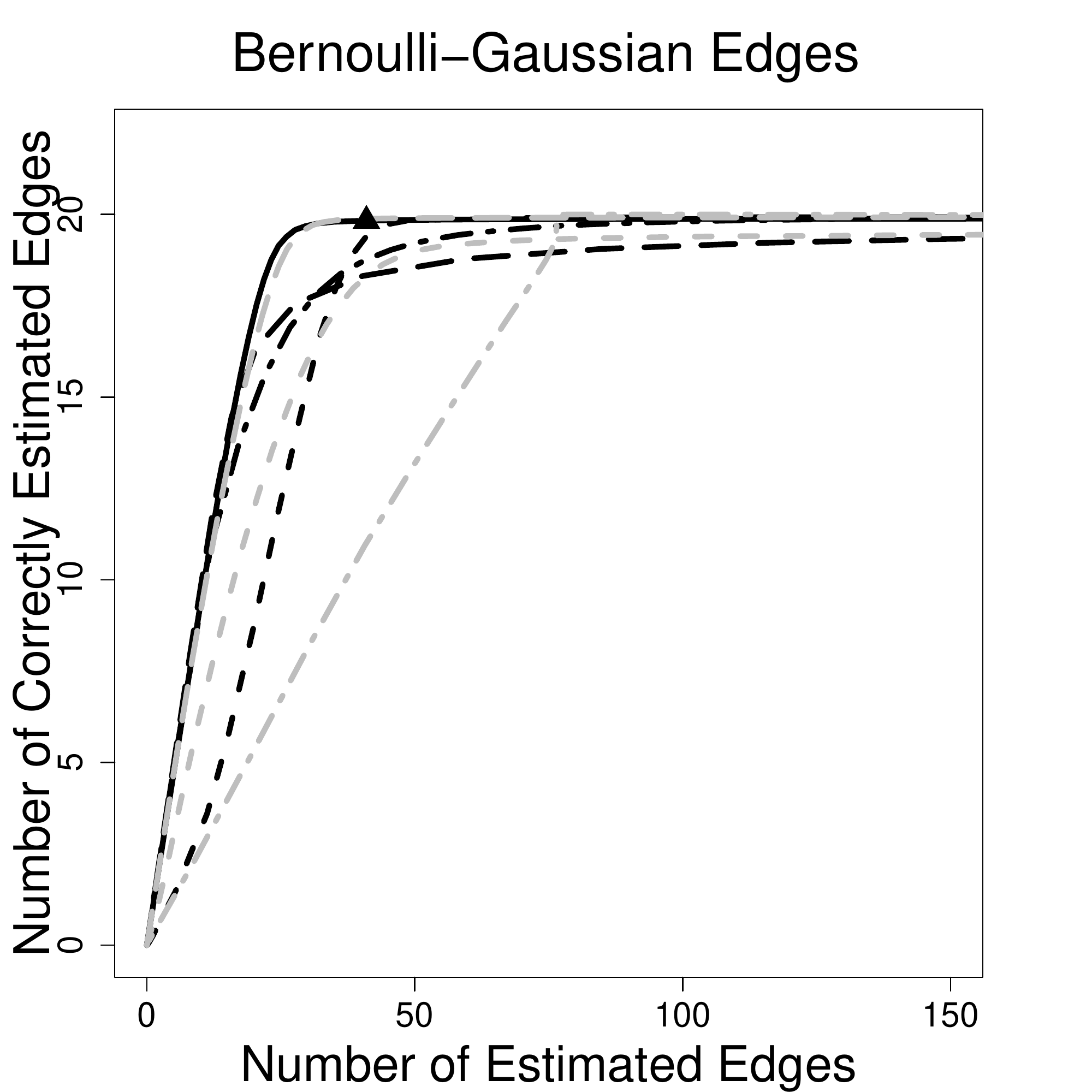}}\\
\caption{ {Simulation results for the Gaussian-Bernoulli graph, as described in Section~\ref{TP}. The number of correctly estimated edges is displayed as a
function of the number of estimated edges, for a range of tuning parameter values in a graph with $p=40$ and $n=200$. The left panel corresponds to edges between nodes of the same type, while the right panel corresponds to the edges between Gaussian and Bernoulli nodes. 
The curves within each panel represent our proposal (\protect\includegraphics[height=0.5em]{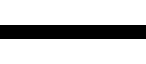}), \citet{lee2012} (\protect\includegraphics[height=0.5em]{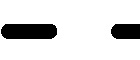}), \citet{cheng2013} (\protect\includegraphics[height=0.5em]{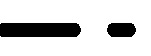}),  \citet{fellinghauer2013} (\protect\includegraphics[height=0.5em]{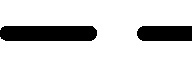}), neighbourhood selection in the Gaussian graphical model  (\protect\includegraphics[height=0.5em]{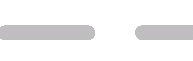}), neighbourhood selection in the Ising model (\protect\includegraphics[height=0.5em]{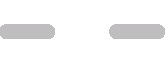}), and the graphical lasso (\protect\includegraphics[height=0.5em]{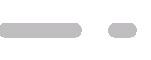}). The black triangle shows the average performance of our proposed approach with the tuning parameter selected by the Bayesian information criterion (Section~\ref{tuning}).} }
\label{fig2}
\end{figure}

The left-hand panel  of Fig.~\ref{fig2} displays results for Bernoulli-Bernoulli and Gaussian-Gaussian edges, and the right-hand panel displays  results for edges between Gaussian and Bernoulli nodes.

The curves in Fig.~\ref{fig2} correspond to the estimated graphs as the tuning parameter for each method is varied. Recall from Section~\ref{tuning} that our proposal involves a tuning parameter $\lambda_n^{G}$ for the $\ell_1$-penalized linear regressions of the Gaussian nodes onto the others, and a tuning parameter $\lambda_n^{B}$ for the $\ell_1$-penalized logistic regressions of the Bernoulli nodes onto the others. 
The triangles in Fig.~\ref{fig2} show the average performance of  our proposed method with the tuning parameters $\hat{\lambda}_n^{B}$ and $\hat{\lambda}_n^{G}$ selected using  \textsc{bic} summed over the Bernoulli and Gaussian nodes, respectively, as described in Section~\ref{tuning}. This choice yields good precision ($52\%$) and recall  ($95\%$) for edge recovery in the graph. To obtain the curves in Fig.~\ref{fig2}, we set $\lambda_n^{B} = (\hat{\lambda}_n^{B}/\hat{\lambda}_n^{G}) \lambda_n^{G}$, and varied the value of $\lambda_n^G$.

In general, our proposal outperforms the competitors, which is expected since it assumes the correct model. Though the proposals of \citet{lee2012} and \citet{cheng2013}  are intended for a Gaussian-Bernoulli graph, they attempt to capture more complicated relationships than in \eqref{joint}, and so they perform worse than our proposal. On the other hand, the graphical random forest of \citet{fellinghauer2013} performs reasonably well, despite the fact that it is a nonparametric approach. Neighbourhood selection in the Gaussian graphical model performs closest to the proposed method in terms of edge selection. 
The Ising model suffers substantially due to dichotomization of the Gaussian variables.  The graphical lasso algorithm experiences serious violations to its multivariate Gaussian assumption, leading to poor performance.

\subsection{Application of Selection Rules for Mixed Graphical Models}\label{PB}

In Section~\ref{TP}, in keeping with the results of Section~\ref{selection}, we always used the estimates from the Gaussian nodes in estimating an edge between a Bernoulli node and a Gaussian node.
 Here we consider a mixed graphical model of Poisson and Bernoulli nodes. In this case, the selection rules in Section~\ref{selection} are more complex, and whether it is better to use a Poisson node or a Bernoulli node in order to estimate a Bernoulli-Poisson edge depends on the true parameter values in Table~\ref{tab2}. 

We generate a graph with $p=80$ nodes as follows: $a=0.8$ and $b=1$ in (\ref{generate.theta}), $\alpha_{1s}=-3$ for $s=1, \ldots, 20$ and $\alpha_{1s}=0$ for $s=21, \ldots, 40$ for the Poisson nodes, and $\alpha_{1s}=0$ for the Bernoulli nodes.  This guarantees that $b_P$ in \eqref{b:Poisson} is smaller than 1 for the first half of the Poisson nodes, and larger than 2 for the second half, due to the structure of the graph from Fig.~\ref{NEWFIG}. In order to estimate a Bernoulli-Poisson edge, we will use the estimates from the Poisson nodes if $b_P<1$ and  the estimates from the Bernoulli nodes if $b_P>2$, according to the selection rules in Table~\ref{tab2}.

We compare the performance of our proposed approach using the selection rules in Table~\ref{tab2}, with the true and estimated parameters, to   our proposed approach using the union and intersection rules (Section~\ref{selection}), as well as the graphical random forest of \citet{fellinghauer2013}. To prevent over-shrinkage of the parameters for estimation of $b_P$ in \eqref{b:Poisson}, we set $\lambda_n$ in \eqref{pl} to equal 0$\cdot$5 times the value from the Bayesian information criterion for each node type. We present only the results for Poisson-Bernoulli edges, as the selection rules in Section~\ref{selection} apply to edges between nodes of different types.

\begin{figure}[t]
   \centering
   \subfigure{\includegraphics[scale=0.25]{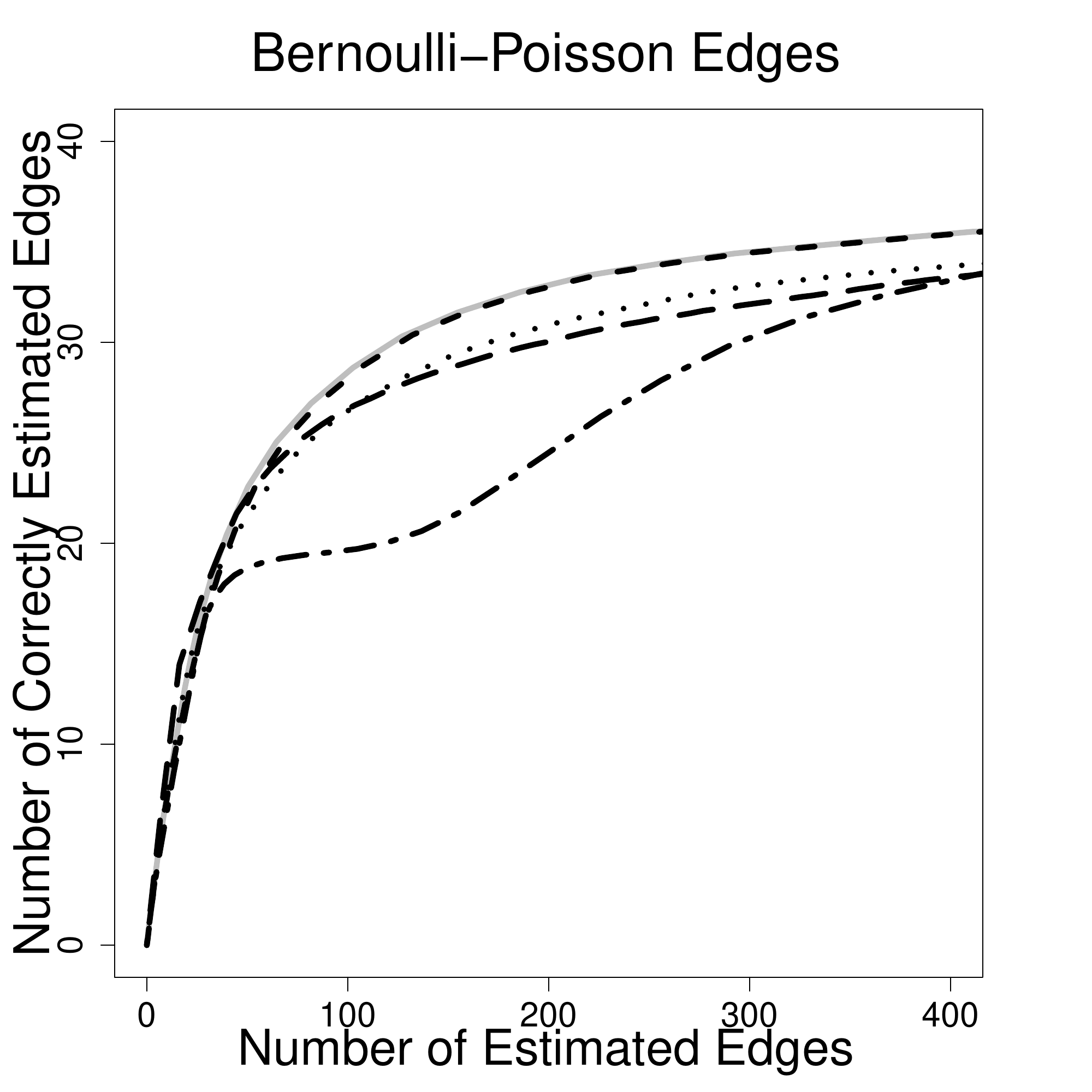}}
\caption{Summary of the simulation results for the Poisson-Bernoulli graph, as described in Section~\ref{PB}. The number of correctly estimated edges is displayed as a
function of the number of estimated edges, for a range of tuning parameter values in a graph with $p=80$ nodes from $n=200$ observations. 
The curves represent the selection rule from Section~\ref{selection} with the true parameters (\protect\includegraphics[height=0.5em]{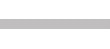}), the selection rule from Section~\ref{selection} with estimated parameters  (\protect\includegraphics[height=0.5em]{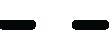}), the union rule (\protect\includegraphics[height=0.5em]{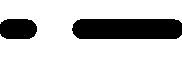}), the intersection rule (\protect\includegraphics[height=0.5em]{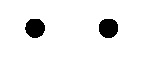}), and the method from \citet{fellinghauer2013} (\protect\includegraphics[height=0.5em]{GRaFo.jpg}).}
\label{fig3}
\end{figure}

Results are shown in Fig.~\ref{fig3}, averaged over 20 samples from each of 25 random graphs. The selection rules proposed in Section~\ref{selection} clearly outperform the commonly-used union and intersection rules. 
The curve for the selection rule from Section~\ref{selection} using the estimated parameter values is almost identical to the curve using the true parameter values, which indicates that in this case the quantity  $b_P$ is accurately estimated for each node.   The graphical random forest slightly outperforms our proposal when few edges are estimated, but performs worse when the estimated graph includes more edges. This may indicate that as the graph becomes less sparse, the nonparametric graphical random forest approach suffers from insufficient sample size.

\section{Discussion}\label{discussion}

In Section~\ref{compatibility} we saw that a stringent set of restrictions is required for compatibility or strong compatibility of the node-conditional distributions given in \eqref{cond:Gaussian}--\eqref{cond:exp}. These restrictions limit the theoretical flexibility of the conditionally-specified mixed graphical model, especially when modeling unbounded variables. It is possible that by truncating unbounded variables, we may be able to circumvent some of these restrictions. Furthermore, the  model \eqref{joint} assumes pairwise interactions in the form of $x_s x_t$, which can be seen as a second-order approximation of the true edge potentials in \eqref{PGM}. We can relax this assumption by fitting non-linear edge potentials using semi-parametric penalized regressions, as in \citet{voorman2014}.

{
\texttt{R} code for replicating the numerical results in this paper is available at \url{https://github.com/ChenShizhe/MixedGraphicalModels}. 
}

\section*{Acknowledgement}
{We thank Jie Cheng, Bernd Fellinghauer, and Jason Lee for providing code and responding to our inquiries.}  
This work was partially supported by National Science Foundation grants to D.W. and A.S., National Institutes of Health grants to D.W. and A.S., and a Sloan Fellowship to D.W.

\bibliographystyle{chicago}
\bibliography{paper-ref}
\appendix

\newpage
\section{Appendix}

\subsection{A Proof for Proposition~\ref{compat:prop}}\label{proof:compat}
\begin{proof}
First of all, it is easy to see that if $\theta_{st}=\theta_{ts}$, then any function $g$ such that
\begin{equation}
g(\utilde{x}) \propto \exp \left\{ \sum\limits_{s=1}^p f_s(x_s) + \frac{1}{2} \sum\limits_{s=1}^{p}\sum\limits_{t \neq s} \theta_{ts} x_s x_t\right\}
\label{compat:proof}
\end{equation}
is capable of generating the conditional densities in \eqref{cond}   as long as the function $g$ is integrable with respect to $x_s$ for $s = 1,\ldots, p$. The function $g$ can be decomposed as
\begin{equation*}
g(\utilde{x}) \propto \exp \left\{  f_s(x_s) + \frac{1}{2}  \sum\limits_{t: t \neq s}{(\theta_{ts}+\theta_{st})} x_s x_t\right\} \exp\left\{\sum\limits_{t\neq s}f_t(x_t)+
 \frac{1}{2}  \sum\limits_{t: t\neq s, \ j: j\neq s,\  j\neq  t} \theta_{tj} x_j x_t \right\},
\end{equation*}
so the integrability of the conditional density $p(x_s\mid x_{-s})$ guarantees the integrability of $g$ with respect to $x_s$. Therefore, the conditional densities of the form in \eqref{cond} are compatible if $\theta_{ts}=\theta_{st}$.

We now prove that any function $h$ that is capable of generating the conditional density in \eqref{cond}   is in the form \eqref{compat:proof}. The following proof is essentially the same as that in \citet{besag1974}. Suppose $h$ is a function that is capable of generating the conditional densities. Define 
$P(\utilde{x})=\log \{ h(\utilde{x})/h(\utilde{0})\}$, where $\utilde{0}$ can be replaced by any interior point in the sample space. 

By definition, $P(\utilde{0})=\log\{h(0)/h(0) \}=0$. Therefore,  $P$ can be written in the general form
$$ P(\utilde{x})= \sum\limits_{s=1}^p x_s G_s (x_s) + \sum\limits_{t\neq s}   \frac{G_{ts} (x_t,x_s)}{2} x_t x_s + \sum\limits_{t \neq s,t\neq j,j\neq s}  \frac{G_{tsj} (x_t,x_s,x_j)}{6} x_t x_s x_j + \cdots \ , $$
where we write the function $P$ as the sum of interactions of different orders. Note that the factor of $1/2$ is due to $G_{st}(x_s,x_t)=G_{ts}(x_t,x_s)$; similar factors apply for higher-order interactions. 
Recalling that we assume $h$ is capable of generating the conditional density $p(x_s\mid x_{-s})$, from Definition~\ref{compat:defn}   we know that 
$$P(\utilde{x})-P(\utilde{x}_{s}^0)= {\log \left\{ \frac{h(x)/\int h(x) dx_s }{h(x_s^0)/\int h(x) dx_s } \right\}} = \log \left\{ \frac{p(x_s\mid \utilde{x}_{-s})}{p(0 \mid \utilde{x}_{-s}) } \right\}, $$
where $\utilde{x}_{s}^0={(x_1,\ldots, x_{s-1}, 0, x_{s+1},\ldots, x_{p})}^\T$ and $p(x_s\mid x_{-s})$ is the conditional density  in \eqref{cond}. It follows that 
 \begin{equation}
 \log \left\{ \frac{p(x_s\mid \utilde{x}_{-s})}{p(0 \mid \utilde{x}_{-s}) } \right\} = P(\utilde{x})-P(\utilde{x}_{s}^0)= x_s \left(G_s (x_s) + \sum\limits_{t: t\neq s} x_t G_{ts} (x_t,x_s)+\cdots \right).
\label{Q:xs}
\end{equation}

Letting $x_t=0$ for $t \neq s$ in \eqref{Q:xs} and using the form of the conditional densities in \eqref{cond}  , we have
\begin{equation}
x_s G_s (x_s) = f_s(x_s)-f_s(0).
\label{first.order}
\end{equation}
Here we set $f_s(0)=0$ since $f_s(0)$ is a constant. For the second-order interaction $G_{ts}$, we let $x_j=0$ for $j\neq t, j\neq s$ in (\ref{Q:xs}):
\begin{equation*}
x_s G_s(x_s) + x_s x_t G_{ts} (x_t, x_s) = \theta_{st} x_t x_s + f_s(x_s).
\end{equation*}
Similarly, applying the previous argument on $P(\utilde{x})-P(\utilde{x}_{t}^0)$, we have
\begin{equation*}
x_t G_t(x_t) + x_s x_t G_{st} (x_s, x_t) = \theta_{ts} x_t x_s + f_t(x_t).
\end{equation*}
Therefore, if $\theta_{st}=\theta_{ts}$, then by \eqref{first.order},
\begin{equation*}
G_{st} (x_s, x_t) =G_{ts} (x_t, x_s)= \theta_{st}.
\end{equation*}
It is easy to show that, by setting $x_k=0$ $ (k\neq s, k\neq t, k\neq j)$ in \eqref{Q:xs}, the third-order interactions in $P(\utilde{x})$ are zero. Similarly, we can show that fourth-and-higher-order interactions are zero. Hence, we arrive at the following formula for $P$:
\begin{equation*}
P(\utilde{x})= \sum\limits_{s=1}^p  f_s(x_s) + \frac{1}{2}  \sum\limits_{s=1}^{p}\sum\limits_{t \neq s}\theta_{ts} x_s x_t.
\end{equation*}
Furthermore, $P(\utilde{x})=\log \{ h(\utilde{x}) /h(\utilde{0})\}$, so the function $h$ takes the form 
\begin{equation*}
h(\utilde{x}) \propto \exp \{P(\utilde{x})\} = \exp \left\{ \sum\limits_{s=1}^p  f_s(x_s)+\frac{1}{2}  \sum\limits_{s=1}^{p}\sum\limits_{t \neq s}	 \theta_{ts} x_s x_t \right\},
\end{equation*}
which is the same as \eqref{compat:proof}.
\end{proof}

\subsection{A Proof for Lemma~\ref{lmm:compat}}

\begin{proof} 
We first prove the claim about compatibility.

It is easy to verify that the conditional densities are integrable given the restrictions with asterisks in Table~\ref{tab1}. Therefore, these restrictions are sufficient for compatibility.

We now show that the restrictions with a dagger in Table~\ref{tab1}   are necessary, by investigating each of the distributions in Equations~\ref{cond:Gaussian} to \ref{cond:exp}. Note that we have limited our discussion to the case where all conditional densities are non-degenerate. Recall that we refer to the type of distribution of $x_s$ given the others as the node type of $x_s$. 

 Suppose that $x_s$ is exponential, as in \eqref{cond:exp}. By definition of the exponential distribution, it must be that $\eta_s=\alpha_{1s}+\sum_{t\neq s}\theta_{ts} x_t <0$. This leads to the following restrictions on $\theta_{ts}$: 1) When $x_t$ is Poisson or exponential, it must be that $\theta_{ts}\leq 0$ since $x_t$ is unbounded in $\mathcal{R}^{+}$.  2) When $x_t$ is Gaussian, then it must be that $\theta_{ts}=0$ since $x_t$ is unbounded on the real line.  3) Let  $I$ denote the indices of the Bernoulli variables. Then it must be that $\sum_{t \in I} |\theta_{ts}|< -\alpha_{1s}$ so that $\eta_s<0$ for any combination of $\{x_t\}_{t \in I}$. 

Suppose that $x_s$ is Gaussian, as in \eqref{cond:Gaussian}. Then $\alpha_{2s}$ has to be negative for the conditional density to be well-defined. 

Suppose that $x_s$ is Bernoulli or Poisson, as in Equations~\ref{cond:Bernoulli} or \ref{cond:Poisson}.  We can see that there are no restrictions on $\eta_s$, and thus no restrictions on $\theta_{ts}$ or $\alpha_{1s}$.

Hence, the conditions with a dagger in Table~\ref{tab1} are necessary for the conditional densities in Equations~\ref{cond:Gaussian} to \ref{cond:exp}  to be compatible.

We now show the statement about strong compatibility.

We first prove the necessity of the conditions in Table~\ref{tab1}. {Recall from Definition~\ref{compat:defn}  that in order for strong compatibility to hold, compatibility must hold, and any function $g$ that satisfies \eqref{g}  must be integrable.} 
 {Therefore, we derive} the necessary conditions for $g$ to be integrable.

For Gaussian nodes that are indexed by $J$, recall that $\Theta_{JJ}$ is defined as in \eqref{Thetajj}. Then, from properties of the multivariate Gaussian distribution, $\Theta_{JJ}$ must be negative definite if the joint density exists and is non-degenerate.

Let $x_1$ be a Poisson node, and $x_2$ an exponential node. Consider the ratio
\begin{equation*}
G(x_1, x_2) = \frac{g(x_1,x_2,0,...,0)}{g(0,0,0,...,0)} = \exp\{ -\log(x_1!) + \alpha_{11} x_1 +\theta_{12} x_1 x_2 + \alpha_{12} x_2\},
\label{ratio}
\end{equation*}
where $g$ is the function in \eqref{g}  . It is not hard to see that integrability of $G(x_1,x_2)$ is a necessary condition for integrability of the joint density. Summing over $x_1$ yields
\begin{equation*}
\sum\limits_{i=0}^{\infty}G(i,x_2)=\exp\{ \alpha_{12} x_2 + \exp(\alpha_{11}+\theta_{12}x_2)\}.
\end{equation*}
Therefore, if  $\sum_{i=0}^{\infty}G(i,x_2)$ is integrable with respect to the exponential node $x_2$, it must be the case that $\theta_{12}=\theta_{21} \leq 0$. Following a similar argument, the edge potential $\theta_{12}=\theta_{21}$ has to be non-positive when $x_2$  is Poisson, and zero when $x_2$  is Gaussian. 

A similar argument to the one just described can be applied to the exponential nodes. Such an argument reveals that conditions on the edge potentials of the exponential nodes that are necessary for $g$ to be a density are those stated in Table~\ref{tab1}.

For Bernoulli nodes,  no restrictions on the edge potentials are necessary in order for $g$ to be a density. 

Therefore, the conditions listed in Table~\ref{tab1}   are necessary for the conditional densities in Equations~\ref{cond:Gaussian} to \ref{cond:exp}  to be strongly compatible. 

We now show that the conditions listed in Table~\ref{tab1}  are sufficient for the conditional densities to be strongly compatible. We can restrict the discussion by conditioning on the Bernoulli nodes, since integrating over Bernoulli variables yields a mixture of finite components. Table~\ref{tab1}   guarantees that the Gaussian nodes are isolated from the Poisson and exponential nodes, as the corresponding edge potentials  are zero. From Table~\ref{tab1}, the distribution of Gaussian nodes is integrable, as $\Theta_{JJ}$ in \eqref{Thetajj}   is negative definite. Now we consider the Poisson and exponential nodes. For these,
\begin{equation*}
\exp \left\{ \sum\limits_{s=1}^p f_s(x_s) + \frac{1}{2}\sum\limits_{s=1}^{p}\sum\limits_{t \neq s} \theta_{st} x_s x_t  \right\} \leq  \exp \left\{ \sum\limits_{s=1}^p f_s(x_s)  \right\}
%\label{joint:PE}
\end{equation*}
 since $\theta_{st} x_s x_t \leq 0$. So the joint density is dominated by the density of a model with no interactions, which is integrable since  $\alpha_{1t}$ for an exponential node $x_t$ is non-positive; this follows from the fact that $ 0 \leq \sum_{s \in I}|\theta_{st}| < - \alpha_{1t}$, as stated in Table~\ref{tab1}.  Therefore, the conditions listed in Table~\ref{tab1}  are also sufficient for the conditional densities in Equations~\ref{cond:Gaussian} to \ref{cond:exp}   to be strongly compatible.

\end{proof}

\subsection{A Proof for Theorem~\ref{thm}}\label{proof.thm}

\begin{proof}

Our proof is similar to that of Theorem~1 in \citet{yang2012}, and is based on the primal-dual witness method \citep{wainwright2009}. The primal-dual witness method studies the property of $\ell_1$-penalized estimators by investigating the sub-gradient condition of an oracle estimator. We assume that readers are familiar with the primal-dual witness method; for reference, see \citet{ravikumar2011} and \citet{yang2012}. Without loss of generality, we assume $s=p$ to avoid cumbersome notation. For other values of $s$, a similar proof holds with more complicated notation. Below we denote $\Theta_p$ as $\theta$, $\eta_p$ as $\eta$, and $\ell_p$ as $\ell$ for simplicity. We also denote the neighbours of $x_p$, $N(x_p)$, as $N$. 

The sub-gradient condition for \eqref{pl}   with respect to $(\theta^\T,\alpha_{1p})^\T$ is 
\begin{equation}
-\nabla\ell({\theta}, {\alpha}_{1p};  {X})+\lambda_n {Z}=0;  \quad {Z}_{t}= \text{sgn} ({\theta}_{t}) \quad \text{for} \  t < p; \quad {Z}_p=0,
\label{subgrad}
\end{equation}
where
 \begin{equation*}
 \text{sgn}(x)=\begin{cases}
 x/|x|, & x\neq 0,\\
\gamma \in [-1,1], & x=0.
\end{cases}
 \end{equation*}

We construct the oracle estimator $(\hat{\theta}_N^\T, \hat{\theta}_{\Delta}^\T, \hat{\alpha}_{1p})^\T$ as follows: first, let $\hat{\theta}_{\Delta}=0$ where $\Delta$ indicates the set of non-neighbours; second, obtain $\hat{\theta}_{N},\hat{\alpha}_{1p}$ by solving \eqref{pl}   with an additional restriction that $\hat{\theta}_{\Delta}=0$; third, set $\hat{Z}_{t}=\text{sgn}(\hat{\theta}_{t})$ for $t \in N$ and $\hat{Z}_p=0$; last, estimate $\hat{Z}_{\Delta}$ from \eqref{subgrad} by plugging in $\hat{\theta}, \hat{\alpha}_{1p}$ and $\hat{Z}_{\Delta^c}$. To complete the proof, we verify that $(\hat{\theta}_N^\T, \hat{\theta}_{\Delta}^\T, \hat{\alpha}_{1p})^\T$ and $\hat{Z}=(\hat{Z}_{N}^\T, \hat{Z}_{\Delta}^\T, 0)^\T$ is a primal-dual pair of \eqref{pl}   and recovers the true neighbourhood exactly.

 Applying the mean value theorem on each element of $\nabla \ell(\hat{\theta}, \hat{\alpha}_{1p};  {X} )$ in the subgradient condition \eqref{subgrad} gives
\begin{equation}
Q^*\bpm \hat{\theta}- {\theta}^* \\ \hat{\alpha}_{1p}-{\alpha}_{1p}^* \epm = -\lambda_n \hat{Z} +W^n+R^n,
\label{Taylorsubgrad}
\end{equation}
where $W^n = \nabla \ell (\theta^*, \alpha_{1p}^* ;  {X})$ is the sample score function evaluated at the true parameter $( {\theta^*}^\T, \alpha_{1p}^*)^{\T}$.  Recall that $Q^*=-\nabla^2 \ell(\theta^*, \alpha_{1p}^*; {X})$ is the negative Hessian of $\ell(\theta,\alpha_{1p}; {X})$ with respect to $(\theta^\T, \alpha_{1p})^\T$, evaluated at the true values of the parameters. In \eqref{Taylorsubgrad}, $R^n$ is the residual term from the mean value theorem, whose $k$th term is 
\begin{equation}
R^n_k = {[\nabla^2\ell(\bar{\theta}^k, \bar{\alpha}_{1p}^k ;  {X} )-\nabla^2 \ell(\theta^*, \alpha_{1p}^* ;  {X})]}^\T_k \bpm \hat{\theta}-\theta^* \\ \hat{\alpha}_{1p}-\alpha_{1p}^* \epm,
\label{eqn:R}
\end{equation}
where $\bar{\theta}^{k}$ denotes an intermediate point between ${\theta^*}$ and $ \hat{\theta}$, $\bar{\alpha}^{k}_{1p}$ denotes an intermediate point between $\alpha_{1p}^*$ and $\hat{\alpha}_{1p}$, and ${[\cdot]}^\T_k$ denotes the $k$th row of a matrix.

By construction, $\hat{\theta}_{\Delta}=0$. Thus, \eqref{Taylorsubgrad} can be rearranged as
\begin{equation}
\lambda_n \hat{Z}_{\Delta}=(W^n_{\Delta}+R^n_{\Delta})-Q^*_{\Delta \Delta^c}(Q^*_{\Delta^c \Delta^c})^{-1}(W^n_{\Delta^c}+R^n_{\Delta^c}-\lambda_n \hat{Z}_{\Delta^c}).
\label{strictd.eq}
\end{equation}
We obtain an estimator $\hat{Z}_{\Delta}$ by plugging $\hat{\theta}, \hat{\alpha}_{1p}$ and $\hat{Z}_{\Delta^c}$ into \eqref{strictd.eq}. To complete the proof, we need to verify strict dual feasibility,
\begin{equation}
\| \hat{Z}_{\Delta}\|_{\infty} < 1,
\label{strictdf}
\end{equation}
and sign consistency,
\begin{equation}
\text{sgn}(\hat{\theta}_{t})=\text{sgn}(\theta_{t}^*) \quad \text{for any} \  t \in N.
\label{signc}
\end{equation}

In \eqref{strictd.eq}, $ \max_{l\in \Delta} \| Q^*_{l\Delta^c} (Q^*_{\Delta^c \Delta^c})^{-1}\|_1 \leq 1-a$ by Assumption~\ref{irrep}. The following lemmas characterize useful concentration inequalities regarding $W^n$, $R^n$, and $\hat{\theta}_{N} - \theta^*_{N}$. Proofs of Lemmas~\ref{lemma.W} and \ref{lemma.R} are given in Sections~\ref{lemmaW.proof} and \ref{lemmaR.proof}, respectively. \\

\begin{lemma}
Suppose that 
$$ \frac{8(2-a)}{a}\{\delta_2 \kappa_2 \log (2p)/n\}^{1/2} \leq \lambda_n \leq \frac{2(2-a)}{a} \delta_2  \kappa_2 M, $$
where $\delta_2$ is defined in Proposition~\ref{prop.e2}, and $a$ and $\kappa_2$ are defined in Assumptions~\ref{irrep} and \ref{D}, respectively. 
 Then,
\begin{equation*}
pr\left( \left.\|W^n\|_\infty>\frac{a \lambda_n }{8-4a} \right| \xi_2,\xi_1 \right) \leq \exp (-c_3 \delta_3 n),
\end{equation*}
 where $\delta_3=1/(\kappa_2 \delta_2)$ and $c_3$ is some positive constant.
\label{lemma.W}
\end{lemma}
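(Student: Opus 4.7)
My plan is to write the score $W^n$ componentwise as a normalized sum of conditionally centred random variables, use the events $\xi_1$ and $\xi_2$ together with Assumption~\ref{D} to control their conditional variance and magnitude, apply a Bernstein-type inequality coordinate by coordinate, and finish with a union bound over the $p$ coordinates.

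First, I would compute $W^n$ explicitly. Differentiating $\ell(\theta,\alpha_{1p};X)=\tfrac1n\sum_i\{f_p(x_p^{(i)})+\sum_{t\neq p}\theta_t x_t^{(i)}x_p^{(i)}-D(\eta_p^{(i)})\}$ at the true parameters gives, for $t\neq p$,
$W^n_t=\tfrac1n\sum_{i=1}^n x_t^{(i)}\bigl\{x_p^{(i)}-D'(\eta_p^{*(i)})\bigr\}$,
and analogously $W^n_p=\tfrac1n\sum_i\{x_p^{(i)}-D'(\eta_p^{*(i)})\}$ for the intercept coordinate. Under the conditional model \eqref{cond}, $E\{x_p^{(i)}\mid x_{-p}^{(i)}\}=D'(\eta_p^{*(i)})$, so each summand has conditional mean zero given $x_{-p}^{(i)}$.

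Next I would bound the conditional variance and range of these summands on the event $\xi_1\cap\xi_2$. On $\xi_1$, $|x_t^{(i)}|\leq \delta_1\log p$ for every $t$ and $i$, which together with the sparsity of $\theta^*$ and $|\alpha_{1p}^*|$ forces $|\eta_p^{*(i)}|\leq M\delta_1\log p$, placing $\eta_p^{*(i)}$ in the domain where Assumption~\ref{D} applies. Hence $\mathrm{Var}\{x_p^{(i)}\mid x_{-p}^{(i)}\}=D''(\eta_p^{*(i)})\leq\kappa_2$, and so the conditional variance of $W^n_t$ satisfies
\[
\mathrm{Var}\{W^n_t\mid x_{-p}\}\;=\;\frac1{n^2}\sum_{i=1}^n (x_t^{(i)})^2 D''(\eta_p^{*(i)})\;\leq\;\frac{\kappa_2}{n}\cdot\frac1n\sum_i(x_t^{(i)})^2\;\leq\;\frac{\kappa_2\delta_2}{n},
\]
using $\xi_2$ in the last step. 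The boundedness of the individual summands $Y_i^{(t)}=x_t^{(i)}\{x_p^{(i)}-D'(\eta_p^{*(i)})\}/n$ follows similarly: $|x_p^{(i)}|\leq \delta_1\log p$ on $\xi_1$, and a mean-value bound $|D'(\eta_p^{*(i)})|\leq|D'(0)|+\kappa_2 M\delta_1\log p$ from Assumption~\ref{D}, so $|Y_i^{(t)}|$ is controlled by a polynomial in $\log p$ divided by $n$.

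Now I would apply a Bernstein-type inequality, conditionally on $x_{-p}$. With $t^\star=a\lambda_n/(8-4a)$, the lower bound on $\lambda_n$ gives $t^\star\geq 2\{\delta_2\kappa_2\log(2p)/n\}^{1/2}$, so $n(t^\star)^2\geq 4\delta_2\kappa_2\log(2p)$; the upper bound on $\lambda_n$ keeps $t^\star$ small enough that the Bernstein denominator is dominated by the variance term $2\kappa_2\delta_2$, placing the bound in the Gaussian regime. This yields
\[
\mathrm{pr}\bigl(|W^n_t|>t^\star\mid \xi_1,\xi_2\bigr)\;\leq\;2\exp\!\left\{-\frac{n(t^\star)^2}{4\kappa_2\delta_2}\right\}\;\leq\;2\exp\!\left\{-\frac{c_3 n}{\kappa_2\delta_2}-\log(2p)\right\}
\]
for an appropriate $c_3>0$, after using the lower-bound relation $n(t^\star)^2\geq 4\kappa_2\delta_2\log(2p)$ to peel off a $\log(2p)$ from the exponent. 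A union bound over the at most $p$ coordinates of $W^n$ then absorbs the surplus $\log(2p)$ and delivers the stated $\exp(-c_3\delta_3 n)$ with $\delta_3=1/(\kappa_2\delta_2)$.

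The main obstacle I expect is the boundedness step: for unbounded node types such as Poisson and exponential, controlling $|x_p^{(i)}-D'(\eta_p^{*(i)})|$ is non-trivial, and it is precisely here that conditioning on $\xi_1$ is essential to invoke Assumption~\ref{D} and keep the Bernstein bound in its Gaussian regime. Once this is in place, matching the constants in the exponent is mechanical but requires carefully tracking how the lower bound on $\lambda_n$ converts the coordinatewise tail bound into a form that pays for the union bound and still leaves the $\exp(-c_3\delta_3 n)$ residual claimed in the statement.
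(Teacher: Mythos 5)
Your setup is right: the componentwise formula for $W^n$, the conditional centering $E\{x_p^{(i)}\mid x_{-p}^{(i)}\}=D'(\eta^{(i)})$, the conditional variance bound $\kappa_2\delta_2/n$ via Assumption~\ref{D} and $\xi_2$, and the final Chernoff-plus-union-bound accounting all match the paper. But the core of your argument --- a Bernstein inequality, which needs an almost-sure bound on the summands $x_t^{(i)}\{x_p^{(i)}-D'(\eta^{(i)})\}$ --- has a genuine gap that you flag but do not resolve. For Gaussian, Poisson or exponential $x_p$, the conditional law of $x_p^{(i)}$ given $x_{-p}^{(i)}$ has unbounded support, so no such bound exists. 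Your proposed fix is to condition on $\xi_1$ (which does constrain $x_p^{(i)}$), but conditioning on an event that truncates $x_p$ destroys exactly the property Bernstein requires: after truncation, $E\{x_p^{(i)}\mid x_{-p}^{(i)},\xi_1\}\neq D'(\eta^{(i)})$, the summands are no longer mean zero, and you would have to control the resulting bias term separately --- none of which appears in your sketch. Moreover, even granting a range bound of order $(\delta_1\log p)^2$ for the summands, keeping Bernstein in its Gaussian regime at threshold $t^\star=a\lambda_n/(8-4a)$ forces $\lambda_n\lesssim \kappa_2\delta_2/(\delta_1\log p)^2$, which is strictly stronger than the lemma's stated upper bound $\frac{2(2-a)}{a}\delta_2\kappa_2 M$; so your route cannot deliver the result under the stated hypotheses.

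The paper avoids all of this by never bounding $x_p$ at all. It computes the conditional moment generating function exactly using the exponential-family identity
$E\{\exp(vx_t^{(i)}x_p^{(i)})\mid x_{-p}^{(i)}\}=\exp\{D(\eta^{(i)}+vx_t^{(i)})-D(\eta^{(i)})\}$,
then Taylor-expands $D$ to second order so that the first-order term cancels the centering $-vx_t^{(i)}D'(\eta^{(i)})$ and the remainder is $\exp\{(vx_t^{(i)})^2D''(\tilde\eta)/2\}$ with $\tilde\eta\in[\eta^{(i)},\eta^{(i)}+vx_t^{(i)}]$ --- a range that depends only on $x_{-p}^{(i)}$. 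The event $\xi_1$ is then needed only to keep $\tilde\eta$ inside the domain where $|D''|\leq\kappa_2$ (this is also where the upper bound on $\lambda_n$ enters, via the requirement $v=a\lambda_n/\{(8-4a)\kappa_2\delta_2\}\leq M/2$), and $\xi_2$ controls $\frac1n\sum_i(x_t^{(i)})^2\leq\delta_2$. The resulting sub-Gaussian MGF bound $\exp(nv^2\kappa_2\delta_2/2)$ feeds a Chernoff bound with the optimized $v=\delta/(\kappa_2\delta_2)$, and the rest of your accounting (peeling off $\log(2p)$ via the lower bound on $\lambda_n$) goes through as you describe. To repair your proof, replace the Bernstein step with this exact MGF computation; the rest of your outline can stay.
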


\begin{lemma}
Suppose that $\xi_1$ and $\|W^n\|_\infty \leq a\lambda_n /(8-4a) $ hold and
$$ \lambda_n \leq \min\left\{ \frac{a \Lambda_{1}^2 (d+1)^{-1}}{288 (2-a) \kappa_2 \Lambda_{2} },  \frac{\Lambda_{1}^2 (d+1)^{-1}}{12 \Lambda_{2} \kappa_3 \delta_1 \log p} \right\}, $$
where $\delta_1$ is defined in Proposition~\ref{prop.e1}, and $a$ and $\kappa_3$ are defined in Assumptions~\ref{irrep} and \ref{D}, respectively.
 Then with probability 1,
$$ \| \hat{\theta}_{N}-\theta^*_{N}\|_2 < \frac{10}{\Lambda_{1}}(d+1)^{1/2}\lambda_n, \quad \|R^n\|_\infty \leq \frac{a \lambda_n }{8-4a}.$$
\label{lemma.R}
\end{lemma}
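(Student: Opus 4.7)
The plan is to prove the two conclusions in sequence, following the convexity-based argument developed for $\ell_1$-penalized $M$-estimators in the exponential family by \citet{ravikumar2010} and \citet{yang2012}. First I would establish the $\ell_2$ bound on $\hat\theta_N - \theta_N^*$ via an auxiliary-function argument; then I would use the mean-value representation \eqref{eqn:R} of $R^n$ to deduce $\|R^n\|_\infty \leq a\lambda_n/(8-4a)$.

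Write $\beta^* = ({\theta_N^*}^\T, \alpha_{1p}^*)^\T$ and $\hat\beta = (\hat\theta_N^\T, \hat\alpha_{1p})^\T$, so that $\hat\beta$ minimizes the restricted problem obtained from \eqref{pl} by forcing $\theta_\Delta = 0$. Define
\begin{equation*}
F(u) = -\ell(\beta^* + u;X) + \ell(\beta^*;X) + \lambda_n(\|\theta_N^* + u_N\|_1 - \|\theta_N^*\|_1)
\end{equation*}
on vectors $u$ supported on $\Delta^c$, so that $F(\hat\beta - \beta^*) \leq F(0) = 0$. Convexity of $F$ implies that if $F(u) > 0$ on the sphere $\{\|u\|_2 = B\}$, then $\|\hat\beta - \beta^*\|_2 < B$; I would take $B = 10(d+1)^{1/2}\lambda_n/\Lambda_1$. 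A second-order Taylor expansion gives
\begin{equation*}
-\ell(\beta^* + u;X) + \ell(\beta^*;X) = -{W^n}^\T u + \tfrac{1}{2}u^\T Q^*(\tilde\beta)u
\end{equation*}
for some $\tilde\beta$ on the segment from $\beta^*$ to $\beta^* + u$, and splitting $Q^*(\tilde\beta) = Q^* + [Q^*(\tilde\beta) - Q^*]$ with Assumption~\ref{dep} yields $u^\T Q^* u \geq \Lambda_1\|u\|_2^2$. The exponential-family form $[\nabla^2\ell(\beta)]_{jk} = -\tfrac{1}{n}\sum_i D''(\eta_i(\beta)) x_{0,i,j} x_{0,i,k}$ combined with Assumption~\ref{D} and event $\xi_1$ then controls the perturbation by
\begin{equation*}
\bigl|u^\T[Q^*(\tilde\beta) - Q^*]u\bigr| \leq \tfrac{\kappa_3}{n}\sum_i |x_{0,i}^\T u|^3 \leq \kappa_3 (d+1)^{1/2}\delta_1 \log p \cdot \Lambda_2 \|u\|_2^3.
\end{equation*}
Using $|{W^n}^\T u| \leq a\lambda_n(d+1)^{1/2}\|u\|_2/(8-4a)$ and $\lambda_n(\|\theta_N^* + u_N\|_1 - \|\theta_N^*\|_1) \geq -\lambda_n(d+1)^{1/2}\|u\|_2$, at $\|u\|_2 = B$ the positive quadratic $(\Lambda_1/2)B^2 = 50(d+1)\lambda_n^2/\Lambda_1$ leaves a margin of order $(d+1)\lambda_n^2/\Lambda_1$ after subtracting the sub-gradient terms. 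The first upper bound on $\lambda_n$ in the hypothesis governs this linear-quadratic balance, while the second ensures the cubic remainder is no larger than the remaining margin, so that $F(u) > 0$ on $\|u\|_2 = B$.

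For the second conclusion, substituting $\bar\beta^k = \beta^* + s_k(\hat\beta - \beta^*)$ with $s_k \in [0,1]$ into \eqref{eqn:R} and using the exponential-family form of the Hessian,
\begin{equation*}
|R^n_k| \leq \frac{\kappa_3}{n}\sum_i \bigl|x_{0,i}^\T(\bar\beta^k - \beta^*)\bigr| \cdot |x_{0,i,k}| \cdot \bigl|x_{0,i}^\T(\hat\beta - \beta^*)\bigr| \leq \kappa_3 \delta_1 \log p \cdot \Lambda_2 \|\hat\beta - \beta^*\|_2^2.
\end{equation*}
The last step uses $|x_{0,i}^\T(\bar\beta^k - \beta^*)| \leq |x_{0,i}^\T(\hat\beta - \beta^*)|$, extracts $\max_i |x_{0,i,k}| \leq \delta_1 \log p$ via $\xi_1$, and applies $\tfrac{1}{n}\sum_i (x_{0,i}^\T u)^2 \leq \Lambda_2 \|u\|_2^2$ from Assumption~\ref{dep}. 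Plugging in $\|\hat\beta - \beta^*\|_2 \leq B$ from the first part, the second upper bound on $\lambda_n$ is exactly what forces $|R^n_k| \leq a\lambda_n/(8-4a)$.

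The main obstacle is the cubic remainder $\tfrac{1}{2}u^\T[Q^*(\tilde\beta) - Q^*]u$ in $F(u)$: it is the only place the third-derivative bound $\kappa_3$ enters Step 1, together with the uniform bound $\delta_1 \log p$ on the covariates from $\xi_1$ and an extra $(d+1)$ coming from an $\ell_1$-to-$\ell_2$ passage on the $\Delta^c$-supported vectors. The same cubic mechanism reappears in $|R^n_k|$, which is why the $\kappa_3$-, $\delta_1$-, $\log p$- and $(d+1)$-dependent upper bound on $\lambda_n$ controls both quantities simultaneously. Careful constant-tracking is needed to ensure that the quadratic gain $(\Lambda_1/2)B^2$ strictly dominates the competing linear and cubic terms at the chosen radius, but the analysis is routine once the decomposition above is in place.
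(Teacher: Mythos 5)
Your Step 1 follows the paper's argument (the convex auxiliary function $F(u)$, the split into gradient, quadratic, and cubic pieces, and the $\kappa_3(d+1)^{1/2}\delta_1\log p\,\Lambda_2\|u\|_2^3$ control of the Hessian perturbation), but there are two problems. The minor one: your radius $B=10(d+1)^{1/2}\lambda_n/\Lambda_1$ is too greedy. Under the second hypothesis on $\lambda_n$ the cubic term at that radius eats $\kappa_3\delta_1\log p\,(d+1)^{1/2}\Lambda_2 B\le \tfrac{10}{12}\Lambda_1$ of the curvature, leaving only $\Lambda_1 B^2/12$ from the quadratic, which is $100/12\approx 8.3$ times $(d+1)\lambda_n^2/\Lambda_1$ against linear terms of $12.5$ times the same quantity, so $F(u)>0$ fails on that sphere; the paper takes $B=6(d+1)^{1/2}\lambda_n/\Lambda_1$, for which the margin closes and the strict inequality $<10(d+1)^{1/2}\lambda_n/\Lambda_1$ follows. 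Also, the first upper bound on $\lambda_n$ plays no role in the "linear--quadratic balance" of Step 1, contrary to your remark; that balance is governed only by the choice of $B$.

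The serious gap is in Step 2. You bound $R^n$ by applying a second mean-value expansion to $D''$, getting $|R^n_k|\le \kappa_3\delta_1\log p\,\Lambda_2\|\hat\beta-\beta^*\|_2^2$, and assert that the second hypothesis on $\lambda_n$ closes the argument. It does not: with $\|\hat\beta-\beta^*\|_2\le CB$ for $B$ of order $(d+1)^{1/2}\lambda_n/\Lambda_1$, forcing $\kappa_3\delta_1\log p\,\Lambda_2\,C^2(d+1)\lambda_n^2/\Lambda_1^2\le a\lambda_n/(8-4a)$ requires $\lambda_n\le \tfrac{a}{4(2-a)C^2}\cdot\Lambda_1^2/\{(d+1)\kappa_3\delta_1\log p\,\Lambda_2\}$, whereas the hypothesis only gives the constant $1/12$ in place of $a/(4(2-a)C^2)\le 1/144$. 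So your route demands a strictly stronger bound on $\lambda_n$ than the lemma assumes. The paper avoids this by never differentiating $D''$ again: it bounds $|D''(\bar\eta^{(i)})-D''(\eta^*)|\le 2\kappa_2$ directly via Assumption~\ref{D}, so that $\|R^n\|_\infty\le 2\kappa_2\Lambda_2\|\hat u\|_2^2 = 72\kappa_2\Lambda_2(d+1)\lambda_n^2/\Lambda_1^2$, and it is the \emph{first} hypothesis bound $\lambda_n\le a\Lambda_1^2(d+1)^{-1}/\{288(2-a)\kappa_2\Lambda_2\}$ --- which you never use --- that delivers exactly $\|R^n\|_\infty\le a\lambda_n/(8-4a)$. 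To repair your proof, replace the $\kappa_3$-based bound on $R^n$ with the $2\kappa_2$ bound and invoke the first hypothesis.
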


We now continue with the proof of Theorem~\ref{thm}. Given Assumption~\ref{tuning.range}, the conditions regarding $\lambda_n$ are met for Lemmas~\ref{lemma.W} and \ref{lemma.R}. 

We now assume that $\xi_1, \xi_2$ and the event $\|W^n\|_\infty \leq a\lambda_n /(8-4a)$ are true so that the conditions for the two lemmas are satisfied. We derive the lower bound for the probability of these events at the end of the proof. 

First, applying Lemma~\ref{lemma.R} and Assumption~\ref{irrep} to \eqref{strictd.eq} yields
\begin{equation}
\begin{aligned}
\|\hat{Z}_{\Delta}\|_{\infty} \leq & \max_{l \in \Delta} \| Q^*_{l \Delta^c} (Q^*_{\Delta^c \Delta^c})^{-1}\|_1 \left( \|W^n_{\Delta^c}\|_{\infty} +\|R^n_{\Delta^c}\|_{\infty}+\lambda_n \|\hat{Z}_{\Delta^c}\|_{\infty} \right)/\lambda_n+\\
& \left( \|W^n_{\Delta}\|_{\infty}+\|R^n_{\Delta}\|_{\infty} \right)/\lambda_n\\
\leq & (1-a)+(2-a) \left\{\frac{a}{4(2-a)}+\frac{a}{4(2-a)} \right\} < 1.
\label{strictdf.result}
\end{aligned}
\end{equation}

Next, applying Lemma~\ref{lemma.R} and a norm inequality to $\| \hat{\theta}_{N}-\theta^*_{N}\|_{\infty}$ gives
\begin{equation}
\| \hat{\theta}_{N}-\theta^*_{N}\|_{\infty} \leq \| \hat{\theta}_{N}-\theta^*_{N}\|_2 < \frac{10}{\Lambda_{1}} (d+1)^{1/2} \lambda_n \leq \min_{t} |\theta_{t}|,
\label{signcrt.result}
\end{equation}
since $\min_t |\theta_{t}| \geq 10 (d+1)^{1/2} \lambda_n/\Lambda_{1} $ by Assumption~\ref{thetamin}. The strict inequality in \eqref{signcrt.result} ensures that the sign of the estimator is consistent with the sign of the true value for all edges. 

Equations \ref{strictdf.result} and \ref{signcrt.result} are sufficient to establish the result, i.e., $\hat{N}=N$. Let $A$ be the event $\|W^n\|_\infty \leq a\lambda_n /(8-4a) $. Recall that we have assumed events $A$, $\xi_1$, and $\xi_2$ to be true in order to prove \eqref{strictdf.result} and \eqref{signcrt.result}. We now derive the lower bound for the probability of $A\cap \xi_1 \cap \xi_2$. 

 Using the fact that
$$\text{pr}\{ (A\cap \xi_1 \cap \xi_2)^c \}\leq \text{pr}(A^c \mid \xi_1\cap \xi_2) + \text{pr}\{(\xi_1 \cap \xi_2)^c\} \leq \text{pr}(A^c \mid \xi_1, \xi_2)+\text{pr}(\xi_1^c)+\text{pr}(\xi_2^c),$$
we know the probability of $A\cap \xi_1 \cap \xi_2$ satisfies
\begin{equation*}
\text{pr}\left\{ \left(\|W^n\|_\infty \leq \frac{a}{2-a}\frac{\lambda_n}{4}\right)\cap \xi_2\cap \xi_1 \right\}\geq 1-c_1 p^{-\delta_1+2}-\exp(-c_2 \delta_2^2 n)-\exp(-c_3 \delta_3 n),
%\label{eq:prob}
\end{equation*}
where $c_1$, $c_2$, and $c_3$ are constants from Proposition~\ref{prop.e1}, Proposition~\ref{prop.e2}, and Lemma~\ref{lemma.W}. Thus, the event $A\cap \xi_1 \cap \xi_2$ happens with high probability when the sample size $n$ is large. This completes the proof.
\end{proof}

\subsection{A Proof for Lemma \ref{lemma.W}}\label{lemmaW.proof}
\begin{proof}
Recall that $\eta^{(i)} = \alpha_{1p} + \sum_{t<p} \theta_{t} x_t^{(i)}$ and that we have assumed that $\alpha_{kp}$ is known for $k\geq 2$. We can rewrite the conditional density in \eqref{cond}   as
$$ p(x_p \mid x_{-p}) \propto \exp \{ \eta x_p - D(\eta) \}. $$ 
For any $t<p$, 
 \begin{equation}
W^n_t=\frac{\partial \ell}{\partial \theta_{t}}= \sum\limits_{i=1}^{n} \frac{\partial \ell}{\partial \eta^{(i)}} \frac{\partial \eta^{(i)}}{\partial \theta_{t}} = \frac{1}{n} \sum\limits_{i=1}^{n} \{x_p^{(i)}-D^{'}(\eta^{(i)})\}x_t^{(i)}.
\label{eqn:W}
\end{equation}

Recall that $M$ is a large constant introduced in Assumption~\ref{D}. Suppose that $M$ is sufficiently large that $|\alpha_{1p}^*|+\sum_{ k <p} | \theta_{k}^*| < M/2$. For every $v$ such that $0<v<M/2$, 
 \begin{equation}
\begin{aligned}
E \left( \left. \exp \left[ v x_t^{(i)}\left\{ x_p^{(i)}-D^{'}(\eta^{(i)})\right\}\right] \right|  {X}_{- p}\right) = &
 E \left\{ \left. \exp \left( v x_t^{(i)} x_p^{(i)}\right) \right|  {X}_{- p}\right\}  \exp \left\{- v x_t^{(i)}D^{'}(\eta^{(i)})\right\}  \\
= & \exp \left\{ D(\eta^{(i)}+v x_t^{(i)})- D(\eta^{(i)}) \right\} \exp \left\{- v x_t^{(i)}D^{'}(\eta^{(i)})\right\} \\
= & \exp \left\{ v x_t^{(i)} D^{'}(\eta^{(i)})+  (v x^{(i)}_t )^2 \frac{D^{''}(\tilde{\eta})}{2}  \right\} \exp \left\{- v x_t^{(i)}D^{'}(\eta^{(i)})\right\}   \\
= &\exp \left\{ (v x^{(i)}_t )^2 \frac{D^{''}(\tilde{\eta})}{2} \right\}, \ \ \tilde{\eta} \in [\eta^{(i)}, \eta^{(i)}+  v x_t^{(i)} ],  \\
\end{aligned}
\label{eqn:W.t}
\end{equation}
where the second equality was derived using the properties of the moment generating function of the exponential family, and the third equality follows from a second-order Taylor expansion.
 Since $\tilde{\eta} \in [\eta^{(i)}, \eta^{(i)}+ v x_t^{(i)} ]$, the event $\xi_1$ implies that 
\begin{equation}
|\tilde{\eta}|\leq |\alpha_{1p}^*|+\sum_{ k <p} |x^{(i)}_{k} \theta_{k}^*| + |v x_t^{(i)}| \leq  |\alpha_{1p}^*|+ (\sum_{ k <p} |\theta_{k}^*| + |v| )\underset{t,i}{\max} |x_t^{(i)}|  \leq M \delta_1 \log p. 
\label{eqn:eta.max}
\end{equation}
Therefore, the condition of Assumption~\ref{D} is satisfied, and thus $|D^{''}(\tilde{\eta})|\leq \kappa_2$. Recalling that $\{x^{(i)}\}^n_{i=1}$ are independent samples, it follows from \eqref{eqn:W} and \eqref{eqn:W.t} that 
\begin{equation}
\begin{aligned}
E\left\{  \exp ( v n W^n_t)  \mid \xi_2, \xi_1 \right\} = &E \left[ E \left\{ \exp ( v n W^n_t) \mid X_{-p}, \xi_2, \xi_1 \right\} \mid \xi_2, \xi_1 \right] \\
\leq & E \left[ \left. \exp \left\{v^2 \frac{\kappa_2}{2}\sum\limits_{i=1}^{n}(x^{(i)}_t)^2 \right\} \right| \xi_2, \xi_1 \right]  \\
\leq & \exp ( n v^2 \kappa_2 \delta_2/2  ),
\end{aligned}
\label{eqn:W.pos}
\end{equation}
where we use the event $\xi_2$ in the last inequality. Similarly, 
\begin{equation}
E \left\{\exp ( -n vW^n_t)\mid  \xi_2,\xi_1\right\} \leq \exp ( n v^2 \kappa_2 \delta_2/2 ).
\label{eqn:W.neg}
\end{equation}

Furthermore, one can see from a similar argument as in \eqref{eqn:W} and \eqref{eqn:W.t} that
\begin{equation*}
\begin{aligned}
E \left\{\exp ( n vW^n_p) \mid \xi_1 \right\} = & E \left\{ \left. \exp \left(v n \frac{\partial \ell}{\partial \alpha_{1p}} \right)\right| \xi_1  \right\}  \\
= & \prod_{i=1}^n E \left(\exp [ \left.v \{x_p^{(i)}-D^{'}(\eta^{(i)}) \}]\right|  \xi_1 \right) \leq \exp ( n \kappa_2v^2/2 ) .
\end{aligned}
\end{equation*}
 
 We focus on the discussion of \eqref{eqn:W.pos} and \eqref{eqn:W.neg} since $\delta_2 \geq 1$. For some $\delta$ to be specified, we let $v=\delta/(\kappa_2 \delta_2)$ and apply the Chernoff bound \citep{chernoff1952, ravikumar2004} with \eqref{eqn:W.pos} and \eqref{eqn:W.neg} to get 
 
\begin{equation*}
\text{pr}(|W^n_t|>\delta \mid \xi_2, \xi_1) \leq \frac{ E \{\exp( v n W^n_t )  \mid \xi_2, \xi_1\} }{\exp(v n \delta)} + \frac{ E \{\exp( -v n W^n_t) \mid \xi_2, \xi_1 \} }{\exp(v n \delta)} \leq  2\exp \left( -n\frac{\delta^2}{2\kappa_2 \delta_2} \right).
\end{equation*}
Letting $\delta=a\lambda_n /(8-4a) $  and using the Bonferroni inequality, we get
 \begin{equation}
\begin{aligned}
\text{pr}\left( \left.\|W^n\|_\infty > \frac{a}{2-a} \frac{\lambda_n}{4} \right| \xi_2,\xi_1 \right) \leq & 2\exp \left\{ -n\frac{a^2 \lambda_n^2}{32 (2-a)^2\kappa_2 \delta_2} +\log (p) \right\}\\
  \leq &  \exp \left\{ - \frac{ a^2 \lambda_n^2}{64 (2-a)^2\kappa_2 \delta_2} n\right\}=\exp(-c_3 \delta_3 n),
  \end{aligned}
 \label{W}
\end{equation}
where $\delta_3=1/(\kappa_2 \delta_2)$ and $c_3= a^2 \lambda_n^2/\{64 (2-a)^2\}$. In \eqref{W}, we made use of the assumption that $\lambda_n \geq 8(2-a)\{\kappa_2 \delta_2 \log (2p)/n \}^{1/2} /a$, and we also require that
$\lambda_n \leq 2(2-a) \kappa_2 \delta_2 M/a $ since $v=a\lambda_n/ \{ (8-4a) \kappa_2 \delta_2 \}\leq M/2$.
\end{proof}

\subsection{A Proof for Lemma \ref{lemma.R}}\label{lemmaR.proof}

\begin{proof}
We first prove that $ \| \hat{\theta}_{N}-\theta^*_N\|_2 < 10(d+1)^{1/2}\lambda_n/\Lambda_{1}.$ 

Following the method in \citet{fan2004} and \citet{ravikumar2010}, we construct a function $F(u)$ as 
\begin{equation}
F(u) =- \ell (\theta^*+u_{-p}, \alpha_{1p}^*+u_p;  {X})+\ell(\theta^*,\alpha_{1p}^*; {X})+\lambda_n\|\theta^*+u_{-p}\|_1-\lambda_n\|\theta^*\|_1,
\label{defineF}
\end{equation} 
where $u$ is a $p$-dimensional vector and $u_{\Delta}=0 $. $F(u)$ has some nice properties: (i) $F(0)=0$ by definition; (ii) $F(u)$ is convex in $u$ given the form of \eqref{cond}  ; and (iii)  by the construction of the oracle estimator $\hat{\theta}$, $F(u)$ is minimized by $\hat{u}$ with $\hat{u}_{-p} = \hat{\theta}-\theta^*$ and $\hat{u}_p = \hat{\alpha}_{1p}-\alpha_{1p}^*$. 

We claim that if there exists a constant $B$ such that $F(u)>0$ for any $u$ such that $\|u\|_2=B$ and $u_{\Delta}=0$, then $\| \hat{u} \|_2 \leq B$.  To show this, suppose that $\| \hat{u} \|_2  > B$ for such a constant. Let $t=B/\|\hat{u}\|_2$. Then, $t<1$, and the convexity of $F(u)$ gives 
\begin{equation*}
F(t \hat{u}) \leq (1-t) F(0)+tF(\hat{u})\leq 0.
\end{equation*}
Thus, $\|t\hat{u}\|_2=B$ and $(t\hat{u})_{\Delta}=t\hat{u}_{\Delta}=0$, but $F(t \hat{u})\leq 0$, which is a contradiction.

Applying a Taylor expansion to the first term of $F(u)$ gives 
\begin{equation*}
\begin{aligned}
F(u)=& -{\nabla \ell (\theta^*, \alpha_{1p}^*; {X})}^\T u- {u}^\T  \nabla^2\ell(\theta^*+v u_{-p}, \alpha_{1p}^*+v u_p; X) u/2 +\lambda_n (\|\theta^*+u_{-p}\|_1-\|\theta^*\|_1 ) \\
= & \ \text{I}+\text{II}/2+\text{III} ,
\end{aligned}
\end{equation*} 
for some $v \in [0,1]$. Recall that $u_{\Delta}=0$ as defined in \eqref{defineF}. The gradient and Hessian are with respect to the vector $(\theta^T, \alpha_{1p})^\T$. 

We now proceed to find a $B$ such that for $\|u\|_2=B$ and $u_{\Delta}=0$, the function $F(u)$ is always greater than 0. First, given that $\|W^n\|_\infty \leq a \lambda_n / (8-4a) $ and $a < 1$ assumed in Assumption~\ref{irrep},
\begin{equation*}
|\text{I}|=|{(W^n)}^\T u|\leq \| W^n\|_\infty \|u\|_1 \leq \frac{a}{2-a} \frac{\lambda_n}{4} (d+1)^{1/2} B \leq \frac{\lambda_n}{4} (d+1)^{1/2} B.
\end{equation*}
Next, by the triangle inequality and the Cauchy-Schwarz inequality,
\begin{equation*}
\text{III}\geq -\lambda_n\|u_{-p}\|_1 \geq -\lambda_n d^{1/2}  \|u_{-p}\|_2 \geq -\lambda_n (d+1)^{1/2} B.
\end{equation*}

To bound II, we note that 
\begin{equation*}
- \nabla^2\ell(\theta^*+vu_{-p}, \alpha_{1p}^*+vu_p;X)=\frac{1}{n} \sum\limits_{i=1}^{n} \utilde{x}^{(i)}_{0}  {(\utilde{x}^{(i)}_{0})}^\T D^{''}(\eta^{(i)}_r),
\end{equation*}
where $x_0=(x_{-p}^{\T}, 1)^{\T}$ as in Assumption~\ref{dep}, and $\eta^{(i)}_r=\alpha_{1p}^*+v u_p + \sum_{t<p} (\theta_t^*+v u_t)x_t^{(i)}$. Applying a  Taylor expansion on each $D^{''}(\eta_r^{(i)})$ at $\eta^{(i)}=\alpha_{1p}^*+\sum_{t<p} \theta_t^* x_t^{(i)}$, we get
\begin{equation*}
\begin{aligned}
- \nabla^2\ell(\theta^*+vu_{-p}, \alpha_{1p}^*+vu_p; X)=& \frac{1}{n} \sum\limits_{i=1}^{n}  \utilde{x}^{(i)}_{0} {(\utilde{x}^{(i)}_{0})}^\T D^{''}(\eta^{(i)}) +\frac{1}{n} \sum\limits_{i=1}^{n}    \utilde{x}^{(i)}_{0} {(\utilde{x}^{(i)}_{0})}^\T D^{'''}(\tilde{\eta}^{(i)}) \left( v{u}^\T\utilde{x}^{(i)}_{0} \right)\\
= & Q^* + \frac{1}{n} \sum\limits_{i=1}^{n}    \utilde{x}^{(i)}_{0} {(\utilde{x}^{(i)}_{0})}^\T D^{'''}(\tilde{\eta}^{(i)}) \left( v{u}^\T\utilde{x}^{(i)}_{0} \right),
\end{aligned}
\end{equation*}
where $\tilde{\eta}^{(i)} \in [\eta^{(i)}, \eta^{(i)}_r]$. Using the argument on $\tilde{\eta}$ in \eqref{eqn:eta.max} and the fact that $v\leq 1$ and $ \|u\|_2=B$, we can see that $\tilde{\eta}^{(i)}$ is in the range required for Assumption~\ref{D} to hold given $\xi_1$.  
Therefore, applying Assumption~\ref{D} we can write 
\begin{equation*}
\begin{aligned}
\text{II} \geq &  \min_{u: \|u\|_2=B, u_{\Delta}=0}  \{ - u^T \nabla^2 \ell(\theta^*+vu_{-p}, \alpha_{1p}^*+vu_p; X) u \} \\
\geq & B^2 \Lambda_{\min} (Q^*_{\Delta^c \Delta^c} )-
\max_{v \in [0,1]}\max_{u: \|u\|_2=B , u_{\Delta}=0} {u}^\T \left\{ \frac{1}{n}\sum\limits_{i=1}^{n} D^{'''}(\tilde{\eta}^{(i)}) (v{u}^\T\utilde{x}^{(i)}_{0} ) \utilde{x}^{(i)}_{0} ({\utilde{x}^{(i)}_{0})}^\T \right\} u \\
\geq & \Lambda_{1} B^2 -
\max_{u: \|u\|_2=B , u_{\Delta}=0} \left\{\max_{i, v\in [0,1]}( v {u}^\T\utilde{x}^{(i)}_{0}) \max_{\tilde{\eta}^{(i)}} D^{'''}(\tilde{\eta}^{(i)})  \frac{1}{n}\sum\limits_{i=1}^{n} ({u}^\T\utilde{x}^{(i)}_{0})^2 \right\}  \\
 \geq & \Lambda_{1} B^2 - \kappa_3 \max_{i, u: \|u\|_2=B , u_{\Delta}=0, v \in [0,1]} ( v {u}^\T\utilde{x}^{(i)}_{0} )  \max_{u: \|u\|_2=B , u_{\Delta}=0}  \left\{ \frac{1}{n} \sum\limits_{i=1}^{n}  ({u}^\T\utilde{x}^{(i)}_{0})^2 \right\}. \quad          \\
\end{aligned}
\end{equation*}
By inspection, the maximum of $u^\T x^{(i)}_0$ is non-negative. Thus, the maximum of $ vu^T \utilde{x}^{(i)}_{0}$ is achieved at $v=1$. Then, using $\xi_1$ and Assumption~\ref{dep},
\begin{equation*}
\begin{aligned}
\text{II} \geq & \Lambda_{1} B^2  - \kappa_3  B (d+1)^{1/2} \delta_1 \log (p)  B^2 \Lambda_{\max} \left\{ \frac{1}{n}\sum\limits_{i=1}^{n}  \utilde{x}^{(i)}_{0} {(\utilde{x}^{(i)}_{0})}^\T \right\} \\
\geq & \Lambda_{1} B^2 - \kappa_3  B^3  (d+1)^{1/2}\delta_1 \log (p)  \Lambda_{2}.
\end{aligned}
\end{equation*} 
Thus, if our choice of $B$ satisfies
\begin{equation}
\Lambda_{1}  - \delta_1 \log (p) B \kappa_3 (d+1)^{1/2}  \Lambda_{2} \geq \frac{\Lambda_{1}}{2},
\label{lambdan.cond1}
\end{equation}
then the lower bound of $F(u)$ is
\begin{equation*}
F(u) \geq  -\frac{\lambda_n}{4} (d+1)^{1/2} B +\frac{\Lambda_{1}}{4} B^2-\lambda_n (d+1)^{1/2} B.
\end{equation*}
So, $F(u)>0$ for any $B>5 (d+1)^{1/2}\lambda_n /\Lambda_{1}$. We can hence let 
\begin{equation}
B= 6(d+1)^{1/2}\lambda_n/\Lambda_{1}
\label{eqn:B}
\end{equation}
 to get
\begin{equation}
 \| \hat{\theta}_N-\theta^*_N\|_2 \leq \| \hat{u}\|_2 \leq B = \frac{6}{\Lambda_{1}}(d+1)^{1/2}\lambda_n.
 \label{eqn:uB}
\end{equation}
And thus, $\| \hat{\theta}_N-\theta^*_N\|_2 < 10\lambda_n (d+1)^{1/2} / \Lambda_{1}$. It is easy to show that \eqref{eqn:B} satisfies \eqref{lambdan.cond1} provided that 
$$\lambda_n \leq \frac{\Lambda_{1}^2(d+1)^{-1}}{12 \Lambda_{2} \kappa_3 \delta_1 \log p}.$$

To find the bound for $R^n$ defined in \eqref{eqn:R}, we first recall that $(\bar{\theta}^{\T},\bar{\alpha})^{\T}$ is an intermediate point between $( {\theta^*}^\T, \alpha_{1p}^*)^{\T}$ and $( \hat{\theta}^{\T}, \hat{\alpha}_{1p})^{\T}$. We denote $\bar{\eta}^{(i)}=\bar{\alpha}_{1p}+\sum_{t < p} \bar{\theta}_{t} x_t^{(i)}$, and observe that $|\bar{\eta}^{(i)}| \leq M  \delta_1 \log p $ for $i=1, \ldots, n$ using the argument of \eqref{eqn:eta.max}, which implies that Assumption~\ref{D} is applicable. Thus, 
\begin{equation*}
 \begin{aligned}
 \Lambda_{\max}\{\nabla^2\ell(\bar{\theta}, \bar{\alpha}_{1p};X )-\nabla^2 \ell(\theta^*, {\alpha}_{1p}^*;X)\}=&\underset{\|u\|_2=1}{\max} {u}^\T \{ \nabla^2\ell(\bar{\theta}, \bar{\alpha}_{1p};X )-\nabla^2 \ell(\theta^*, {\alpha}_{1p}^*;X)\}u \\
= & \underset{\|u\|_2=1}{\max} {u}^\T \left[ \frac{1}{n} \sum\limits_{i=1}^{n} \left\{ D^{''}(\bar{\eta}^{(i)} ) - D^{''} (\eta^{*}) \right\}\utilde{x}^{(i)}_{0} {(\utilde{x}^{(i)}_{0})}^\T \right] u.
\end{aligned}
\end{equation*}
By Assumption 3, $|D^{''}(\bar{\eta}^{(i)} ) - D^{''} (\eta^{*})| \leq 2 \kappa_2 $, and so
\begin{equation*}
 \begin{aligned}
 \Lambda_{\max}\{ \nabla^2\ell(\bar{\theta}, \bar{\alpha}_{1p};X )-\nabla^2 \ell(\theta^*, {\alpha}_{1p}^*;X) \} = & \underset{\|u\|_2=1}{\max} {u}^\T \left[ \frac{1}{n} \sum\limits_{i=1}^{n} \left\{ D^{''}(\bar{\eta}^{(i)} ) - D^{''} (\eta^{*})\right\} \utilde{x}^{(i)}_{0} {(\utilde{x}^{(i)}_{0})}^\T \right] u \\
 \leq & 2\kappa_2 \underset{\|u\|_2=1}{\max} {u}^\T \left\{ \frac{1}{n} \sum\limits_{i=1}^{n}  \utilde{x}^{(i)}_{0} {(\utilde{x}^{(i)}_{0})}^\T \right\} u  \leq  2\kappa_2  \Lambda_{2},
\end{aligned}
\end{equation*}
using Assumption~\ref{dep} at the last inequality. Hence, we arrive at
\begin{equation*}
 \begin{aligned}
 \|R^n\|_{\infty} \leq& \|R^n\|_2^2=\left\|\{\nabla^2\ell(\bar{\theta}, \bar{\alpha}_{1p};X )-\nabla^2 \ell(\theta^*, {\alpha}_{1p}^*;X)\}^\T \bpm \hat{\theta}-\theta^* \\ \hat{\alpha}_{1p}-{\alpha}_{1p}^*  \epm \right\|_2^2  \\
 \leq &\Lambda_{\max}\{\nabla^2\ell(\bar{\theta}, \bar{\alpha}_{1p};X )-\nabla^2 \ell(\theta^*, {\alpha}_{1p}^*;X) \} \left\| \bpm \hat{\theta}-\theta^* \\ \hat{\alpha}_{1p}-{\alpha}_{1p}^*  \epm \right\|_2^2 \\
 = &\Lambda_{\max}\{\nabla^2\ell(\bar{\theta}, \bar{\alpha}_{1p};X )-\nabla^2 \ell(\theta^*, {\alpha}_{1p}^*;X) \} \left\|  \hat{u} \right\|_2^2 \\
 \leq & \frac{72\kappa_2 \Lambda_{2} }{\Lambda^2_{1}} (d+1) \lambda_n^2,
 \end{aligned}
\end{equation*}
where the last inequality follows from \eqref{eqn:uB}. So $\|R^n\|_\infty \leq a\lambda_n /(8-4a)$ if
\begin{equation}
 \lambda_n \leq \frac{a}{2-a} \frac{\Lambda_{1}^2}{288 (d+1)\kappa_2\Lambda_{2} },
  \label{lambdan.cond2}
\end{equation}
which holds by assumption. 
\end{proof}

\subsection{A Proof for Corollary~\ref{col1}}\label{col1.proof}

\begin{proof}

The proof is essentially the same as the proof in Section~\ref{proof.thm}. We first show that a modified version of Lemma~\ref{lemma.R} holds with fewer conditions.

\begin{lemma}
Suppose that {$p(x_p|x_{-p})$} follows a Gaussian distribution as in \eqref{cond:Gaussian}  , and $\|W^n\|_\infty \leq a\lambda_n /(8-4a) $. Then
$$ \| \hat{\theta}_{N}-\theta^*_{N}\|_2 < \frac{10}{\Lambda_{1}}(d+1)^{1/2}\lambda_n, \quad \|R^n\|_\infty =0.$$
\label{lemma.R2}
\end{lemma}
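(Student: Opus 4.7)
The plan is to exploit the very special structure of the Gaussian log-partition function $D(\eta)=\eta^2/2+\log(2\pi)/2$: its second derivative is the constant $1$ and its third derivative vanishes identically. As a consequence, $\nabla^2\ell(\theta,\alpha_{1p};X)$ does not depend on the argument $(\theta,\alpha_{1p})$ at all---it equals $-Q^*=-n^{-1}\sum_i x_0^{(i)}(x_0^{(i)})^\T$ everywhere. This immediately kills the mean-value-theorem remainder $R^n$ and removes any need to invoke Assumption~\ref{D} or the event $\xi_1$ in the argument.

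First I would verify that $\|R^n\|_\infty=0$. By the definition of $R^n_k$ in \eqref{eqn:R}, each entry is a linear combination of entries of $\nabla^2\ell(\bar{\theta}^k,\bar{\alpha}^k_{1p};X)-\nabla^2\ell(\theta^*,\alpha_{1p}^*;X)$. Because this Hessian is constant in the Gaussian case, that difference is the zero matrix, so $R^n$ vanishes identically. This takes care of the second conclusion of the lemma without any assumption on $\lambda_n$.

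Next I would bound $\|\hat{\theta}_N-\theta^*_N\|_2$ by the same convexity/witness argument used in Lemma~\ref{lemma.R}: build the function $F(u)$ as in \eqref{defineF} and show that, for a suitable radius $B$, $F(u)>0$ on $\{\|u\|_2=B,\ u_\Delta=0\}$; then convexity of $F$ together with $F(0)=0$ and $F(\hat u)\leq 0$ forces $\|\hat u\|_2\leq B$. The terms $\text{I}$ and $\text{III}$ are bounded exactly as in Lemma~\ref{lemma.R}, giving $|\text{I}|\leq \lambda_n(d+1)^{1/2}B/4$ under the hypothesis $\|W^n\|_\infty\leq a\lambda_n/(8-4a)$, and $\text{III}\geq -\lambda_n(d+1)^{1/2}B$. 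The quadratic term is where the Gaussian case simplifies: since $D'''\equiv 0$, the Taylor expansion of the negative Hessian has no cubic correction, and $\text{II}=u^\T Q^* u$ \emph{exactly}. Restricting to $u_\Delta=0$ and using Assumption~\ref{dep}, $\text{II}/2\geq \Lambda_1 B^2/2$. Combining, $F(u)\geq \Lambda_1 B^2/2 - (5/4)\lambda_n(d+1)^{1/2}B$, which is strictly positive as soon as $B>5\lambda_n(d+1)^{1/2}/(2\Lambda_1)$. Choosing, say, $B=6(d+1)^{1/2}\lambda_n/\Lambda_1$ as in Lemma~\ref{lemma.R} yields $\|\hat\theta_N-\theta^*_N\|_2\leq B < 10(d+1)^{1/2}\lambda_n/\Lambda_1$.

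There is no real obstacle here; the whole content of the lemma is the observation that the Gaussian Hessian is parameter-free. The only thing to be careful about is to confirm that none of the upper-bound conditions on $\lambda_n$ from Lemma~\ref{lemma.R} are used, since those entered the original proof exclusively to control the cubic correction in $\text{II}$ and to bound $\|R^n\|_\infty$ via Hessian differences; both concerns are vacuous in the Gaussian setting, which is precisely why Corollary~\ref{col1} can drop Assumption~\ref{D} and the upper bounds on $\lambda_n$ beyond $2(2-a)\delta_2 M/a$.
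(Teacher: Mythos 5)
Your argument is correct and follows essentially the same route as the paper's own proof: the paper likewise observes that $D'''\equiv 0$ kills the cubic correction in the term $\text{II}$ (so that $\text{II}\geq \Lambda_1 B^2$ with no condition on $\lambda_n$) and that the constant Hessian makes $R^n$ vanish identically, then runs the identical $F(u)$ convexity argument. The only cosmetic difference is your choice of radius $B=6(d+1)^{1/2}\lambda_n/\Lambda_1$ versus the paper's $B=5(d+1)^{1/2}\lambda_n/\Lambda_1$; both lie strictly below the stated bound of $10(d+1)^{1/2}\lambda_n/\Lambda_1$, so the conclusion is unaffected.
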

\begin{proof} To prove this lemma, we go through the argument in Section~\ref{lemmaR.proof}. But for $\text{II}$ we note that
\begin{equation*}
\begin{aligned}
\text{II} \geq &  \min_{u: \|u\|_2=B, u_{\Delta}=0}  \{ - u^T \nabla^2 \ell(\theta^*+vu_{-p}, \alpha_{1p}^*+vu_p; X) u \} \\
\geq & B^2 \Lambda_{\min} (-Q^*_{\Delta^c \Delta^c} )-
\max_{v \in [0,1]}\max_{u: \|u\|_2=B , u_{\Delta}=0} {u}^\T \left\{ \frac{1}{n}\sum\limits_{i=1}^{n} D^{'''}(\tilde{\eta}^{(i)}) (v{u}^\T\utilde{x}^{(i)}_{0} ) \utilde{x}^{(i)}_{0} ({\utilde{x}^{(i)}_{0})}^\T \right\} u \\
\geq & \Lambda_{1} B^2 -0,
\end{aligned}
\end{equation*} 
since $D^{'''}(\tilde{\eta}^{(i)})=0$ for a Gaussian distribution. Therefore, 
\begin{equation*}
F(u) \geq   -\frac{\lambda_n}{4} (d+1)^{1/2} B +\frac{1}{2}\Lambda_{1} B^2-\lambda_n (d+1)^{1/2} B.
\end{equation*}
So, $F(u)> 0$ for $B > 5\lambda_n (d+1)^{1/2} /(2\Lambda_{1})$. We can hence let $B= 5(d+1)^{1/2}\lambda_n/\Lambda_{1}$ to get
\begin{equation*}
 \| \hat{\theta}_N-\theta^*_N\|_2 \leq \| \hat{u}\|_2 \leq B= \frac{5}{\Lambda_{1}}(d+1)^{1/2}\lambda_n.
\end{equation*}
Thus, $\|\hat{\theta}_N - \theta^{*}_N \|_2<10 \lambda_n (d+1)^{1/2}/\Lambda_1$.  And $\|R^n\|_\infty =0$ trivially as $D^{''}(\bar{\eta}^{(i)} ) - D^{''} (\eta^{*})=0$ for a Gaussian distribution. \end{proof}

With Lemma~\ref{lemma.R2}, we can then verify \eqref{strictdf.result} and \eqref{signcrt.result} as in Section~\ref{proof.thm}. Finally, we drop the requirement of $\xi_1$ in the condition of Lemma~\ref{lemma.R2}, so the probability of $\hat{N}=N$ is
\begin{equation*}
\text{pr}\left\{ \left(\|W^n\|_\infty \leq \frac{a}{2-a}\frac{\lambda_n}{4}\right)\cap \xi_2\right\}\geq 1-\exp(-c_2 \delta_2^2 n)-\exp(-c_3 \delta_3 n),
\end{equation*}
where $c_2$ and $c_3$ are constants from Proposition~\ref{prop.e2}  and Lemma~\ref{lemma.W}. 
\end{proof}

\subsection{Additional Details of Data-Generation Procedure} \label{generation_detail}

Here we provide additional details of  the data-generation procedure described in Section~\ref{generate}. In particular, we describe the approach used to guarantee that   the conditions listed in Table~\ref{tab1}  for strong compatibility of the conditional distributions are satisfied.

Recall from Table~\ref{tab1} that in order for strong compatibility to hold,  the matrix $\Theta_{JJ}$ in \eqref{Thetajj} that contains the edge potentials between the Gaussian nodes must be negative definite. If  $\Theta_{JJ}$ generated as described in Section~\ref{generate}   is not negative definite, then we define a matrix $T_{JJ}$ as $$T_{JJ}=   -{ \Theta}_{JJ}+ \left\{ \Lambda_{\min}({\Theta}_{JJ}) - 0.1 \right\} {  I} ,$$ where $\Lambda_{\min}({\Theta}_{JJ})$ denotes the minimum eigenvalue of $\Theta_{JJ}$. Thus, $T_{JJ}$ is guaranteed to be negative definite, as all its eigenvalues are no larger than $-0.1$. We then standardize $T_{JJ}$ so that its diagonal elements equal $-1$,
 $$\tilde{T}_{JJ}=\text{diag}(|T_{11}|^{-1/2}, \ldots,|T_{mm}|^{-1/2}  )\   T_{JJ}\ \text{diag}(|T_{11}|^{-1/2}, \ldots,|T_{mm}|^{-1/2}  ).$$
Finally, we replace $\Theta_{JJ}$ with $\tilde{T}_{JJ}$.

Table~\ref{tab1}   also indicates that for strong compatibility to hold, the edge potential between two Poisson nodes must be negative. Therefore, after generating edge potentials as described in Section~\ref{generate}, we replace
 $\theta_{st}$ with $-|\theta_{st}|$ where $x_s$ and $x_t$ are Poisson nodes.

\end{document}